\providecommand{\tikz@nonactivesemicolon}{;}
\providecommand{\tikz@nonactivecolon}{:}
\providecommand{\tikz@deactivatethings}{}%
\providecommand{\tikz@deactivatthings}{\tikz@deactivatethings}%
\newlist{inlineenum}{enumerate*}{1} 
\setlist*[inlineenum]{mode=unboxed,label=\alph*} 
\spnewtheorem{note}{Observation}{\bfseries}{\rmfamily}
\spnewtheorem{claim}{Claim}{\bfseries}{\itshape}
\newcommand{\commentout}[1]{}
\newcommand{\class}[1]{\mathbb{#1}}
\newcommand{\caclassshort}[3]{\mathbb{CA}_{\scriptstyle{#2,#3}}}
\newcommand{\caone}[2]{\mathbb{CA}_{\scriptstyle{#1,#2}}}  
\newcommand{\caclassglider}[2]{\mathbb{CA}^{\substack{\scriptscriptstyle\textup{Glider}}}_{\scriptstyle {#2,#1}}}
\newcommand{\minus}{\scalebox{0.65}[1]{$-$}}
\newcommand{\miniminus}{\scalebox{0.35}[1]{$-$}}
\newcommand{\mminus}{\scalebox{0.75}[0.75]{$-$}}
\newcommand{\FN}{\textsl{FN}}
\newcommand{\FNs}{\textsl{FNs}}
\newcommand{\term}[1]{\text{\emph{#1}}}
\newcommand{\mathterm}[1]{\emph{$#1$}}
\newcommand{\orbitc}[1]{\mathcal{O}(#1)}
\newcommand{\mybot}{\bot}
\DeclareTextFontCommand{\word}{\emph}
\DeclareTextFontCommand{\proceduresfont}{\normalfont\scshape}
\newcommand{\mainproc}{\proceduresfont{GtoR}}
\newcommand{\g}[2] {{\mathsf{g}}{(#1,#2)}}
\newcommand{\plaing}{{\mathsf{g}}}
\newcommand{\gfig}[2] {{\mathsf{g}}_{(#1,#2)}}
\newcommand{\Greg}{\mathrm{\mathsf{\boldsymbol{G}}}}
\newcommand{\G}{\text{\small{${\Greg}$}}}
\newcommand{\procmaxx}{\text{\textsc{Max}}}
\newcommand{\extend}{\text{\textsc{Extend}}}
\newcommand{\glider}[1]{\mathsf{g}_{\scriptstyle#1}}
\newcommand{\gglider}{\mathsf{g}}
\DeclareRobustCommand{\fontitoneC}[1]{%
  \text{\normalfont
    \@ifpackageloaded{lmodern}{\bfseries\scshape #1}{\bfseries\MakeUppercase{#1}}%
  }%
}
\newcommand{\SigmaWindow}{\ensuremath{\Sigma^{\scriptstyle 2r{+}1}}}
\newcommand{\SigmaZWindow}{\ensuremath{\Sigma^{\scriptstyle{\mathbb{Z}}}}}
\newcommand{\crunss}{\text{\normalfont \texttt{{Cruns}}}}
\newcommand{\crunso}[1]{\crunss(#1)}
\newcommand{\itempos}[2]{#1_{\scriptstyle #2}}
\newcommand{\itemneg}[2]{#1_{\scriptstyle {-#2}}}
\newcommand{\sig}[1]{{x}_{\scriptstyle{#1}}}
\newcommand{\apos}[1]{\itempos{a}{#1}}
\newcommand{\bpos}[1]{\itempos{b}{#1}}
\newcommand{\aneg}[1]{\itemneg{a}{#1}}
\newcommand{\mplus}{\scalebox{0.5}[0.5]{\boldmath$+$}}
\newcommand{\miniplus}{\raisebox{0.2ex}{\mplus}}
\newcommand{\wword}{word}
\newcommand{\macrolang}{language}
\newcommand{\macrolangs}{languages}
\newcommand{\fLang}{finitary \macrolang}
\newcommand{\infLang}{infinitary \macrolang}
\newcommand{\infLangs}{infinitary \macrolangs}
\newcommand{\biInfLang}{bi-\infLang}
\newcommand{\caclassshortREG}[4]{\mathbb{CA}^{\substack{\texttt{\scriptsize{#4}}}}_{#2,#3}}
\newcommand{\whitebox}{\tikz[baseline=0ex]\filldraw[fill=white!30,draw=black] (0,0) rectangle (1.5ex,1.5ex);}
\newcommand{\smallwwhitebox}{\tikz[baseline=0ex]\filldraw[fill=white!30,draw=white] (0,0) rectangle (1ex,1ex);}
\newcommand{\abox}{\tikz[baseline=0ex]\filldraw[fill=gray!30,draw=black] (0,0) rectangle (1.5ex,1.5ex);} 
\newcommand{\bbox}{\tikz[baseline=0ex]\filldraw[fill=darkgray!50,draw=black] (0,0) rectangle (1.5ex,1.5ex);} 
\newcommand{\smallbbox}{\tikz[baseline=0ex]\filldraw[fill=darkgray!50,draw=black] (0,0) rectangle (1ex,1ex);} 
\newcommand{\cbox}{\tikz[baseline=0ex]{\filldraw[fill=darkgray!60,draw=black] (0,0) rectangle (1.5ex,1.5ex);
\filldraw[pattern=north west lines, pattern color=white] (0,0) rectangle (1.5ex,1.5ex);

\filldraw[pattern=north east lines, pattern color=white] (0,0) rectangle (1.5ex,1.5ex);}
}
\newcommand{\smallcbox}{\tikz[baseline=0ex]{\filldraw[fill=darkgray!60,draw=black] (0,0) rectangle (1ex,1ex);
\filldraw[pattern=north west lines, pattern color=white] (0,0) rectangle (1ex,1ex);

\filldraw[pattern=north east lines, pattern color=white] (0,0) rectangle (1ex,1ex);}
}
\newcommand{\fillc}[3]{
\fill[fill=darkgray!60] (#1,#2) rectangle ++(#3, 1);
\fill[pattern=north east lines, pattern color=white](#1,#2) rectangle ++(#3, 1);
\fill[pattern=north west lines, pattern color=white](#1,#2) rectangle ++(#3, 1);
}
\newcommand{\pad}{{\mathsf{pad}}_{\bot}} 
\newcommand{\ppad}{{\mathsf{pad}}_{\bot}} 
\newcommand{\termsfont}[1]{\ensuremath{\textrm{#1}}}
\newcommand{\supp}{\termsfont{supp}}
\newcommand{\intervalInt}{\textrm{int}}
\newcommand{\interval}[1]{\textrm{int}(#1)}
\newcommand{\width}[1]{\textrm{width}(#1)}
\newcommand{\dom}{\textrm{\textbf{dom}}}
\newcommand{\sign}{\textrm{\textbf{sign}}}
\newcommand{\LocAt}[2]{\mathcal{N}^{\substack{#2}}_{#1}}
\newcommand{\whiteArrowback}[4]{
\foreach \dx/\dy in {-0.01/0, 0.01/0, 0/-0.01, 0/0.01, -0.02/0, 0.02/0, 0/-0.02, 0/0.02} {
           \draw[white, very thick, ->] ($(#1,#2)+(\dx,\dy)$) -- ($(#3,#4)+(\dx,\dy)$);
        }
        }
\newcommand{\blackonwhite}[3]{
\foreach \dx/\dy in {-0.01/0, 0.01/0, 0/-0.01, 0/0.01} {
          \node at ($(#1,#2)+(\dx,\dy)$) {\Huge \textcolor{white}{$\boldsymbol{#3}$}};
        }
        \node at (#1,#2) {\Huge \textcolor{black}{$\boldsymbol{#3}$}};
        }
\newcommand{\blackonwhiteScaled}[4]{
\foreach \dx/\dy in {-0.01/0, 0.01/0, 0/-0.01, 0.01/0, -0.015/0, 0.015/0, 0/-0.015,0/0.015, 0/0.02,-0.02/0, 0.02/0, 0/-0.02, 0/0.02, -0.01/-0.01,0.01/0.01, -0.01/-0.01, 0.01/0.01, 0/0.03,-0.03/0, 0.03/0, 0/-0.03, 0/0.03} {
          \node at ($(#1,#2)+(\dx,\dy)$) {#4 \textcolor{white}{$\boldsymbol{#3}$}};
        }
        \node at (#1,#2) {#4 \textcolor{black}{$\boldsymbol{#3}$}};
        }
\newcommand{\blackonwhiteScaledOne}[5]{
\foreach \dx/\dy in {-0.01/0, 0.01/0, 0/-0.01, 0.01/0, -0.015/0, 0.015/0, 0/-0.015,0/0.015, 0/0.02,-0.02/0, 0.02/0, 0/-0.02, 0/0.02, -0.01/-0.01,0.01/0.01, -0.01/-0.01, 0.01/0.01} {
          \node at ($(#1,#2)+(\dx,\dy)$) {#4 \textcolor{white}{$\boldsymbol{#5}$}};
        }
        \node at (#1,#2) {#4 \textcolor{black}{$\boldsymbol{#3}$}};
        }
\newcommand{\blackonwhiteArrow}[6]{
\foreach \dx/\dy in {-0.01/0, 0.01/0, 0/-0.01, 0/0.02,-0.02/0, 0.02/0, 0/-0.02, 0/0.02} {
          \draw[#5 ,white!60,->]
        ($(#1,#2)+(\dx,\dy)$) -- node[below =10pt, yshift=4pt, xshift=10pt] {~} ($(#3,#4)+(\dx,\dy)$);
        }
        \draw[#5 ,#6,->]
        (#1, #2) -- node[below =10pt, yshift=4pt, xshift=10pt] {~} (#3, #4);
        }
\newcommand{\overlapuline}[3][0.2ex]{
  \tikz[baseline]{
    \node[inner sep=0, outer sep=0, anchor=base west] (t) {#3};
    \draw[line width=#2] ([yshift=-#1]t.base west) -- ([yshift=-#1]t.base east);
  }%
}
\let\c@lemma=\c@theorem
\let\c@corollary=\c@theorem
\let\c@proposition=\c@theorem
\let\c@claim=\c@theorem
\let\c@definition=\c@theorem
\let\c@example=\c@theorem
\let\c@remark=\c@theorem
\let\c@note=\c@theorem
\title{Atomic Gliders and Cellular Automata as Language Generators}
\titlerunning{Atomic Gliders and CA as Language Generators}
\author{\normalsize{ 
Dana Fisman\orcidlink{0000-0002-6015-4170}\inst{1} \and 
Noa Izsak\orcidlink{0009-0004-1333-2490}\inst{2,1}\thanks{\Letter~noa.izsak@cispa.de}
}}
\authorrunning{D. Fisman and N. Izsak}
\institute{Ben Gurion University, Beer-Sheva, Israel 
\and
CISPA Helmholtz Center for Information Security, Saarbr\"ucken, Germany}
\begin{document}
\vspace{-1mm}

\maketitle

\vspace{-6mm}

\begin{abstract}
Cellular automata (CA) are well-studied models of decentralized parallel computation, known for their ability to exhibit complex global behavior from simple local rules. 
While their dynamics have been widely explored through simulations, a formal treatment of CA as genuine \emph{language generators} remains underdeveloped.
We formalize CA-expressible languages as sets of finite words obtained by projecting the non-quiescent segments of
configurations reachable by one-dimensional, deterministic, synchronous CA over bi-infinite grids.
These languages are defined with respect to sets of initial configurations specified by a regular language as in \emph{regular model checking}.
To capture structured dynamics, we propose a \emph{glider-based generative semantics} for CA.
Inspired by the classical notion of gliders, we define a glider as a one-cell entity carrying a symbol in a certain velocity under well defined interaction semantics. 
We show that despite the regularity of the initial configurations and the locality of the transition rules, the resulting languages can exhibit non-regular and even non-context-free structure.
This positions regular-initialized CA languages as a surprisingly rich computational model, with potential applications in the formal analysis of linearly ordered MAS.
\keywords{Cellular automata \and Glider-based systems \and Symbolic dynamics \and Regular Model checking \and Beyond regularity}
\end{abstract}

\section{Introduction}
Multi-agent systems (MAS) arranged in a linear topology, consisting of identical agents (or processes) that operate in parallel, arise naturally in many applications. 
In these systems, agents are typically organized in a line or a ring. 
The agents evolve in synchronous rounds; at each step, all agents update their state simultaneously, each according to the state of its finite neighborhood (including itself).
Analyzing and verifying such systems is a central challenge in formal methods. 
For a single finite-state transition system, classical verification techniques apply directly since the system has an explicit finite-state space. 
\looseness=-1
In contrast, in a multi-agent system, all agents implement the same protocol, $P$.

Executing $P$ with $n$ indistinguishable agents yields a finite system, $P^n$. The resulting MAS induces an infinite family $\{ P^n\}_{n\in\mathbb{N}}$, known as a \emph{parametrized system}, as the number of agents is unbounded. 
Research on parametrized systems includes works on verification~\cite{AbdullaBJN99,EmersonK00,Bloem2016,AbdullaST18,JaberJW0S20} (to name a few), synthesis~\cite{JacobsB14,khalimov2013towards,lazic2018synthesis} and learning~\cite{fisman2024learning,fribourg1997reachability,ma2019i4}.

In their seminal work~\cite{KestenMMPS97}, Kesten et al. proposed the use of \emph{rich assertional languages} to symbolically capture the unbounded nature of the system families induced by such systems. 
This idea gave rise to the method now known as \term{regular model checking} (RMC), which has since developed into a well-established verification technique~\cite{Bouajjani2000,Abdulla04,Bouajjani04,fribourg1997reachability,wolper1998verifying,boigelot1999symbolic,jonsson2000transitive}.

In RMC, \emph{sets of configurations} are represented by regular languages over a finite alphabet, while \emph{transition relations} are expressed as regular relations, typically implemented by finite-state transducers. 
Using regular languages, one can succinctly describe the set of initial configurations for systems of all sizes. 
The alphabet specifies the local state-space of a process, and a word of length $n$ encodes the states of $n$ processes arranged along a line or a ring, with position $i$ corresponding to process $i$. 
For example, the regular language $10^*$ compactly denotes infinitely many systems, one for each $n\,{\in}\,\mathbb{N}$, where the leftmost process is in state $1$ (signifying that it holds a token), while all processes to its right are in state $0$ (not holding the token).

In each synchronous step, \emph{all} processes update their states simultaneously, based on their current state and the states of finitely many neighbors (e.g., their immediate left and right). 
In the case of the \emph{token-passing protocol}~\cite{le1977distributed,emerson1995reasoning}, the relation specifies that if a process does not currently hold a token but its left neighbor does, then in the next step the process acquires the token while the neighbor relinquishes it. 
The RMC framework has proven to be highly effective for automatic verification of \emph{parameterized systems}, supporting the analysis of a wide range of protocols~\cite{neider2013regular,Abdulla04,Bouajjani2000,abdulla2012regular,esparza2022regular}.

The RMC framework assumes that: the initial configurations, the transition relation, and the set of bad states are all described by regular languages. 
Only limited work has explored the use of non-regular specifications~\cite{FismanP01,FismanKL08}. 
Moreover, there is often an implicit assumption that starting from a regular initial configuration and repeatedly applying a regular transition relation necessarily yield a regular language. 
As the following simple \emph{two-captains protocol} illustrates, this assumption does not always hold.

Imagine a group of children forming two opposing teams during recess.
At first, there are only two captains: one for \emph{Team-A} (abbreviated as \textcolor{blue}{\texttt{a}}) on the left and one for \emph{Team-B} (\textcolor{red}{\texttt{b}}) on the right:
\scalebox{0.25}{
        \centering
    \begin{tikzpicture}
    
    \draw[step=1cm, black, thin] (3, 0) grid (5, 1);

    \fill[blue!40] (3, 0) rectangle ++(1, 1);
    \fill[red!40] (4, 0) rectangle ++(1, 1);
    \fill[pattern=north east lines, pattern color=blue] (3, 0) rectangle ++(1, 1);
    \fill[pattern=north west lines, pattern color=red] (4, 0) rectangle ++(1, 1);
     
     \draw[step=1cm, black, thick]  (3, 0) grid (5, 1);
   \end{tikzpicture}
}.
In each round, both captains simultaneously recruit a new player to join their side.
The captain of Team-A extends the line to the left by adding an \textcolor{blue}{\texttt{a}}, while the captain of Team-B  extends the line to the right by adding a \textcolor{red}{\texttt{b}}. That is: 
\scalebox{0.25}{
        \centering
    \begin{tikzpicture}
    
    \draw[step=1cm, black, thin]  (2, 0) grid (6, 1);

    \fill[blue!40] (2, 0) rectangle ++(2, 1);
    \fill[red!40] (4, 0) rectangle ++(2, 1);
    \fill[pattern=north east lines, pattern color=blue] (3, 0) rectangle ++(1, 1);
    \fill[pattern=north west lines, pattern color=red] (4, 0) rectangle ++(1, 1);
    \draw[black, ultra thick, ->] (3.5,0.5) -- (2.5,0.5);
    \draw[black, ultra thick, ->] (4.5,0.5) -- (5.5,0.5);
     
     \draw[step=1cm, black, thick] (2, 0) grid (6, 1);
   \end{tikzpicture}
}.
Which in turn becomes: \scalebox{0.25}{
        \centering
    \begin{tikzpicture}
    
    \draw[step=1cm, black, thin] (1, 0) grid (7, 1);

    \fill[blue!40] (1, 0) rectangle ++(3, 1);
    \fill[red!40] (4, 0) rectangle ++(3, 1);
    \fill[pattern=north east lines, pattern color=blue] (3, 0) rectangle ++(1, 1);
    \fill[pattern=north west lines, pattern color=red] (4, 0) rectangle ++(1, 1);
    \draw[black, ultra thick, ->] (2.5,0.5) -- (1.5,0.5);
    \draw[black, ultra thick, ->] (5.5,0.5) -- (6.5,0.5);
     
     \draw[step=1cm, black, thick] (1, 0) grid (7, 1);
   \end{tikzpicture}
}.
As rounds proceed, the two teams grow in perfect synchrony.
\begin{center}
\vspace{-1.25mm}
    \textcolor{blue}{\texttt{a}}\textcolor{red}{\texttt{b}} $\mapsto$ \textcolor{blue}{\texttt{aa}}\textcolor{red}{\texttt{bb}} $\mapsto$ \textcolor{blue}{\texttt{aaa}}\textcolor{red}{\texttt{bbb}} $\mapsto \  \cdots \ \mapsto \, \smash{\underbrace{\textcolor{blue}{\texttt{a}\cdots \texttt{a}}}_{\raisebox{0.45em}{$\scriptstyle n$}}}\;\smash{\underbrace{\textcolor{red}{\texttt{b} \cdots \texttt{b}}}_{\raisebox{0.6em}{$\scriptstyle n$}}} \, \mapsto \ \cdots$
\end{center}

\begin{wrapfigure}{r}{0.315\textwidth}
\vspace{-6mm}
    \begin{tikzpicture}[baseline={(current bounding box.center)}, scale=0.3225, every node/.style={scale=0.3225}]
     \draw[step=1cm, gray, thin] (6, 1) grid (14, -3);
        \foreach \i in {1,2,3,4} {
            \node at (5.5, 1.5-\i) {\huge ${\ldots}$};
            \node at (14.5,  1.5-\i) {\huge ${\ldots}$};
        }
        \fill[blue!40] (10, 0) rectangle ++(-1, 1);
        \fill[red!40] (11, 0) rectangle ++(-1, 1);
        \fill[blue!40] (10, -1) rectangle ++(-2, 1);
        \fill[red!40] (10, -1) rectangle ++(2, 1);
        \fill[blue!40] (10, -2) rectangle ++(-3, 1);
        \fill[red!40] (10, -2) rectangle ++(3, 1);
        \fill[blue!40] (10, -3) rectangle ++(-4, 1);
        \fill[red!40] (10, -3) rectangle ++(4, 1);
       
       \draw[step=1cm, black,semithick] (6, 1) grid (14, -3);
       \draw[step=1cm, black, thick, ->] (4.5,1) -> (4.5,-3); 
       \whiteArrowback{9.5}{-0.5}{8.5}{-0.5}
        \draw[blue, very thick, ->] (9.5,-0.5) -- (8.5,-0.5);
        \whiteArrowback{10.5}{-0.5}{11.5}{-0.5}
        \draw[red, very thick, ->] (10.5,-0.5) -- (11.5,-0.5);
        \whiteArrowback{9.5}{-1.5}{7.5}{-1.5}
        \draw[blue, very thick, ->] (9.5,-1.5) -- (7.5,-1.5);
        \whiteArrowback{10.5}{-1.5}{12.5}{-1.5}
        \draw[red, very thick, ->] (10.5,-1.5) -- (12.5,-1.5);
        \whiteArrowback{9.5}{-2.5}{6.5}{-2.5}
        \draw[blue, very thick, ->] (9.5,-2.5) -- (6.5,-2.5);
        \whiteArrowback{10.5}{-2.5}{13.5}{-2.5}
        \draw[red, very thick, ->] (10.5,-2.5) -- (13.5,-2.5);
        \node[rotate=90] at (3.85, -1) {\Huge {Time}};
        \end{tikzpicture}
        \vspace{-3mm}
\end{wrapfigure}

\vspace{1.25mm}

\noindent
After $m$ rounds, the playground is exactly \textcolor{blue}{\texttt{a}$^m$}\textcolor{red}{\texttt{b}$^m$}.

This simple example illustrates how a trivial initial configuration (of two captains) evolves, under a uniform growth rule, into the classical non-regular language $\{ a^n b^n \,|\, n\,{\in}\,\mathbb{N}\}$.

In this paper, we study the expressiveness and limitations of systems that start from a regular set of initial configurations and evolve according to a deterministic transition relation, determined by the states of processes within a fixed-radius neighborhood $r$. 
Such a transition can be viewed as a function from 
$\SigmaWindow$ to $\Sigma$, and can naturally be modeled by a regular transducer.
 We consider our evolution of the processes over a bi-infinite tape, that is, a sequence indexed by the integers $\mathbb{Z}$, rather than a one-sided infinite tape indexed by the naturals $\mathbb{N}$. 
 This bi-infinite representation is often preferable, as it facilities reasoning about systems whose evolution extends in both directions.
 The two-captains protocol already illustrates why this perspective is natural.

This bi-infinite view, together with the local update rule (each process updating its state based on its own state and the states of its neighbors within radius $r$), aligns closely with the classical framework of one-dimensional cellular automata (CA). 
The correspondence is natural; as in both settings, a collection of homogeneous processes (or cells, in CA terminology) evolves synchronously under a uniform local transition rule.

Cellular automata (CA) have long served as canonical models of distributed computation and emergent behavior~\cite{neumann1966theory,Wolfram83,BANDINI2001539,Kutrib2018}.
First, they were popularized by Conway~\cite{conway1970game} and later systemically explored by Wolfram~\cite{wolfram1984computation}.
CA have become emblematic of how simple local rules can give rise to rich, complex, and sometimes unpredictable global behavior.
In this tradition, CA are studied not only as mathematical abstractions of parallel systems, but also as laboratories for exploring the emergence of complexity from homogeneous local behaviors.
Their appeal stems precisely from this dual role as both minimalist models of computation and vivid illustrations of collective phenomena.

\looseness=-1
In parallel with these explorations, a line of work connected CA directly to formal language theory.
Smith's seminal work~\cite{Alvy1970} established a correspondence between one-dimensional CA and linearly bounded automata, situating CA within the Chomsky hierarchy and clarifying their recognition capabilities.
Subsequent contributions by Hurd \cite{Hurd1987FormalLC} and Nordahl~\cite{Nordahl1989FormalLA}, refined this perspective, showing that a one-dimensional CA defined over finite grids \emph{preserves language complexity under evolution}.
These studies demonstrated that CA, when viewed as recognizers, maintain well-defined connections to classical automata-theoretic complexity.
However, as in much of the CA literature, they treated the grid as bounded and focused on acceptance, stabilization, or halting conditions. 
The potential of CA as \emph{language generators} on an unbounded grid and an unbounded time evolution has remained comparatively underexplored.
Specifically, the evolution of a CA from a fixed initial configuration over time naturally yields sequences of global states, which may be viewed as the language of reachable states.

In this work, we investigate the \emph{expressive power} of the language generated by a CA starting from regular initial configurations. 
Specifically, we ask which families of words are \emph{generated} by a CA as it evolves on an unbounded grid? 
We refer to this perspective as the \emph{generative view} of CA languages.
This shifts the role of CA from the common focus on \emph{simulating} behaviors or \emph{recognizing languages}~\cite{KARI20053,KutribWorsch2020,SutnerKlaus2009Mcoc} to \emph{generating} languages.
A viewpoint that, while natural from a formal-methods perspective, has remained comparatively underexplored in the CA literature.
To avoid misunderstanding, we emphasize that prior work on CA, although seldom using the term \emph{generation}, has been considered under specific constraints, such as  
bounded evolutions (in time or space)~\cite{BygJorgensen2008,Gershenson2010,LuWang2011}, and analysis of \term{limit set}, which requires infinite recurrence~\cite{wolfram1984computation,mazurkiewicz1977traces,KutribMartin2021SGbC,guillon2008cellular}.
These contributions highlight important aspects of reachability in restricted settings.
In contrast, our framework adopts a \emph{symbolic} view, admits arbitrary alphabet and radii, no evolution bounds are imposed, and recurrence is not required. Thus, this allows us to position CA as genuine \emph{language generator}.
At the same time, the generative view offers a symbolic framework for analyzing the global dynamics of such systems and opens the door to future applications of formal verification in this domain.

The \emph{two-captains protocol} also sets the stage for the central notion of this paper --- \emph{glider mechanisms}. 
In the classical literature on CA, the term glider refers to a recurring multi-cell pattern that travels the grid, most famously in Conway's Game of Life~\cite{conway1970game} and later in Cook's universality construction~\cite{cook2004universality}.
Such gliders typically cycle through a sequence of shapes while advancing in the grid, returning to their initial form after several steps, a kind of ``wheel of life'' whose periodicity underlies their evolution.

Our usage is inspired by these traditions, but adopts a more structured perspective. 
Here, a glider is not a composite pattern but a \emph{single cell}, endowed with the semantics of the automaton's local rules: its symbol, its velocity, i.e., speed and direction (left or right), and its interaction behavior.
In this sense, the cycle sustaining a Conway-style glider is compressed to size one: the state of the glider remains fixed as it propagates. 
On the other hand, since our automata may have arbitrary radius, a glider's velocity need not be limited to one cell per step, but may reflect any displacement determined by the local rule.

As in the classical setting, our glider persists indefinitely in the absence of interaction, but may be terminated when colliding with other gliders. 
The key difference is that persistence, velocity, and interaction behavior are all dictated directly by the CA local rule function, rather than being emergent from multi-cell dynamics.
This makes gliders in our setting \emph{symbolic building blocks}; their semantics are modular and explicit, {even when the underlying CA has a large interaction radii}. 
Seen through this lens, the captains of the \emph{two-captain protocol} are instances of gliders: we have two gliders of state \textcolor{blue}{\texttt{a}} one of velocity of $-1$ and the second of velocity $0$, that way, in every iteration, Team-A ``expands'' one step to the left (the $-1$ velocity) while maintaining all existing locations (the $0$ velocity). 
Similarly, for state \textcolor{red}{\texttt{b}} we have two gliders as well, one of velocity $0$ and the second of velocity $+1$, which allows the expansion of Team-B towards the right.  
These structured gliders provide the building blocks for the generative semantics developed in the reminder of the paper, where we show how these gliders yield a modular explanation of emergent pattern and enable constructive proofs of expressive power in a system where population growth is essential.
\vspace{-1.5mm}

\paragraph{Expressiveness Results.}
In exploring the generative power of CA, we set out to find languages that push the expressive boundary. We began with a simple non-regular example, showing that $a^n b^n$ arises naturally for local growth rules. 
From there, we traveled alongside variations of the one-counter automata languages, a class that strictly contains the regular languages and reaches into the context-free. 
Here, we construct CA generating $a^n b a^n$, $a^n b^m$ for $n\geq m$, demonstrating how these patterns can be captured in one-dimensional growth. Pushing further, we exhibited even context-sensitive behavior; as language $a^n b^n c^n$ can be generated in this framework. 
More generally, we show that languages of the form \begin{center}  
$L = \{w_1^{e_1(n)}{\ldots} w_m^{e_m(n)} \mid n\,{\in}\,\mathbb{N}\}$
\end{center}
are CA-expressible (and, in particular, glider-expressible), for any $m\in\mathbb{N}$,  any words $w_1\,{\ldots}\, w_m$ , and any positive linear expressions $e_i(n)$. 
We further note that glider-expressible languages form a strict subclass of CA-expressible languages.

\vspace{-0.75mm}

\paragraph{Contributions.}Beyond these expressiveness results, the paper makes the following two contributions.
\begin{itemize}[topsep=1.5mm,itemsep=1.5mm]
        \item  \emph{Generative framework.} We introduce a formulation of cellular automata as genuine \emph{language generators}. 
    \item \emph{Glider mechanisms.} We define gliders as atomic carriers specified by a symbol, velocity, and interaction semantics derived from the CA rule. 
\end{itemize}
    Combining the two, we show how gliders provides modular explanations for CA-generated languages and enable constructive proofs of expressive power.

\looseness=-1
To the best of our knowledge, this is the first formal analysis of CA-expressible languages over unbounded, bi-infinite grids with arbitrary finite-size alphabets and radii. In contrast to earlier work on bounded grids or finite-time evolutions \cite{Hurd1987FormalLC,Nordahl1989FormalLA}, which proved that language complexity is preserved, we show that even regular initializations (which sit at the bottom of the Chomsky hierarchy) can lead to languages of strictly higher complexity. This work aims to highlight an underexplored connection between local dynamics and global formal properties.

\vspace{-0.75mm}

\paragraph{Paper Structure.} 
The remainder of the paper is organized as follows. \Cref{sec:prelim} provides preliminaries. \Cref{sec:expressivness} introduces the cellular automata (CA) framework and its generative point of view. \Cref{subsec:onecounter} provides a walkthrough one counter automata languages, which leads us to \Cref{subsec:gliders} where we define gliders mechanism and establishes the strict separation between CA-expressible and glider-expressible. \Cref{subsec:expressivness} develops our expressiveness results from illustrative cases to general families. Then we conclude our discussion at \Cref{sec:discussion}.

\section{Preliminaries}\label{sec:prelim}
We begin by fixing key notational conventions and reviewing background from formal language theory. We then formally define the cellular automaton (CA) model considered in this work. These conventions are provided to avoid ambiguity in the technical development that follows.

\vspace{-1mm}

\paragraph{Conventions.}
Let $\mathbb{N}$ be the set of positive natural numbers, and let $\mathbb{N}_0\,{=}\,\mathbb{N}\,{\cup}\, \{0\}$ be the set of non-negative integers.
Let $\mathbb{Z}$ denote the set of all integers, with $\mathbb{Z}^{+}{=}\,\mathbb{N}$ and $\mathbb{Z}^{-}$ representing the positive and negative integers, resp.
For any $n,m\,{\in}\,\mathbb{Z}$ where $n\,{\leq}\,m$, we write $[n,m]$ to denote the interval $\{n,n{+}1,{...}, m\}$, and we abbreviate $[1,m]$ as $[m]$.
Given a domain $D \,{=}\, [x, y] \subseteq \mathbb{Z}$ we use $D{+s}$ for the \term{domain shift} of $D$ by $s\in\mathbb{Z}$, i.e., for the interval $[x{+s}, y{+s}]$.
When convenient, we use the notation $(-\infty, k]$ and $[k,\infty)$ for $k\in \mathbb{Z}$ in the classical manner, e.g., $\mathbb{N}$ can be seen as $[1,\infty)$.

\paragraph{Words.}
An \term{alphabet} $\Sigma$ is a finite, non-empty set of elements called \emph{letters}.
A finite \term{\wword} $w$ over $\Sigma$ is an assignment of $\Sigma$ to consecutive locations. It could be considered as a function $w \,{:}\, [n] \to \Sigma$ for some $n\in \mathbb{N}_0$.
That is, $w\,{=}\,\sig{1}{\ldots} \sig{n}$ where ${\sig{i}\,{=}\,w[i]}$ is from $\Sigma$. Its \term{length} is denoted by $|w|\,{=}\,n$.
{The empty word is denoted by $\varepsilon$ and $|\varepsilon|\,{=}\,0$. }
The set of all finite words over $\Sigma$ is $\Sigma^*$, and $\Sigma^+\,{=}\,\Sigma^*\setminus\{\varepsilon\}$.
  
\term{Infinite words} are defined over infinite domains. Words over domain $\mathbb{Z}$ are referred to as 
\emph{bi-$\omega$ words} (or \term{bi-infinite}). When the domain is $(-\infty,k]$ (resp. $[k,\infty)$) for some $k\in\mathbb{Z}$ the words are referred to as \term{left-$\omega$} (resp. \term{right-$\omega$}) words.

\vspace{-1mm}

\paragraph{Infix.} 
Let $\mathbb{D},\mathbb{D}'\subseteq \mathbb{Z}$ be two integer intervals.
Let ${w\,{:}\,\mathbb{D}\,{\to}\, \Sigma}$ and ${w'\,{:}\,\mathbb{D}'\,{\to}\, \Sigma}$ be two (finite or infinite) words. 
We say that $w'$ is an \term{infix} of $w$ if {there exists $s\,{\in}\,\mathbb{Z}$ such that} $\mathbb{D}'\, {+s} \,{\subseteq} \, \mathbb{D}$ and $ {w'[i{+s}]\,{=}\,w[i]}$ for every $i\,{\in}\,\mathbb{D}'$.
We write $w' \, {\sqsubset}\, w$ to denote that $w'$ is an \emph{infix} of $w$.

\vspace{-1mm}

\paragraph{Concatenation.} 
    Given two finite words $x,y\,{\in}\,\Sigma^*$, their \term{concatenation} is denoted $x\,{\cdot}\,y$ (or simply $xy$).
    For $k \, {\in} \, \mathbb{N}$, we write $x^k$ for the $k$-times right concatenation of $x$ with itself.
    Similarly, ${x^{\omega}}$, ${^{\omega} x}$, and ${{^{\omega}}x^{\omega}}$ for an infinite concatenation of $x$ to itself to the right, left, or both sides, respectively.

    The concatenation of a finite or left-$\omega$ (resp. right-$\omega$) word to the left (resp. right) of a right-$\omega$ (resp. left-$\omega$) word is defined naturally.

\vspace{-1mm}

\paragraph{Languages.}
A \term{\macrolang} over $ \Sigma$, is a set of words over $\Sigma$.
We use ${\Sigma^*}$, ${\Sigma^{\omega}}$,${{\,}^{\omega}\Sigma}$~and ${^{\omega}\Sigma^{\omega}}$ to denote the sets of all finite, right-${\omega}$, left-${\omega}$ and bi-${\omega}$ words, resp.
Subsets of ${\Sigma^*}$, are \term{\fLang}. The
subsets of ${\Sigma^{\omega}}$ and ${{^{\omega}}\Sigma}$ are referred to as \term{\infLangs}, consisting of right- and left-${\omega}$ words, resp. 
A subset of ${^{\omega}\Sigma^{\omega}}$ is a \term{\biInfLang}. 
For any ${W\,{\subseteq}\,{\Sigma^*}}$ and ${k\,{\in}\,\mathbb{N}}$, we write $W^{k}\,{=}\allowbreak\{ w_1 {\cdot}w_2 {\cdots} w_k\,|\,\forall i\,{\in}\,{\lbrack k \rbrack}.\ {w_{i}}\,{\in}\, W\}$ for the set of all $k$ concatenations of words from $W$. 
The sets ${W^{\omega}}$, ${^{\omega} W}$, and ${^{\omega} W^{\omega}}$ are defined analogously for infinite repetition.

\subsection{Cellular Automata (CA)}
\label{CA:subsec} 
In the following we formally introduce and define our models of interest,  \emph{cellular automata}. 
These are dynamical systems, exhibiting a variety of organized and complex behaviors over time~\cite{Hedlund1969,kuurka1997languages,alma9926314815204361,Kurka1999,kurka2003topological,BealMarie-Pierre2013SDaF}.
We restrict our attention to one-dimensional cellular automata {({1}-{CA})} defined over bi-infinite grids, with arbitrary finite alphabets and neighborhood radii. We henceforth refer to them simply as CA, omitting the dimensional qualifier.

\begin{definition}[Cellular Automata (CA)]
    A CA $\mathcal{A} = (\Sigma ,r ,f)$
    consists of an alphabet $\Sigma$ of \term{states}, a \term{radius} $r\,{\in}\, \mathbb{N}_0$, and a \term{local rule} $f\!:\!{\SigmaWindow}{\to}\,\Sigma\!$. 
\end{definition}

Let $\mathcal{A} = (\Sigma, r, f)$ be a CA, we use $\mathcal{A}$ to define the following definitions.

\paragraph{Configurations.}
    A \term{configuration} $c$ of {$\mathcal{A}$,}
is an assignment of $\Sigma$ elements to each cell indexed by $\mathbb{Z}$. 
    A configuration is viewed as a bi-$\omega$ word, thus $c\in \,^{\omega}\Sigma^{\omega}$.
    Throughout this paper, we denote this \emph{set of bi-infinite words by $\SigmaZWindow$}, which is the standard notation in the CA literature. 

    \smallskip
Given a configuration ${c\,{\in}\, {\SigmaZWindow}}$ and $i,j  \, {\in} \, {\mathbb{Z}}$ with ${i \,{<}\, j}$, we write ${c[i]}$ (or ${\langle i \rangle_{\scriptstyle c}}$) to denote the cell state at position $i$.
Similarly, $c{[i,j]}$ or ${\langle i,j \rangle_{\scriptstyle c}}$ denotes the infix of the configuration from index $i$ to $j$, inclusive. Hence, ${\langle i,j \rangle_{\scriptstyle c} \,{\in}\, {\Sigma^{j{-}i{+}1}}}$.

\paragraph{Neighborhood.}
Each cell $i\in\mathbb{Z}$ of $\mathcal{A}$, 
updates its state according to the local rule function (also referred to as the \emph{rule function}),
which acts according to the current states of a fixed set of adjacent cells centered in cell $i$; this is the \term{neighborhood} of $i$. 
For a \term{radius} $r$, the state update depends on the ${2r{+}1}$ consecutive cells centered at the current position. Formally, given a configuration $c$ the \term{neighborhood} {of cell $i$} is defined as {$\langle i-r, i+r\rangle_c$}.

\paragraph{Quiescent States and Configuration Support.}
Given {$\mathcal{A}$}, a state $s\,{\in}\, \Sigma$ is termed a \term{quiescent state} if ${f({s^{2r{+}1}})\,{=}\,s}$, that is, the rule function $f$ maps a neighborhood of $s$'s to state $s$. We follow the common assumption that there is a unique quiescent state, and denote it by $\mybot$.
Given a configuration $c$, the \term{support} of $c$ is the set of cells not in state $\mybot$, denoted 
${\supp_{\bot}(c) \,{=}\,  \{i \, {\in} \, \mathbb{Z}  \,|\,  c[i] \,{\neq}\,\mybot \}}$.

\paragraph{Finite configurations} are those with only finitely many non-quiescent states. Formally, a configuration $c$ is \term{finite} if and only if its support; $\supp_{\bot }(c)$, is a finite set.
We denote the set of all finite configurations by $\mathterm{C_{\mathcal{F}}}$. Note that $C_{\mathcal{F}} \,{\subseteq}\, \SigmaZWindow$. 
Let $c\,{\in}\,C_{\mathcal{F}}$, we define the \term{active interval} of $c$ as the smallest integer interval containing its support, formally defined as:  
\begin{equation}
    \label{eq:interval} 
   \interval{c} \coloneqq [\min (\supp_{\bot}(c)), \max (\supp_{\bot}(c))]
\end{equation}
Let $c,c'\,{\in}\,C_{\mathcal{F}}$ be two finite configurations, and let $\interval{c},\interval{c'}$ be their intervals. If there exists $s \,{\in}\, \mathbb{Z}$ such that $\interval{c}{+}s \,{=}\, \interval{c'}$, and $c[\interval{c}+s] \,{=}\, c' [\interval{c'}]$, then the two configurations are considered \emph{equivalent} under a domain shift.

\noindent
Consequently, \term{width} of $c$, denoted $\width{c}$, is defined as: 
\begin{equation}
\label{eq:width} 
   \width{c} \coloneqq \max (\supp_{\bot}(c))  - \min (\supp_{\bot}(c)) + 1
\end{equation}
\paragraph{Orbits.} The local rule function $f$ of $\mathcal{A}$ is extended to the \term{global rule function}, denoted ${\mathterm{G}\,{:}\,\SigmaZWindow\,{\rightarrow}\,\SigmaZWindow}$, where $f$ is applied to all cells simultaneously. 
Starting from an initial configuration ${c_0 \!\in\! {\SigmaZWindow}}$, applying the global function once, we get the next configuration, that is, $c_1 \,{=}\,  G(c_0)$ and generally, $c_n\,{=}\,G(c_{n{-}1})\,{=}\,{G^{n}}(c_0)$ for every $n \!\in\! \mathbb{N}$. The
\term{orbit} of a configuration $c$, denoted ${\orbitc{c}{=} \{{G^n} (c)  ~|~ n \!\in\! {\mathbb{N}_{0}}\}}$, is the infinite configuration sequence $c_0,c_1,c_2,\ldots$ where $c_0\,{=}\,c$ and ${c_n \,{=}\, {G^n}(c)}$ for every $n \!\in\! \mathbb{N}$.
Given a set of configurations $C\,{\subseteq}\, {\SigmaZWindow}$, its orbit is denoted ${\mathterm{\orbitc{C}}{=}\bigcup_{c \in  C} \orbitc{c}}$.
Note that an orbit of any ${c \!\in\! {\SigmaZWindow}}$ is a bi-infinite language.

\section{CA Languages -- Generative Perspective}\label{sec:expressivness}
We now formalize what it means for a CA to \emph{generate a language}, viewing its reachable configurations as a formal object of study. This framework allows us to examine how parameters such as alphabet size and neighborhood radius influence expressiveness and to relate the generated languages to established classes in formal language theory.
\paragraph{Padding and Finite Configurations.}First we define a $\bot$-padding operator on words, then we lift it to languages.
The function, $\pad : \Sigma^* \to C_{\mathcal{F}}$, takes a finite word $w\in\Sigma^*$ and returns ${}^{\omega} \bot w \bot^{\omega}$.
The lifting of the function to languages is as expected $\ppad : \mathcal{P}(\Sigma^*) \to \mathcal{P}(C_{\mathcal{F}})$ where $\ppad(L) := \{\pad(w) \mid w\,{\in}\, L\}$. 

Given a \fLang\ $F$, we define the set of finite configurations induce by $F$ to be $\ppad(F)$, as explained above. 
In the sequel, we consider CA languages derived from initial configurations ${{I} \,{=}\, \mathterm{\ppad}(F)}$ for a \emph{regular language $F$}. Regular languages are in particular finitary languages.
\paragraph{CA language.}
Given ${\mathcal{A}\,{=}\,  (\Sigma,r,f)}$ and ${I\,{\subseteq}\, C_{\mathcal{F}}}$.
The language of $\mathcal{A}$ is defined wrt the set of initial configurations $I$ and is denoted  $\mathterm{\mathcal{L}(\mathcal{A},I)} \,{=} \,\allowbreak\{ w \mid \, {}^{\omega} \mybot  w \mybot ^{\omega} \, {\in}\,  \orbitc{I}\}$.
Given ${F \,{\subseteq}\, \Sigma^*}$, we abbreviate ${\mathcal{L}(\mathcal{A}, \ppad(F))}$ 
as
${\mathcal{L}(\mathcal{A}, F)}$.

\begin{remark} \normalfont
Although our definitions allow general initial sets ${I \,{\subseteq}\, C_{\mathcal{F}}}$, we focus on the structured case $ {I \,{=}\, \ppad(F) }$, for a regular language $ {F \,{\subseteq}\, \Sigma^*}$. 
Regular initial sets are standard in formal verification, as many natural systems admit such representations~\cite{Bouajjani2000}.
Despite this syntactic constraint, we show that such cases can still yield CA-expressible languages of considerable complexity.
\end{remark}

\paragraph{$\mathbb{CA}$ and CA-expressible.} We use $\caclassshort{d}{r}{\Sigma}$ to denote the class of 1-CA with radius $\mathterm{r}$ over $\Sigma$, respectively. ${\mathbb{CA}}$ denotes the class of all 1-CA (for any $r$ and $\Sigma$). 
A language $L \,{\subseteq}\, \Sigma^*$ is $\caclassshort{1}{r}{\Sigma}$-expressible (thus, CA-expressible in general), if there exists $\mathcal{A} \!\in\! \caclassshort{1}{r}{\Sigma}$ and $I \,{\subseteq}\, C_{\mathcal{F}}$ such that $L \,{=}\, \mathcal{L}(\mathcal{A},I)$. In which case we write  $L \!\in\! \caclassshort{1}{r}{\Sigma}$.
When ${I\,{=}\,\ppad(F)}$ and $F$ is a regular language, we often write ${\mathcal{L}(\mathcal{A},I)\,{\in}\,\caclassshortREG{1}{r}{\Sigma}{REG}}$ to emphasize the regular origin.
\subsection{Feasible Neighborhoods}
Let ${\mathcal{A}\,{\in}\,\caclassshort{1}{r}{\Sigma}}$ be a fixed CA. 
By the above definitions,  not every neighborhood ${w\,{\in}\, \SigmaWindow}$, where $w\,{=}\,{\sig{{-}r}{\ldots}\sig{0}\sig{1} {\ldots}\sig{r}}$, needs to appear along orbits starting from an item of $I$.
We say that a neighborhood $w \!\in\! \Sigma^{2r+1}$ is \term{feasible} for $\mathcal A$ under $I$ if there exists $c \!\in\! \orbitc{I}$ satisfying $w \sqsubset c$; otherwise $w$ is \term{infeasible}.
Accordingly, we define the set of \term{feasible neighborhood} of ${\mathcal{A}\,|_{I}}$ (read $\mathcal A$ under $I$).
\begin{definition}[Feasible Neighborhoods]
    For $\mathcal A \!\in\! \caone{r}{\Sigma}$ and $I \!\subseteq\! C_{\mathcal F}$, the set of feasible neighborhoods is
    $\FN_{\mathcal A,I} \!=\! \{\,w \!\in\! \Sigma^{2r+1}\mid \exists\,c \!\in\! \orbitc{I}\text{ with }w\sqsubset c\,\}.$
\end{definition}

\noindent
\begin{minipage}{0.725\textwidth}
    \begin{example}\label{example:feasible- neighborhood} \normalfont
        Let $\mathcal{A}\,{\in}\,\caclassshort{1}{1}{\{a,\bot\}}$, 
        where $f$ is defined by:
          
          \vspace{2mm}
          
          \scalebox{0.925}{
      $
       \begin{array}{llll}
        f(\bot \bot \bot )  \,{=}\, \bot \ &\  f(\bot \bot a)  \,{=}\, a \ &\  f(\bot a\bot )  \,{=}\, \bot \ &\  f(\bot aa)  \,{=}\, a\\[0.65 mm]
        \text{\makebox[\widthof{$f(\bot \bot \bot )$}][c]{$f(a\bot \bot )$}}\,{=}\, \bot  \ &\  \text{\makebox[\widthof{$f(\bot \bot a)$}][c]{$f(a\bot a)$}}  \,{=}\, a \ &\  \text{\makebox[\widthof{$f(\bot a \bot)$}][c]{$f(aa\bot )$}}  \,{=}\, \text{\makebox[\widthof{$\bot$}][c]{$a$}} \ &\  \text{\makebox[\widthof{$f(\bot aa)$}][c]{$f(aaa)$}}  \,{=}\, a
        \end{array}
        $}

        \vspace{2mm}
    \end{example}
\end{minipage}
\begin{minipage}{0.25\linewidth}
    \centering
    \begin{tikzpicture}[scale=0.325, every node/.style={scale=0.325}]
     \draw[step=1cm, gray, thin] (7, 1) grid (12, -2);
        \foreach \i in {1,2,3} {
            \fill[lightgray!50] (11 -\i, -\i+1) rectangle ++(1, 1);
            \node at (6.5, 1.5-\i) {\huge ${\ldots}$};
            \node at (12.5,  1.5-\i) {\huge ${\ldots}$};
        }
       \draw[step=1cm, black, thin] (7, 1) grid (12, -2);
        \node at (9.45, -2.75) {\Huge \textbf{ $^{\omega}\bot a\bot^{\omega}\,{\in}\, I$}};
        \draw[step=1cm, black, thick, ->] (5.5,1) -> (5.5,-2); 
        \node[rotate=90] at (5.0, -0.5) {\Huge {Time}};
        \end{tikzpicture}
\end{minipage}

    \noindent
    Let $I \,{=}\, \ppad(\{a\})$. Thus, only $\bot \bot \bot$, $\bot \bot a$, $\bot a\bot$, and $a\bot \bot$ are feasible neighborhoods (\FNs) for $\mathcal{A}|_I$. Consider the illustration to the right, of applying the rule function on $I$, where $\whitebox \,{=}\, \bot$ and $\abox \,{=}\, a$.
    Therefore, it suffices to define $f$ as a partial function, on the \FN, namely: 
    \begin{center}
    \vspace{-1.5mm}
        \scalebox{0.905}{
      $
       \begin{array}{llll}
        f(\bot \bot \bot )  \,{=}\, \bot  \ &\  f(\bot \bot a)  \,{=}\, a \ &\  f(\bot a\bot )  \,{=}\, \bot  \ &\  f(a\bot \bot )  \,{=}\, \bot
        \end{array}$}
        \vspace{-0.5mm}
    \end{center}

\noindent
Infeasible neighborhoods (those outside $\FN_{\mathcal A,I}$) can be ignored in the analysis. Therefore, henceforth we define $f$ as a partial function with domain $\FN_{\mathcal A,I}$.

  Given a $\mathcal{A}\,{\in}\, \caclassshort{1}{r}{\Sigma}$, the local rule $f$ assigns to each neighborhood $w\,{\in}\, \SigmaWindow$ a value from $\Sigma$. It is often more convenient to ask for each $\sigma\,{\in}\,\Sigma$ which (feasible)

  \noindent
\begin{wrapfigure}{r}{0.275\linewidth}
      \centering
      \vspace{-5mm}
            {\small {$\begin{array}{|l|l|}
       \hline
        \bot & \bot \bot \bot  \quad \bot a\bot \quad a\bot \bot  \\
        a&  \bot \bot  a\\
        \hline
        \end{array} $}}

        \vspace{1mm}
        
        \textbf{\footnotesize $f$ from Ex.\ref{example:feasible- neighborhood}}
         
        \vspace{-4mm}
\end{wrapfigure}
     neighborhoods are assigned to it. 
     Consider Ex.\ref{example:feasible- neighborhood}, its local rule can be succinctly given by a table consisting of only feasible neighborhoods.
     Each row corresponds to a letter $\sigma\,{\in}\,\Sigma$ and lists all the {\FNs} for which ${f(w)} \,{=}\, \sigma$.

   \subsection{Observations and Implications}\label{subsec:CAprop}
    Next, we present a few elementary observations about the CA and its generated language. Although straightforward, they underpin key intuitions that support the more involved developments that follow. 
    Full proofs are provided in App.\ref{app:CAproperties}.

    Given a language $L\,{\in}\,\Sigma^*$, let $w_0,w_1,w_2,\ldots$ be a list of all words in $L$, arranged in a non-decreasing order by their length, (i.e., $|{w_{i}}|\,{\leq}\,|w_{i{+}1}|$ for any $i \, {\in} \, \mathbb{N}_0$). 
    Let ${\apos{1}}, {\apos{2}},\ldots$ be the sequence of size differences of words in this sequence, i.e., ${\apos{i}} \,{=}\, |w_{i}| {-}|w_{i \,{-}\, 1}|$ which we term the \term{differences sequence}.
    We say that the difference sequence of $L$ is \emph{bounded} if there is a $k \, {\in} \, \mathbb{N}$ such that $ {\apos{i}}\,{\leq}\,k$ for all $i \, {\in} \, \mathbb{N}$.
    If the difference sequence is bounded, we define the minimal such $k$ as the \term{difference-bound} of $L$. 
    We proceed with some observations regarding $\interval{c}$ and $\width{c}$ defined in \Cref{eq:interval,eq:width}.
    \begin{note}
    \label{cor:bound_inc}
     Let $\mathcal{A} \,{\in}\, \caclassshort{1}{r}{\Sigma}$ and let $c\,{\in}\, C_{\mathcal{F}}$ be a finite configuration. 
    Assume \emph{$\interval{c}\,{=}\,[m,n]$} and \emph{$\interval{G(c)}\,{=}[i,j]$}. 
     Then $i \,{\geq}\,m {-}r$ and $j \,{\leq}\,n{+}r$.
    \end{note}
\vspace{-1.95mm}    
\begin{restatable}
    {corollary}{bounddiff}\label{CApump:lemma}
        Let $\mathcal{A}\,{\in}\, \caclassshort{1}{r}{\Sigma}$. 
        Then there exists a natural number $k\,{\leq}\,2\,{\cdot}\, r$ such that each finite configuration $c\,{\in}\, C_{\mathcal{F}}$ of $\mathcal{A}$ satisfies \emph{$\width{G(c)} \minus \width{c} \,{\leq}\,k$}.
    \end{restatable}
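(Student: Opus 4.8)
The plan is to obtain this as an immediate consequence of Observation~\ref{cor:bound_inc}, which already controls how far each endpoint of the active interval can drift under a single application of the global rule $G$. Since $\width{c}$ is precisely the length of $\interval{c}$, a bound on the endpoint displacement transfers directly into a bound on the width growth, and no further machinery is needed.

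First I would dispose of the degenerate cases. If $\supp_{\bot}(c)\,{=}\,\emptyset$, then $c\,{=}\,{}^{\omega}\mybot^{\omega}$, which is fixed by $G$ since $\mybot$ is quiescent; both widths vanish and the inequality is trivial. More generally, whenever $G(c)$ has empty support its width is $0\,{\leq}\,\width{c}$, so the difference is nonpositive and thus at most $2r$.

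The main case is when both $c$ and $G(c)$ have nonempty finite support. Here I would set $\interval{c}\,{=}\,[m,n]$ and $\interval{G(c)}\,{=}\,[i,j]$, so that $\width{c}\,{=}\,n\,{-}\,m\,{+}\,1$ and $\width{G(c)}\,{=}\,j\,{-}\,i\,{+}\,1$, and then invoke Observation~\ref{cor:bound_inc} to obtain $i\,{\geq}\,m\,{-}\,r$ and $j\,{\leq}\,n\,{+}\,r$. Substituting these into
\[
\width{G(c)} - \width{c} \,=\, (j - i) - (n - m) \,=\, (j - n) + (m - i)
\]
bounds each summand by $r$, giving $2r$ in total. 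Hence $k\,{=}\,2r$ is a valid bound, which is exactly the claim.

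The arithmetic is essentially free, and all the real content is carried by Observation~\ref{cor:bound_inc}; consequently this is a corollary by design, with no genuine obstacle. The only point demanding attention is that the Observation is stated under the implicit assumption that both active intervals are nonempty, so the argument must flag the collapse case in which $G(c)$ becomes entirely quiescent separately --- which is precisely the degenerate case handled above.
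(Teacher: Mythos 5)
Your proposal is correct and follows essentially the same route as the paper: both derive the bound directly from Observation~\ref{cor:bound_inc} by noting that each endpoint of the active interval can drift outward by at most $r$, giving a total width increase of at most $2r$. Your version is slightly more careful in spelling out the endpoint arithmetic and in flagging the case where $G(c)$ has empty support, but these are refinements of, not departures from, the paper's argument.
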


\vspace{-1mm}    
   Cor.\ref{CApump:lemma} can be used to show that a language $L$ is not $\caclassshort{1}{r}{\Sigma}$-expressible for certain $r$'s.
    That is, let $L \,{=}\, \mathcal{L}(\mathcal{A},\{c\})$ for some $\mathcal{A}\,{\in}\,\caclassshort{1}{r}{\Sigma}$ and $c\,{\in}\,C_{\mathcal{F}}$. 
    If $L$ is $k$-difference bounded, 
    then it is necessarily not CA-expressible with $r\,{\leq}\,{\lfloor{\frac{k \,\minus \,1}{2}}\rfloor}$.

\section{OCA Separation Witnesses via CA}\label{subsec:onecounter}
    In this section, we examine the expressive capabilities of regular-initialized CA.   
    First note that every regular language, $L\,{\in}\,\Sigma^*$, is trivially $\caclassshort{1}{0}{\Sigma}$-expressible, by setting the initial configuration to $I\,{=}\,\ppad(L)$ and using the identity rule function: 
    $\forall \sigma\,{\in}\,\Sigma. \ f(\sigma)\,{=}\,\sigma$. That is, $\caclassshortREG{1}{r}{\Sigma}{REG} \supseteq \class{REG}$ where $\class{REG}$ stands for the class of regular languages. 
    Next, we use one-counter automata (OCA) and their known subclasses as a framework for comparison. 
    
    \paragraph{One-counter automata (OCA)} are a fundamental model of infinite-state systems. They are a special case of pushdown automata (PDA) where the stack is restricted to a single symbol.
    As a result, the stack behaves as a single unbounded natural counter, which can be incremented, decremented (only if positive), and tested for zero.
    While seemingly simple, OCA recognizes a rich class of languages that extends the regular languages and captures basic counting.
    This makes OCA a natural tool for probing the expressive scope of our model. By positioning regular-initialized CA generated languages against increasingly expressive subclasses beyond the regular class, we demonstrate a non-trivial increase in language expressiveness.

\paragraph{OCA Subclasses.}
    Although one-counter automata (OCA) are inherently non-{\allowbreak}deterministic,
    their 
    \term{deterministic} variant DOCA-a strict subclass of OCA-has been widely studied.\footnote{Their study is also motivated by the fact that many decision problems are undecidable for general OCAs and PDAs~\cite{valiant1975deterministic,mathew2025learning,Berman1988DecidingEO}.}
    Within DOCA,
    further syntactic restrictions yield \term{real}-\term{time} variant ROCA.
    Finally, restricting ROCA yields a \term{visible} model VOCA.
    Despite being the most restricted type, even VOCA strictly subsumes the regular languages.
    For clarity and conciseness, 
    we provide the full formal definitions of OCA and their relevant subclasses in App.\ref{app:oca-defs}.

\paragraph{OCA languages.} Given an OCA $\mathcal{A}$, the language it recognizes is denoted by ${\mathcal{L}(\mathcal{A})}$. The class of languages recognized by OCA is denoted by $\class{OCL}$. 
    Similarly, $\class{DOCL}$, $\class{ROCL}$, and $\class{VOCL}$ are defined based on DOCA, ROCA and VOCA. 

    \vspace{0.5mm}
    
    These subclasses form a hierarchy of increasing restrictions and decreasing expressive power, see \cite{hopcroft_ullman_automata_2001,Herbst1991,staquet2024active,VisiblyPushdownLanguages04}
 for formal definitions and proofs. 
\begin{restatable}[OCL hierarchy \cite{staquet2024active}]{theorem}{OCAhierarchy}\label{Thm:one-counter-hierarchy} 
    ${\class{REG}}\,{\subsetneq}\, {\class{VOCL}}\,{\subsetneq}\,
{\class{ROCL}}\,{\subsetneq}\, {\class{DOCL}}\,{\subsetneq}\, {\class{OCL}}$
\end{restatable}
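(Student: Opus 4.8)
The plan is to prove the chain in two layers. First I would establish the four non-strict inclusions $\class{REG}\subseteq\class{VOCL}\subseteq\class{ROCL}\subseteq\class{DOCL}\subseteq\class{OCL}$, which are immediate from the model definitions, since each model is obtained by a \emph{syntactic restriction} of the next: a DFA is a VOCA that never touches the counter (all letters internal), giving $\class{REG}\subseteq\class{VOCL}$; a VOCA is the special ROCA in which the counter operation is a fixed function of the current letter via the call/return/internal partition, giving $\class{VOCL}\subseteq\class{ROCL}$; relaxing the real-time requirement to permit $\varepsilon$-moves turns an ROCA into a DOCA, giving $\class{ROCL}\subseteq\class{DOCL}$; and a DOCA is a deterministic OCA, giving $\class{DOCL}\subseteq\class{OCL}$. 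Each is a containment of the recognized-language class and follows directly from unfolding the definitions, so no real work is needed here.

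The second and substantive layer is strictness, where I would exhibit one separating witness per level together with a non-membership argument for the smaller class. For $\class{REG}\subsetneq\class{VOCL}$, take $\{a^n b^n \mid n\ge 0\}$ read with $a$ a push/call and $b$ a pop/return: it is a VOCL by construction, and the regular pumping lemma shows it is not regular. For $\class{VOCL}\subsetneq\class{ROCL}$, take a language in which a single letter must \emph{both} increment and decrement the counter depending on context, e.g. $\{a^n b a^n \mid n\ge 0\}$, where the counter rises on the first block of $a$'s and falls on the second; this is real-time deterministic, but no visible alphabet partition can assign $a$ a single fixed counter action, so it is not a VOCL. For $\class{DOCL}\subsetneq\class{OCL}$, take $\{a^n b^n \mid n\ge 1\}\cup\{a^n b^{2n}\mid n\ge 1\}$, which a nondeterministic OCA accepts by guessing the case and popping at the matching rate, whereas it is not even a deterministic context-free language and hence not a DOCL. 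For $\class{ROCL}\subsetneq\class{DOCL}$, take a DOCL witness whose recognition requires an unbounded counter adjustment performed by $\varepsilon$-moves between or after input reads, which a bounded-change, $\varepsilon$-free real-time machine cannot reproduce.

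The main obstacle is precisely the \emph{non-membership} half of each strictness claim: each needs a pumping- or Myhill–Nerode-style argument specialized to the restricted model rather than a generic context-free pumping lemma. The visibly-one-counter lower bound relies on the structural invariant that counter height along any run is a fixed function of the consumed input word, and the real-time separation is the most delicate step, since it must pin down exactly what a bounded-rate machine without $\varepsilon$-transitions cannot achieve in a single left-to-right pass. Since the full statement is classical, I would either assemble these four witness arguments from the standard toolbox or, as the theorem is already established in the literature, appeal directly to the hierarchy of \cite{staquet2024active} together with the foundational treatments in \cite{hopcroft_ullman_automata_2001,Herbst1991,VisiblyPushdownLanguages04}.
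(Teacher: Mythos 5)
The paper does not actually prove this theorem: it is imported wholesale from the cited reference, and the only supporting material in the paper is the list of separating witnesses in the appendix ($\{a^nb^n\}$, $\{a^nba^n\}$, $\{a^nb^mc \mid n\geq m>0\}$, $\{a^nb^m \mid n\geq m>0\}$) together with a pointer to the literature for the non-membership arguments. Your proposal is therefore doing more work than the paper itself, and its overall architecture --- non-strict inclusions by syntactic restriction, then one witness per level, then an appeal to the standard references for the lower bounds --- is exactly the standard route and is consistent with what the paper relies on. Your witnesses for the first two separations ($\{a^nb^n\}$ for $\class{REG}\subsetneq\class{VOCL}$ and $\{a^nba^n\}$ for $\class{VOCL}\subsetneq\class{ROCL}$) coincide with the paper's. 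For $\class{DOCL}\subsetneq\class{OCL}$ you use the classical union language $\{a^nb^n\}\cup\{a^nb^{2n}\}$ rather than the paper's $\{a^nb^m\mid n\geq m>0\}$; both are legitimate, though note that the paper's choice is tied to its acceptance convention (final state \emph{with counter zero}), under which $\{a^nb^m\mid n\geq m>0\}$ fails to be deterministic precisely because the machine cannot deterministically drain the surplus counter value, whereas under empty-stack-free acceptance that language would be trivially deterministic. Your witnesses also matter downstream, since the paper reuses its specific four languages in Prop.~2 to exhibit CA-expressible separators.

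The one genuine weakness is the $\class{ROCL}\subsetneq\class{DOCL}$ level: you describe the shape of a witness (``a DOCL witness whose recognition requires unbounded counter adjustment by $\varepsilon$-moves'') but never exhibit one, and this is the separation you yourself flag as the most delicate. The paper's witness is $\{a^nb^mc\mid n\geq m>0\}$: the trailing $c$ lets a DOCA accept by $\varepsilon$-decrementing the counter to zero after the $c$, while a real-time machine, which must finish its counter bookkeeping synchronously with the input, cannot. If you intend the proof to be self-contained rather than a citation, you need to name such a witness and give the real-time lower bound; as a citation-backed argument (which is all the paper offers), your proposal is adequate.
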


\noindent
Thm.\ref{Thm:one-counter-hierarchy} establishes a strict hierarchy.
For each consecutive pair of classes, one can show a CA-expressible language that separates them, as we next show.

\begin{figure}[ht] 
\centering
 \captionsetup[subfloat]{font=footnotesize}
   \subfloat[$\{a^n b^n \,|\, n\,{\in}\,\mathbb{N}\}$\label{fig:OCA-lang1-a}]{
  \centering
        \scalebox{0.9025}{$\begin{array}{|l|l|}
           \hline
            \mybot  & \mybot \mybot \mybot  \\[0.3mm] \hline 
        a & \mybot \mybot a \quad \mybot aa \quad \text{\makebox[\widthof{$b\mybot \mybot$}][c]{$\mybot a b$}} \quad aaa \quad aab \\[0.3mm] \hline 
        b & \text{\makebox[\widthof{$\mybot \mybot a$}][c]{$ab\mybot$}}  \quad  \text{\makebox[\widthof{$\mybot a a$}][c]{$abb$}} \quad b\mybot \mybot  \quad  bb\mybot   \quad bbb \\
            \hline
                \end{array}$}
    \hspace{1mm}
    \scalebox{0.3525}{
     \begin{tikzpicture}[baseline={(current bounding box.center)}]
     \draw[step=1cm, gray, thin] (6, 1) grid (14, -2);
        \foreach \i in {1,2,3} {
            \node at (5.5, 1.5-\i) {\huge ${\ldots}$};
            \node at (14.5,  1.5-\i) {\huge ${\ldots}$};
        }
        \fill[lightgray!50] (10, 0) rectangle ++(-1, 1);
        \fill[darkgray!50] (11, 0) rectangle ++(-1, 1);
        \fill[lightgray!50] (10, -1) rectangle ++(-2, 1);
        \fill[darkgray!50] (10, -1) rectangle ++(2, 1);
        \fill[lightgray!50] (10, -2) rectangle ++(-3, 1);
        \fill[darkgray!50] (10, -2) rectangle ++(3, 1);
       
       \draw[step=1cm, black, thick] (6, 1) grid (14, -2);
       \draw[step=1cm, black, ultra thick, ->] (4.75,1) -> (4.75,-2); 
        \node[rotate=90] at (4.25, -0.5) {\Huge {Time}};
        \node[] at (10,-2.7) {\Huge {$\,{} ^{\omega}\bot ab\bot^{\omega} \, {\in}\, I$}};
        \end{tikzpicture}
    }}

 \subfloat[$\{a^n b a^n \,|\, n\,{\in}\,\mathbb{N}\}$\label{fig:OCA-lang1-b}]{
  \centering
        \scalebox{0.9025}{{$\begin{array}{|l|l|}
            \hline
                \mybot  & \mybot \mybot \mybot  \\\hline
                a & \mybot \mybot a  \quad  \mybot aa  \quad \mybot ab  \quad a\mybot \mybot   \quad aa\mybot\\[-0.1mm]
                & \text{\makebox[\widthof{$\mybot \mybot a$}][c]{$aaa$}}  \quad \text{\makebox[\widthof{$\mybot a a$}][c]{$aab$}}  \quad \text{\makebox[\widthof{$\mybot a b$}][c]{$ba\mybot$}}   \quad \text{\makebox[\widthof{$a\mybot\mybot$}][c]{$baa$}} \\ \hline
                b & \text{\makebox[\widthof{$\mybot\mybot a$}][c]{$aba$}} \\
                \hline  
            \end{array}$}}
            \hspace{1mm}
    \scalebox{0.3525}{
     \begin{tikzpicture}[baseline={(current bounding box.center)}]
     \draw[step=1cm, gray, thin] (6, 1) grid (15, -2);
        \foreach \i in {1,2,3} {
            \node at (5.5, 1.5-\i) {\huge ${\ldots}$};
            \node at (15.5,  1.5-\i) {\huge ${\ldots}$};
        }
        \fill[lightgray!50] (10, 0) rectangle ++(-1, 1);
        \fill[lightgray!50] (11, 0) rectangle ++(1, 1);
        \fill[darkgray!50] (11, 0) rectangle ++(-1, 1);
        \fill[lightgray!50] (10, -1) rectangle ++(-2, 1);
        \fill[lightgray!50] (11, -1) rectangle ++(2, 1);
        \fill[darkgray!50] (10, -1) rectangle ++(1, 1);
        \fill[lightgray!50] (10, -2) rectangle ++(-3, 1);
        \fill[lightgray!50] (11, -2) rectangle ++(3, 1);
        \fill[darkgray!50] (10, -2) rectangle ++(1, 1);
       
       \draw[step=1cm, black, thick] (6, 1) grid (15, -2);
       \draw[step=1cm, black, ultra thick, ->] (4.75,1) -> (4.75,-2); 
        \node[rotate=90] at (4.25, -0.5) {\Huge {Time}};
        \node [] at (10.5,-2.7) {\Huge {$\,{} ^{\omega}\bot aba\bot^{\omega}\, {\in}\, I$}};
        \end{tikzpicture}
    }}
\hspace{1mm}
\subfloat[$\{a^n b^m c \,|\, n\,{\geq}\, m \,{>}\,0\}$\label{fig:OCA-lang1-c}]{
  \centering
        \scalebox{0.9025}{$\begin{array}{|l|l|}
            \hline
            \mybot  & \mybot \mybot \mybot \\[0.3mm] \hline
            a & \mybot \mybot a \quad \mybot aa \quad \mybot ab \quad aaa \quad aab\\[0.3mm] \hline
            b & \text{\makebox[\widthof{$\mybot \mybot a$}][c]{$a b b$}} \quad  \text{\makebox[\widthof{$\mybot a a$}][c]{$a b c$}} \quad \text{\makebox[\widthof{$\mybot a b$}][c]{$bbb$}}  \quad \text{\makebox[\widthof{$aa a$}][c]{$bbc$}} \quad bc\mybot \\[0.3mm] \hline
            c & c\mybot \mybot  \\  
            \hline
                \end{array}$}
            \hspace{1mm}
    \scalebox{0.3525}{
     \begin{tikzpicture}[baseline={(current bounding box.center)}]
     \draw[step=1cm, gray, thin] (6, 1) grid (16, -2);
        \foreach \i in {1,2,3} {
            \node at (5.5, 1.5-\i) {\huge ${\ldots}$};
            \node at (16.5,  1.5-\i) {\huge ${\ldots}$};
        }
        \fill[lightgray!50] (9, 0) rectangle ++(2, 1);
        \fill[darkgray!50] (11, 0) rectangle ++(1, 1);
        \fillc{12}{0}{1}
        
        \fill[lightgray!50] (8, -1) rectangle ++(3, 1);
        \fill[darkgray!50] (11, -1) rectangle ++(2, 1);
        \fillc{13}{-1}{1}

        \fill[lightgray!50] (7, -2) rectangle ++(4, 1);
        \fill[darkgray!50] (11, -2) rectangle ++(3, 1);
        \fillc{14}{-2}{1}

       \draw[step=1cm, black, thick] (6, 1) grid (16, -2);
       \draw[step=1cm, black, ultra thick, ->] (4.75,1) -> (4.75,-2); 
        \node[rotate=90] at (4.25, -0.5) {\Huge {Time}};
        \node [] at (10.5, -2.7) {\Huge {$\,{} ^{\omega}\bot aabc\bot^{\omega}\, {\in}\, I$}};
        \end{tikzpicture}
    }}
   \hspace{1mm}
\subfloat[$\{a^n b^m \,|\, n\,{\geq}\,m\,{>}\,0\}$\label{fig:OCA-lang1-d}]{
  \centering
        \scalebox{0.9025}{$ \begin{array}{|l|l|}
            \hline
            \mybot  & \mybot \mybot \mybot  \\[0.3mm] \hline
            a & \text{\makebox[\widthof{$\mybot \mybot \mybot$}][c]{$\mybot \mybot a$}} \quad  \mybot aa \quad \text{\makebox[\widthof{$b\mybot \mybot$}][c]{$\mybot ab$}} \quad %
             \text{\makebox[\widthof{$bb\mybot$}][c]{$aaa$}} \quad aab \\[0.3mm] \hline %
            b & \text{\makebox[\widthof{$\mybot \mybot \mybot$}][c]{$ab\mybot$}}  \quad \text{\makebox[\widthof{$\mybot aa$}][c]{$abb$}}  \quad  b\mybot \mybot  \quad bb\mybot  \quad \text{\makebox[\widthof{$aab$}][c]{$bbb$}} \\[0.2mm]
            \hline
        \end{array}$}\hspace{1mm}
    \scalebox{0.3525}{
     \begin{tikzpicture}[baseline={(current bounding box.center)}]
     \draw[step=1cm, gray, thin] (6, 1) grid (16, -2);
        \foreach \i in {1,2,3} {
            \node at (5.5, 1.5-\i) {\huge ${\ldots}$};
            \node at (16.5,  1.5-\i) {\huge ${\ldots}$};
        }
        \fill[lightgray!50] (9, 0) rectangle ++(3, 1);
        \fill[darkgray!50] (12, 0) rectangle ++(1, 1);
        
        \fill[lightgray!50] (8, -1) rectangle ++(4, 1);
        \fill[darkgray!50] (12, -1) rectangle ++(2, 1);

        \fill[lightgray!50] (7, -2) rectangle ++(5, 1);
        \fill[darkgray!50] (12, -2) rectangle ++(3, 1);

       \draw[step=1cm, black, thick] (6, 1) grid (16, -2);
       \draw[step=1cm, black, ultra thick, ->] (4.75,1) -> (4.75,-2); 
        \node[rotate=90] at (4.25, -0.5) {\Huge {Time}};
        \node [] at (10.5, -2.7) {\Huge {$\,{} ^{\omega}\bot aaa b\bot^{\omega} \, {\in}\, I$}};
        \end{tikzpicture}
    }
}
    \caption[]{
  {Legend:} $\bot$ {=} $\whitebox$, $a$ {=} $\abox$, $b$ {=} $\bbox$, $c$ {=} $\cbox$
  }\label{fig:OCA-lang1}
\end{figure}
    
    \begin{restatable}{proposition}{OCAproposition}\label{OCA-proposition}
         For every pair of consecutive classes $\class{C}$ and $\class{C}'$ in Thm.\ref{Thm:one-counter-hierarchy} there exists $L\,{\in}\,\class{C}'\,{\setminus}\class{C}$ that is expressible by a regular initialized CA, i.e., $L\,{\in}\,\caclassshortREG{1}{r}{\Sigma}{REG}$.
    \end{restatable}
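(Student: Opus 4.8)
The plan is to treat each of the four consecutive pairs in $\class{REG}\subsetneq\class{VOCL}\subsetneq\class{ROCL}\subsetneq\class{DOCL}\subsetneq\class{OCL}$ separately, exhibiting for each pair $(\class{C},\class{C}')$ a single witness language $L$ together with a radius-$1$ CA $\mathcal{A}$ and a regular initial language $F$ such that $L=\mathcal{L}(\mathcal{A},F)$. For every witness there are then two independent obligations: a purely language-theoretic one, namely $L\in\class{C}'$ and $L\notin\class{C}$, and the genuinely new one, that $L$ is generated by a regular-initialized CA. The first obligation is classical, and I would discharge it by citing the characterizations underlying \Cref{Thm:one-counter-hierarchy} (and, for the $\class{REG}$ boundary, a Myhill--Nerode / pumping argument); the part I would actually carry out is the CA-expressibility of each witness, for which \Cref{fig:OCA-lang1} already supplies the feasible-neighborhood tables and the initial seeds.

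For the two lowest gaps I would use the witnesses of \Cref{fig:OCA-lang1}: $\{a^nb^n\}$ for $\class{REG}/\class{VOCL}$ and $\{a^nba^n\}$ for $\class{VOCL}/\class{ROCL}$. These separations are standard: $\{a^nb^n\}$ is non-regular but visibly one-counter, while in $\{a^nba^n\}$ the single letter $a$ is forced to both increment (the left block) and decrement (the right block), which is exactly the obstruction excluding it from the visible class, even though a real-time deterministic counter accepts it. For the two upper gaps I would verify membership in the larger class by giving the explicit one-counter automaton and non-membership in the smaller class from the defining restriction (real-timeness for $\class{ROCL}/\class{DOCL}$, determinism for $\class{DOCL}/\class{OCL}$), as formalized in the cited references; the inequality languages $\{a^nb^mc\mid n\ge m>0\}$ and $\{a^nb^m\mid n\ge m>0\}$ of \Cref{fig:OCA-lang1} provide the block-counting template here. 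The $\class{OCL}\setminus\class{DOCL}$ gap is the delicate one, since it demands an \emph{inherently nondeterministic} counter language; the key observation I would exploit is that $\mathcal{L}(\mathcal{A},F)$ is by definition a \emph{union} over the whole orbit of $\ppad(F)$, so a CA can realize a union of two glider-families launched from distinct seeds inside the regular set $F$.

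The heart of the argument is verifying, for each construction, that $\mathcal{L}(\mathcal{A},F)=L$. I would do this by induction on the time step $t$: starting from the seed (e.g.\ ${}^{\omega}\bot ab\bot^{\omega}$ for \Cref{fig:OCA-lang1-a}), I would show that $G^t$ applied to the seed is a finite configuration of the form ${}^{\omega}\bot\,w_t\,\bot^{\omega}$ with $w_t$ the $t$-th word of $L$ in length order (here $w_t=a^{t+1}b^{t+1}$), and that every reachable finite configuration has this shape. Concretely this means checking two points against the rule table: that the boundary cells advance as intended each step (the left and right markers act as unit-velocity carriers, adding one symbol on each side per step), and that the interior block is preserved. \Cref{cor:bound_inc} and \Cref{CApump:lemma} then certify that the active interval grows by at most $r=1$ on each side per step, which both bounds the configuration shape and confirms that the generated words lengthen at the linear rate matching $L$.

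The main obstacle I anticipate is the exact-equality part of this orbit analysis, in particular the ``$\subseteq$'' inclusion: one must rule out \emph{spurious} words appearing at intermediate times or at the quiescent--boundary interface, where several feasible neighborhoods meet. This requires a case-by-case inspection of the tables in \Cref{fig:OCA-lang1} confirming that the projection never yields a word outside $L$ (for instance, that the generation of $a^nb^m$ with $n\ge m$ never transiently exposes a configuration reading a word with $m>n$). For the nondeterministic $\class{OCL}\setminus\class{DOCL}$ witness there is the additional, orthogonal difficulty of simultaneously establishing membership in $\class{OCL}$, CA-expressibility of the union, and non-membership in $\class{DOCL}$; the last of these is the only step that does not reduce to a direct construction and instead rests on the determinism obstruction from the cited hierarchy results.
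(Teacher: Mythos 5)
Your proposal takes essentially the same route as the paper: Prop.\ref{OCA-proposition} is proved there by the four claims that follow it, using exactly the witnesses $\{a^nb^n\}$, $\{a^nba^n\}$, $\{a^nb^mc\mid n\ge m>0\}$ and $\{a^nb^m\mid n\ge m>0\}$ with the radius-$1$ rule tables and regular seeds of Fig.\ref{fig:OCA-lang1}, citing \cite{staquet2024active} for the class separations. Your added detail --- the induction on the time step for orbit equality and the observation that the last two witnesses rely on the union over the infinitely many seeds in $\ppad(a^*abc)$ and $\ppad(a^*ab)$ --- correctly fills in verification the paper delegates to the figures, and introduces no divergence from its argument.
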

\vspace{-1.85mm}
     \begin{restatable}{claim}{voclReg}\label{claim:CAvocl} 
    There exists $L \,{\in}\,{{\class{VOCL}}\,{\setminus}\,{\class{REG}}} $ such that $  \caclassshortREG{1}{r}{\Sigma}{REG}$.
    \end{restatable}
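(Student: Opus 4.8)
The plan is to take $L = \{a^n b^n \mid n \in \mathbb{N}\}$, the textbook non-regular language, and verify the three requirements in turn: $L \notin \class{REG}$, $L \in \class{VOCL}$, and $L \in \caclassshortREG{1}{r}{\Sigma}{REG}$. Since this is exactly the separation pair $\class{REG} \subsetneq \class{VOCL}$ of \Cref{Thm:one-counter-hierarchy}, the claim is the first instance of \Cref{OCA-proposition}.

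Non-regularity is the standard pumping-lemma argument: pumping the leading $a$-block of $a^p b^p$ (for the pumping length $p$) destroys the balance and leaves $L$. For membership in $\class{VOCL}$, I would declare $a$ a counter-incrementing (call) letter and $b$ a counter-decrementing (return) letter, and use a two-state control that consumes a maximal block of $a$'s and then a maximal block of $b$'s, accepting exactly when the counter is tested zero at the end of the input; this visibly-driven automaton recognizes $L$. The formal definitions of VOCA being deferred to App.\ref{app:oca-defs}, I would cite them rather than restate them.

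For the CA witness I would use the radius-$1$ automaton $\mathcal{A} = (\{a,b,\bot\}, 1, f)$ whose local rule $f$ is the one tabulated in \Cref{fig:OCA-lang1-a}, together with the regular (indeed singleton) initial set $I = \pad(\{ab\})$, so that $\mathcal{L}(\mathcal{A}, I) \in \caclassshortREG{1}{r}{\Sigma}{REG}$. The heart of the argument is the invariant $G^t\big(\pad(ab)\big) = \pad(a^{t+1} b^{t+1})$ for all $t \in \mathbb{N}_0$, proved by induction on $t$. The base case $t=0$ is immediate, and the single step $\pad(ab) \mapsto \pad(aabb)$ is checked against the rules $\bot\bot a \mapsto a$, $\bot ab \mapsto a$, $ab\bot \mapsto b$, $b\bot\bot \mapsto b$. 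For the inductive step I would show $G(\pad(a^n b^n)) = \pad(a^{n+1} b^{n+1})$ by a case analysis over the cell neighborhoods of ${}^{\omega}\bot\, a^n b^n\, \bot^{\omega}$: the two quiescent cells flanking the block become a new $a$ (via $\bot\bot a \mapsto a$) and a new $b$ (via $b\bot\bot \mapsto b$); the interior $a$-cells are preserved by the neighborhoods $\bot aa$, $aaa$, $aab$; and the interior $b$-cells by $abb$, $bbb$, $bb\bot$. Since the orbit of a deterministic CA is a single chain, the invariant yields $\mathcal{L}(\mathcal{A}, I) = \{a^{t+1} b^{t+1} \mid t \in \mathbb{N}_0\} = L$.

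The routine parts are the pumping and VOCA constructions; the only place demanding care is the inductive step, where the boundary neighborhoods (the new flanking cells and the two cells straddling the $a/b$ interface) must each be matched to the correct table entry, and the degenerate case $n=1$ must be handled separately from $n \geq 2$ since it lacks interior cells and instead triggers the two interface rules $\bot ab \mapsto a$ and $ab\bot \mapsto b$. I expect this boundary bookkeeping, rather than any conceptual difficulty, to be the main obstacle.
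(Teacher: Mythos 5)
Your proposal is correct and follows essentially the same route as the paper: the same witness language $\{a^n b^n \mid n\in\mathbb{N}\}$, the same radius-$1$ CA with the rule table of Fig.\ref{fig:OCA-lang1-a}, and the same singleton initial set $\ppad(\{ab\})$. The paper merely asserts $\mathcal{L}(\mathcal{A},I)=L$ by pointing at the figure, whereas you spell out the inductive invariant $G^t(\pad(ab))=\pad(a^{t+1}b^{t+1})$ and the boundary/interface case analysis (including the $n=1$ degenerate case), all of which checks out against the table.
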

\vspace{-2.85mm}
    \begin{proof}
    The language $L\,{=}\,\{a^n b^n \,|\, n\,{\in}\,\mathbb{N}\} \,{\in}\, \class{VOCL}\setminus\class{REG}$. 
    Let $\mathcal{A} \in \caclassshort{1}{1}{\{a,b,\bot\}}$ and $I\,{=}\,\ppad(\{ab\})$.
    Fig.\ref{fig:OCA-lang1-a} shows $f$ such that $\mathcal{L}(\mathcal{A},I) \,{=}\, L$. Thus, $L\,{\in}\, \caclassshortREG{1}{1}{\{a,b,\bot\}}{REG}$.
    \end{proof}
    \vspace{-1.95mm}
     \begin{claim}\label{claim:CArocl}
        There exists $L\, {\in}\, \class{ROCL}\setminus\class{VOCL} $ such that $ \caclassshortREG{1}{r}{\Sigma}{REG}$. 
    \end{claim}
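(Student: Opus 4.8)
The plan is to take the witness already drawn in Fig.~\ref{fig:OCA-lang1-b}, namely $L \,{=}\, \{a^n b a^n \mid n \,{\in}\, \mathbb{N}\}$, and to discharge two independent obligations: the language-theoretic separation $L \,{\in}\, \class{ROCL}\setminus\class{VOCL}$, and the CA-expressibility. For the latter I would follow Claim~\ref{claim:CAvocl} in spirit: set $\mathcal{A} \,{\in}\, \caclassshort{1}{1}{\{a,b,\bot\}}$ with the radius-$1$ rule $f$ tabulated in Fig.~\ref{fig:OCA-lang1-b}, take $I \,{=}\, \ppad(\{aba\})$, and verify that $\orbitc{I}$ is exactly $\{\,{}^{\omega}\mybot\, a^n b a^n\, \mybot^{\omega} \mid n \,{\in}\, \mathbb{N}\}$, so that $\mathcal{L}(\mathcal{A},I) \,{=}\, L$ and hence $L \,{\in}\, \caclassshortREG{1}{1}{\{a,b,\bot\}}{REG}$. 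In glider terms this is the two-captains construction with a frozen $b$ at the centre: a stationary $a$-bulk, a left-moving $a$-front of velocity $-1$ on the left and a right-moving $a$-front of velocity $+1$ on the right, so each step appends one $a$ to each block while $f(aba)\,{=}\,b$ keeps the centre fixed.

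For $L \,{\in}\, \class{ROCL}$ I would exhibit a real-time deterministic one-counter automaton that increments the counter on each $a$ of the first block, reads the unique $b$ to switch into a descending control state, decrements on each $a$ of the second block, and accepts iff the counter is zero exactly when the input ends. Every transition consumes one input letter and the automaton is deterministic, so it is real-time; correctness is immediate, since the counter returns to zero at the end iff the two $a$-blocks have equal length.

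The one genuinely new step, and the part I expect to be the crux, is $L \,{\notin}\, \class{VOCL}$. Here I would exploit the defining restriction of a visible model: $\Sigma$ is partitioned into push, pop and internal letters, so the counter value after any prefix is a fixed function of that prefix, and the letter $a$ receives one fixed class. If $a$ is internal or a pop letter, then along every word the $a$'s never drive the counter upward (a pop from an empty counter cannot raise it), so the counter stays bounded by a constant coming from the single $b$; a bounded counter collapses the machine to a finite automaton, forcing $L \,{\in}\, \class{REG}$, which is false by the pumping lemma. The remaining, delicate case is $a$ a push letter: then the counter only grows while reading $a$'s, so throughout the second block it already exceeds $n \,{\ge}\, 1$ and no zero-test can fire, whence the control acts as a plain finite automaton on the second block. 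A pumping/fooling-set argument on that block then contradicts the requirement that exactly the balanced words $a^n b a^n$ be accepted. Pinning down this last case---that the counter is irrelevant in the second block because it never returns to zero---is the technical heart of the argument; equivalently, one may invoke that $a^n b a^n$ is the canonical deterministic one-counter language whose matching structure is not visible, since $a$ is forced to serve as both a call and a return.
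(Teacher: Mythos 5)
Your proposal is correct and follows essentially the same route as the paper: the same witness $L=\{a^nba^n\mid n\in\mathbb{N}\}$, the same radius-$1$ CA with rule $f$ from Fig.~\ref{fig:OCA-lang1-b} and $I=\ppad(\{aba\})$. The only difference is that you prove the separation $L\in\class{ROCL}\setminus\class{VOCL}$ from scratch (correctly, and in the push-letter case even more simply than needed, since acceptance requires counter zero), whereas the paper treats this membership as a known fact cited from the OCA-hierarchy literature.
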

\vspace{-2.85mm}
    \begin{proof}
        The language $L\,{=}\,\{a^n b a ^n |\, n\,{\in}\,\mathbb{N}\} \,{\in}\, \class{ROCL}{\setminus}\class{VOCL}$. 
    Let $\mathcal{A} \, {\in}\, \caclassshort{1}{1}{\{a,b,\bot\}}$\\
    and $I \, {=}\,  \ppad(\{aba\})$. Fig.\ref{fig:OCA-lang1-b} shows $f$ 
    s.t. $\mathcal{L}(\mathcal{A},I)\,{=}\, L$. Thus, $L\,{\in}\, \caclassshortREG{1}{1}{\{a,b,\bot\}}{REG}$.
    \end{proof}
\vspace{-1.95mm}
    \begin{claim}\label{claim:CAdocl}
        There exists $L \,{\in}\, \class{DOCL}{\setminus}\class{ROCL}$ such that $ \caclassshortREG{1}{r}{\Sigma}{REG}$.
    \end{claim}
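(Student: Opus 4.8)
The plan is to instantiate the same template used for Claims~\ref{claim:CAvocl} and~\ref{claim:CArocl}: exhibit one concrete regular-initialized CA whose generated language is a known member of $\class{DOCL}\,{\setminus}\,\class{ROCL}$. I take the witness $L \,{=}\, \{a^n b^m c \mid n \,{\geq}\, m \,{>}\, 0\}$, whose membership in $\class{DOCL}\,{\setminus}\,\class{ROCL}$ is inherited from the strict hierarchy of Thm.\ref{Thm:one-counter-hierarchy} (it is deterministic one-counter but not real-time); so the only thing left to establish is CA-expressibility. Concretely I would fix $\mathcal{A}\,{\in}\,\caclassshort{1}{1}{\{a,b,c,\bot\}}$ of radius $1$, take the regular seed language $F \,{=}\, a^{+}bc$, and set $I \,{=}\, \ppad(F)$, so that each seed is $\pad(a^{k}bc)$ for some $k\,{\geq}\,1$.

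The rule $f$ (tabulated in Fig.\ref{fig:OCA-lang1-c}) is read off three independent glider behaviours. First, a left-growing block of $a$'s: the leftmost $\bot$ adjacent to the block becomes $a$ ($\bot\bot a \,{\mapsto}\, a$), while all interior and boundary $a$'s persist ($\bot aa, aaa, aab \,{\mapsto}\, a$). Second, a single right-moving $c$-glider of velocity $+1$ that leaves a trail of $b$'s: the cell currently holding $c$ turns into $b$ ($bc\bot \,{\mapsto}\, b$), while the $\bot$ to its right becomes the new $c$ ($c\bot\bot \,{\mapsto}\, c$). Third, a fixed $a/b$ boundary and a persistent $b$-region ($abb, bbb, bbc \,{\mapsto}\, b$). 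The net effect of one global step on $\pad(a^{n}b^{m}c)$ is $n\,{\mapsto}\,n{+}1$ and $m\,{\mapsto}\,m{+}1$, with $c$ advancing one cell to the right; hence the quantity $n-m$ is invariant along every orbit.

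Verification then splits into the two usual inclusions, both by a short induction on the time step. For soundness I would show that the orbit of $\pad(a^{k}bc)$ is exactly $\{\pad(a^{k+i}b^{1+i}c)\mid i\,{\geq}\,0\}$; the inductive step is a finite case check that applying $f$ to $\pad(a^{k+i}b^{1+i}c)$ yields $\pad(a^{k+1+i}b^{2+i}c)$, and each such configuration projects to a word of $L$ since $n-m\,{=}\,k-1\,{\geq}\,0$ and $m\,{\geq}\,1$. For completeness, given $a^{n}b^{m}c\,{\in}\,L$ I would pick the seed with $k\,{=}\,n-m+1\,{\geq}\,1$ and apply the global rule $i\,{=}\,m-1\,{\geq}\,0$ times, so that $G^{i}(\pad(a^{k}bc))\,{=}\,\pad(a^{n}b^{m}c)\,{\in}\,\orbitc{I}$. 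Together these give $\mathcal{L}(\mathcal{A},I)\,{=}\,L$, hence $L\,{\in}\,\caclassshortREG{1}{1}{\{a,b,c,\bot\}}{REG}$.

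The main obstacle is not any single inclusion but the fact that one radius-$1$ local rule must drive all seeds simultaneously and produce no spurious behaviour. I would therefore first enumerate the feasible neighbourhoods that actually occur along orbits of $a^{+}bc$ and check that $f$ restricted to them is a well-defined, conflict-free partial function (no neighbourhood is assigned two values). Special attention goes to the degenerate seed $abc$, whose single $a$-cell produces the non-generic neighbourhood $\bot ab$, and to the single-$b$ stage, which produces $abc$ and $bc\bot$; these must agree with the generic interior rules and must never spawn an extra $a$ inside the $b$-block or an extra $c$. Establishing this feasibility-closure, that the induction never leaves the tabulated domain, is where essentially all the care lies; after it, both inclusions reduce to routine finite case analysis.
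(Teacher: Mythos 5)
Your proposal is correct and takes essentially the same route as the paper: the same witness $L=\{a^nb^mc\mid n\ge m>0\}$, the same seed $I=\ppad(a^{+}bc)$, the same radius-$1$ rule of Fig.\ref{fig:OCA-lang1-c}, with the update $a^nb^mc\mapsto a^{n+1}b^{m+1}c$ preserving $n-m$ so that completeness follows by choosing the seed $a^{n-m+1}bc$. The paper states this only by pointing at the figure; your added induction and feasibility-closure check are a more explicit write-up of the same argument, not a different one.
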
 
\vspace{-2.85mm}
    \begin{proof}
     $L\,{=}\,\{ a^n b^m c \,|\, n\,{\geq}\,m \,{>}\,0\} \,{\in}\,\class{DOCL}{\setminus}\class{ROCL}$. 
    Let $\mathcal{A}\,{\in}\, \caclassshort{1}{1}{\{a,b,c,\bot\}}$ and $I =\ppad(\{ {{a^*} a b c}\})$.
    Fig.\ref{fig:OCA-lang1-c} shows $f$ such that $\mathcal{L}(\mathcal{A},I){=} L$.
    Thus, $L\,{\in}\,\caclassshortREG{1}{1}{\{a,b,c,{\bot }\}}{REG}$
    \end{proof}
    \vspace{-1.85mm}
     \begin{claim}
            There exists $L\,{\in}\,\class{OCL}{\setminus}\allowbreak\class{DOCL}$ such that $ \caclassshortREG{1}{r}{\Sigma}{REG}$.
    \end{claim}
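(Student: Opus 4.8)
The plan is to make concrete the separator guaranteed by Thm.\ref{Thm:one-counter-hierarchy} and show it is generated by a regular-initialized CA. I would take the canonical witness
\[
  L \;=\; \{a^n b^n c \mid n\,{\geq}\,1\}\ \cup\ \{a^n b^{2n} d \mid n\,{\geq}\,1\}.
\]
Membership $L\,{\in}\,\class{OCL}$ is immediate: a nondeterministic one-counter automaton guesses the mode on its first move, then in the first branch increments once per $a$, decrements once per $b$, and accepts on an empty counter immediately followed by $c$; in the second branch it adds two to the counter per $a$, decrements once per $b$, and accepts on an empty counter followed by $d$. For $L\,{\notin}\,\class{DOCL}$ I would use $\class{DOCL}\,{\subseteq}\,\class{DCFL}$ (the deterministic context-free languages) together with the fact that $L$ is not deterministic context-free: while scanning the common $a^nb^m$ phase a deterministic counter (or stack) is forced to commit to testing a single linear relation between the number of $a$'s and $b$'s, whereas the distinguishing suffix $c$ versus $d$ is revealed only afterwards; hence no deterministic machine can verify both $m\,{=}\,n$ and $m\,{=}\,2n$. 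This is the standard non-determinism of $\{a^nb^n\}\cup\{a^nb^{2n}\}$, preserved under the endmarkers.

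For the generative direction I would fix $\Sigma\,{=}\,\{a,b,c,d,\bot\}$ and radius $r\,{=}\,2$, and seed the system with the two length-one instances, $F\,{=}\,\{abc,\,abbd\}$, which is finite and therefore regular. (Radius $2$ is forced: by Cor.\ref{CApump:lemma} the width can grow by at most $2r$ per step, while the mode-$d$ words $a^nb^{2n}d$ have width $3n{+}1$, increasing by $3$ per step.) The rule $f$ is built from gliders exactly as in the earlier claims. In both modes a speed-$1$ left glider extends the $a$-block by one cell per step, while the $a/b$ boundary and the interiors of the $a$- and $b$-blocks are stationary; the two modes differ only at the right end, where $c$ is a speed-$1$ front depositing one $b$ per step and $d$ is a speed-$2$ front depositing two $b$'s per step. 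A direct induction on $n$ then shows that the orbit of $\ppad(abc)$ consists exactly of the configurations $\ppad(a^nb^nc)$ and the orbit of $\ppad(abbd)$ consists exactly of $\ppad(a^nb^{2n}d)$, whence $\mathcal{L}(\mathcal{A},F)\,{=}\,L$ and $L\,{\in}\,\caclassshortREG{1}{2}{\Sigma}{REG}$.

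The step I expect to be the main obstacle is showing that a single local rule can drive both growth rates at once, i.e. that $f$ is well defined (conflict-free) on $\FN_{\mathcal A,I}$ and that no configuration outside the two clean families is ever reached. The resolving observation is that the two endmarkers are genuine letters of $L$ sitting precisely where the modes diverge: every neighborhood whose image differs between the modes lies within radius $2$ of the right front and therefore contains $c$ or $d$, and since $c\,{\neq}\,d$ these neighborhoods are automatically disjoint, while all shared neighborhoods---the left $a$-front, the $a/b$ boundary, and the two interiors---are pure $\{a,b,\bot\}$ patterns updated identically in both modes. Thus $f$ is consistent as a partial function on the feasible neighborhoods. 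Finally, because each block is rewritten into the next block in a single synchronous step (the front advances and the freshly vacated cells become $b$, leaving no transient symbols), every reachable configuration already has the exact shape $a^ib^ic$ or $a^ib^{2i}d$; hence the orbits contain no words outside $L$, and the small-$n$ boundary neighborhoods near the seeds need only a routine finite check.
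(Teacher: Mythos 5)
Your proposal is correct, but it takes a genuinely different route from the paper. The paper instantiates the separator with the witness it already catalogued for this level of the hierarchy, $L=\{a^nb^m\mid n\,{\geq}\,m\,{>}\,0\}$, and absorbs the OCA's nondeterminism into an \emph{infinite} regular seed set $I=\ppad(\{a^*ab\})$: each seed $a^kb$ evolves deterministically (radius $1$, the rule of Fig.\ref{fig:OCA-lang1-d}) through $a^{k+j}b^{1+j}$, and the union over $k$ sweeps out all of $L$. You instead pick a two-mode witness $\{a^nb^nc\}\cup\{a^nb^{2n}d\}$ and absorb the nondeterminism into a \emph{finite} seed set $\{abc,abbd\}$, paying for the faster-growing mode with radius $2$; your consistency argument (every mode-distinguishing neighborhood lies within radius $2$ of the right front and hence contains $c$ or $d$, while all marker-free neighborhoods update identically) is the right key observation and checks out on the concrete rule table. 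Both constructions exploit the same underlying principle --- the initial set supplies the choice a deterministic CA cannot make --- but the paper's version buys a smaller radius and reuses its stock example, whereas yours buys a finite initial language and makes the ``where does the nondeterminism go'' mechanism more visible. Two small points to tighten: your claim that $L\,{\notin}\,\class{DOCL}$ should be routed explicitly through $\class{DOCL}\subseteq\class{DCFL}$ and closure of DCFL under right quotient by a regular language (stripping $\{c,d\}$ reduces to the classical non-DCFL language $\{a^nb^n\}\cup\{a^nb^{2n}\}$), rather than the informal ``forced to commit'' phrasing; and the mode-$2$ OCA needs an $\varepsilon$-increment to add two per $a$, which the paper's OCA definition does permit.
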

   \vspace{-2.85mm}
    \begin{proof}
    The language $L\,{=}\,\{a^nb^m \,|\, n{\geq}\, m{>}0\}\,{\in}\,\class{OCL}{\setminus}\class{DOCL}$. 
    Let $\mathcal{A}\,{\in}\,\caclassshort{1}{1}{\{a,b,\bot\}}$ and $I\,{=}\,\ppad(\{a^* ab\})$. Fig.\ref{fig:OCA-lang1-d} shows $f$
    s.t.
    $\mathcal{L}(\mathcal{A},I)\,{=}\, L$. Thus, $L\,{\in}\, \caclassshortREG{1}{1}{\{a,b,{\bot}\}}{REG}$.
    \end{proof}
    
    To show that cellular automata can express languages beyond the OCA hierarchy, we introduced the notion of gliders, defined next.

    \section{Gliders Perspective}\label{subsec:gliders}
    In this section, we define \term{gliders} within regular-initialized CA and formalize their interactions and semantic role.     
    Gliders offer a symbolic structure that reflects the emergent behavior of CA dynamics, especially when the rule function is viewed as being shaped by dominant or persistent symbolic patterns.
    Recall from the introduction that our definition of \emph{gliders} differ from the conventional one. Specifically, we regard a glider as a one-cell (\emph{atomic}) entity moving at a fixed velocity, as we formally define below.
    
    \subsection{Formal Definition of Gliders}
     Given a neighborhood ${w \,{\in}\, {\SigmaWindow}}\!$, we index it as $w \,{=}\, \sig{{-}r} \ldots \sig{{-}1}\sig{0}\sig{1} \ldots \sig{r}$.

     \vspace{-1mm}
     
    \paragraph{Notation.}
    For any $\sigma \,{\in}\, \Sigma$ and $i \,{\in}\, [-r, r]$, we define:
    \begin{center}
        $\LocAt{i}{\sigma} \,{:=}\, \{ w \,{\in}\,\SigmaWindow \mid \sig{-i} \,{=}\, \sigma \}$
    \end{center}
     That is, $\LocAt{i}{\sigma}$ is the set of neighborhoods where the symbol $\sigma$ appears in the $-i$ position.\footnote{This corresponds to the classical \emph{cylinder set} notion, typically denoted $[\sigma]_{-i}$.} 
     The parameters $\Sigma$ and $r$ are those of the CA $\mathcal{A} \,{\in}\, \caclassshort{1}{r}{\Sigma}$ currently in context, and are omitted from the notation for brevity.     

    This shorthand will be used extensively in what follows.

    A glider $\g{\sigma}{i}$ is a pair $(\sigma,i)$, where $\sigma$ denotes its \term{value} and $i\,{\in}\,\mathbb{Z}$ denotes its \term{velocity}. 
    The velocity $i$ determines both its \term{speed}, given by $|i|$, and its \term{direction}, given by $\sign(i)$. Positive values indicate motion to the right, negative values indicate motion to the left, and $i\,{=}\,0$ corresponds to no horizontal movement, i.e., the object is static. The formal definition follows.

    \begin{definition}[Glider]\label{def:glider}
    Let ${\mathcal{A}\,{\in}\, \caclassshort{1}{r}{\Sigma}}$.
    For $i\,{\in}\,[{-}r,r]$ and $\sigma\,{\in}\,\Sigma$, we say that $\g{\sigma}{i}$ is a \emph{glider} in $\mathcal{A}$ if there exists $w\,{\in}\,\LocAt{i}{\sigma}$ such that ${{f(w)}}\,{{=}}\,{\sigma}$.
    \end{definition}
    Thus, given $\mathcal{A}\,{\in}\, \caclassshort{1}{r}{\Sigma}$, the gliders of $\mathcal{A}$ are subsumed in the following set: ${\{{\g{\sigma}{i}} \mid { i\,{\in}\,{[{-}r,r]},\ \allowbreak \sigma\in\Sigma}}\}$. 
    Consequently, the total number of distinct gliders admitted by $\mathcal{A}\,{\in}\, \caclassshort{1}{r}{\Sigma}$ is at most $|\Sigma|\,{\cdot}\,(2r{+}1)$. 
 
    Next, we define various types of gliders, the interactions that may arise between them, and the resulting effects on the system, together with behaviors that may or may not be exhibited.

 \begin{definition}[Persistent Glider]
   Let ${\g{\sigma}{i}}$ be a glider in $\mathcal{A}\,{\in}\, \caclassshort{1}{r}{\Sigma}$. We say that ${\g{\sigma}{i}}$ is \term{persistent} if ${f(w)}\,{=}\, \sigma$ for \overlapuline[0.30ex]{0.4pt}{every} $w \,{\in}\,\LocAt{i}{\sigma}$. We say that ${\g{\sigma}{i}}$ is \term{non-persistent} if ${f(w)}\,{=}\, \sigma$ for \overlapuline[0.30ex]{0.4pt}{some but not all} of $w \,{\in}\,\LocAt{i}{\sigma}$.
   
 \end{definition}
    Note that, if ${\g{\sigma}{i}}$ is \emph{persistent} in $\mathcal{A}$ then any configuration $c\,{\in}\,\SigmaZWindow$ satisfies that ${\langle z\rangle_{\scriptstyle c}\,{=}\,\sigma}$ implies ${\langle z{+}i\rangle_{\scriptstyle{G(c)}}\,{=}\,\sigma} $ for any $ z\,{\in}\,\mathbb{Z}$.
\paragraph{Spreading State.}
    Let $ m, n \,{\in}\, \mathbb{Z} $ for $ m \,{<}\, n $, $\sigma\,{\in}\, \Sigma$, and $[m,n] \subseteq  {[{-}r,r]}$.
    Assume that $\mathcal{A}\,{\in}\, \caclassshort{1}{r}{\Sigma}$ implements a set of \emph{persistent gliders}, $\{\g{\sigma}{i} \mid i\in [m,n]\}$, i.e., all \emph{persistent gliders} of value ${\sigma}$ and velocity $i$ for any $i\,{\in}\,[m,n]$.
    Then we say that $\mathcal{A}$ has a \emph{$\sigma$-spreading state}.
     We consider the following types of $\sigma$-spreading state:
    \begin{enumerate}[i.,topsep=1.5mm,itemsep=1.5mm]
        \item 
         $m\,{=}\, 0$: Then $\mathcal{A}$ has a \term{right}-spreading state (at speed $n$) (Fig.\ref{fig:right-spreading-state}).
        \item 
        \makebox[\widthof{$m$}][l]{$n$}$\,{=}\, 0$: Then $\mathcal{A}$ has a \term{left}-spreading state (at speed $|m|$) (Fig.\ref{fig:left-spreading-state}).
        \item $0\,{\in}\,(m,n)$: Then $\mathcal{A}$ has a \term{two-way} spreading-state (Fig.\ref{fig:two-way-spreading-state}).
        \item \makebox[\widthof{$0\,{\in}\,(m,n)$}][l]{$0\,{\notin}\,[m,n]$}:
       Then $\mathcal{A}$ has a \term{shift}-spreading state (Fig.\ref{fig:shift-spreading-state}). 
    \end{enumerate} 
    Intuitively, a $\sigma$-spreading state enables a prediction of the orbit behavior of a $c\,{\in}\,\SigmaZWindow$, based solely on the fact that $\langle i \rangle_{\scriptstyle c} \,{=}\,\sigma$. 
    E.g., having a two-way $\sigma$-spreading state (resp. right / left spreading), means that eventually every cell of $c$ in positions $\mathbb{Z}$ (resp. every cell in position $z \,{\geq}\,i$ or $z \,{\leq}\, i$) would be equal to $\sigma$.

    \begin{figure}[hb!]
    \centering
    \vspace{-7.5mm}
  \subfloat[Right-spread (speed 2)\label{fig:right-spreading-state}]{  
     \begin{tikzpicture}[scale=0.395, every node/.style={scale=0.395}]
        \foreach \i in {1,2,3,4} {
            \fill[lightgray!50] (0, -\i+1) rectangle ++(\i+\i-1, 1);
            \fill[pattern=north west lines, pattern color=lightgray] (0, -\i+1) rectangle ++(\i+\i-1, 1);
            \node at (- 1.5, 1.5-\i) {\huge ${\ldots}$};
            \node at (8.5,  1.5-\i) {\huge ${\ldots}$};
        }
        \draw[step=1cm, darkgray, thin] (-1, 1) grid (8, -3);
        \blackonwhiteScaled{1.95}{-1.35}{\gfig{\boldsymbol{\sigma}}{\boldsymbol{{+}1}}}{\huge}
        \draw[thick,black,->] (0.5,0.5) -- node[below right=10pt, yshift=-5pt, xshift=-11pt] {~} (1.25,-0.5); 
        \draw[thick,black,->]
        (0.5,0.5) -- node[below=10pt, yshift=4pt] {~} (0.5,-0.5); 
         \blackonwhiteScaled{0}{-1.35}{\gfig{\boldsymbol{\sigma}}{\boldsymbol{0}}}{\huge}
        \draw[thick,black,->]
        (0.5,0.5) -- node[below right=10pt, xshift=30pt, yshift=-5pt] {~} (2.5,-0.5); 
        \blackonwhiteScaled{4.15}{-1.35}{\gfig{\boldsymbol{\sigma}}{\boldsymbol{{+}2}}}{\huge}
        \draw[step=1cm, black, semithick, ->] (-2.25,1) -> (-2.25,-3); 
        \node[rotate=90] at (-2.75, -1) {\Huge {Time}};
        \end{tikzpicture}
    }
     \hspace{4mm}
   \subfloat[Two-way-spread\label{fig:two-way-spreading-state}]{\small 
     \begin{tikzpicture}[scale=0.395, every node/.style={scale=0.395}]
        \foreach \i in {1,2,3,4} {
            \fill[lightgray!50] (11, -\i+1) rectangle ++(-\i, 1);
            \fill[lightgray!50] (10, -\i+1) rectangle ++(\i, 1);
            \fill[pattern=north west lines, pattern color=lightgray](11, -\i+1) rectangle ++(-\i, 1);
            \fill[pattern=north west lines, pattern color=lightgray](10, -\i+1) rectangle ++(\i, 1);
            \node at (5.5, 1.5-\i) {\huge ${\ldots}$};
            \node at (15.5,  1.5-\i) {\huge ${\ldots}$};
        }
        \draw[step=1cm, darkgray, thin] (6, 1) grid (15, -3);
        \draw[thick,black,->]
        (10.5,0.5) -- node[below left=12pt, yshift=-1pt] {~} (9.5,-0.5);
        \draw[thick,black,->]
        (10.5,0.5) -- node[below=12pt, yshift=2pt] {~} (10.5,-0.5); 
        \draw[thick,black,->]
        (10.5,0.5) -- node[below right=12pt, yshift=-1pt] {~} (11.5,-0.5); 
         \blackonwhiteScaled{8.45}{-1.35}{\gfig{\boldsymbol{\sigma}}{\boldsymbol{{-}1}}}{\huge}

         \blackonwhiteScaled{10.5}{-1.35}{\gfig{\boldsymbol{\sigma}}{\boldsymbol{0}}}{\huge}

         \blackonwhiteScaled{12.55}{-1.35}{\gfig{\boldsymbol{\sigma}}{\boldsymbol{{+}1}}}{\huge}
        \draw[step=1cm, black, semithick, ->] (4.75,1) -> (4.75,-3); 
        \node[rotate=90] at (4.25, -1) {\Huge {Time}};
        \end{tikzpicture}
    }

     \hspace{0mm} 
     \subfloat[Left-spread (speed 1)\label{fig:left-spreading-state}]{\small 
     \begin{tikzpicture}[scale=0.395, every node/.style={scale=0.395}]
        \foreach \i in {1,2,3,4} {
            \fill[lightgray!50] (11, -\i+1) rectangle ++(-\i, 1);
             \fill[pattern=north west lines, pattern color=lightgray](11, -\i+1) rectangle ++(-\i, 1);
            
            \node at (3.5, 1.5-\i) {\huge ${\ldots}$};
            \node at (13.5,  1.5-\i) {\huge ${\ldots}$};
        }
         \draw[step=1cm, darkgray, thin] (4, 1) grid (13, -3);
        \draw[thick,black,->]
        (10.5,0.5) -- node[below left=10pt, yshift=-4.5pt] {~} (9.5,-0.5);
         \blackonwhiteScaled{8.45}{-1.35}{\gfig{\boldsymbol{\sigma}}{\boldsymbol{{-}1}}}{\huge}

         \blackonwhiteScaled{10.5}{-1.35}{\gfig{\boldsymbol{\sigma}}{\boldsymbol{0}}}{\huge}
        \draw[thick,black,->]
        (10.5,0.5) -- node[below =10pt, yshift=4pt] {~} (10.5,-0.5);
        \draw[step=1cm, black, semithick, ->] (2.75,1) -> (2.75,-3); 
        \node[rotate=90] at (2.25, -1) {\Huge {Time}};
        \end{tikzpicture}
    } \hspace{4mm}
        \subfloat[Shift-spread\label{fig:shift-spreading-state}]{\small 
     \begin{tikzpicture}[scale=0.395, every node/.style={scale=0.395}]
        \foreach \i in {1,2,3,4} {
            \node at (5.5, 1.5-\i) {\huge ${\ldots}$};
            \node at (15.5,  1.5-\i) {\huge ${\ldots}$};
        }
        \fill[lightgray!50] (7, 0) rectangle ++(1, 1);
         \fill[pattern=north west lines, pattern color=lightgray](7, 0) rectangle ++(1, 1);
        \fill[lightgray!50] (8, -1) rectangle ++(2, 1);
        \fill[pattern=north west lines, pattern color=lightgray](8, -1) rectangle ++(2, 1);
        \fill[lightgray!50] (9, -2) rectangle ++(3, 1);
        \fill[pattern=north west lines, pattern color=lightgray](9, -2) rectangle ++(3, 1);
        \fill[lightgray!50] (10, -3) rectangle ++(4, 1);
        \fill[pattern=north west lines, pattern color=lightgray](10, -3) rectangle ++(4, 1);
        \draw[step=1cm, darkgray, thin] (6, 1) grid (15, -3);
        \draw[thick,black,->]
        (7.5,0.5) -- node[below =10pt, yshift=4pt, xshift=10pt] {~} (8.5,-0.5);
        \draw[thick,black,->]
        (7.5,0.5) -- node[below=10pt, yshift=4pt, xshift=55pt] {~} (9.5,-0.5);
         \blackonwhiteScaled{11.75}{-1.35}{\gfig{\boldsymbol{\sigma}}{\boldsymbol{{+}2}}}{\huge}

         \blackonwhiteScaled{9.35}{-1.35}{\gfig{\boldsymbol{\sigma}}{\boldsymbol{+1}}}{\huge}
         \draw[step=1cm, black, semithick, ->] (4.75,1) -> (4.75,-3); 
        \node[rotate=90] at (4.25, -1) {\Huge {Time}};
        \end{tikzpicture}
    }
    
    \caption{Examples of spreading state behavior} %
    \label{fig:spreading-state}
    \vspace{-10mm}
    \end{figure}
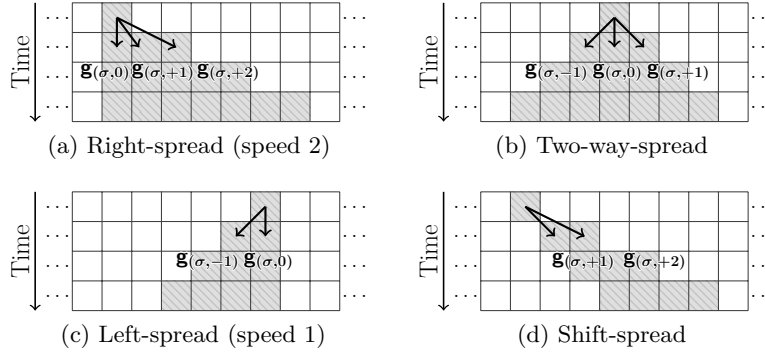

\subsubsection{Coexistence of Persistent Gliders}\label{sec:collision}
Not all persistent gliders can coexist.
Since each glider is defined by a value and velocity, four pairwise cases arise.
Let $\mathcal{A}\,{\in}\,\caclassshort{1}{r}{\Sigma}$ and let $\g{\sigma}{i}, \g{\tau}{j}$ be two persistent gliders in $\mathcal{A}$.
\paragraph{Same velocity.} If $i\,{=}\,j$, the gliders never collide. If, in addition, $\sigma\,{=}\,\tau$, then they are identical. If $\sigma\,{\neq}\,\tau$, then $\LocAt{i}{\sigma}\cap\LocAt{i}{\tau} \,{=}\,\emptyset$, as every cell admits a single value, so no conflict arises. (See Figs.\ref{fig:colision-new-a}, \ref{fig:colision-new-c})
\paragraph{Different velocities, same value.}\label{item:persist-2} 
        If $i\,{\neq}\, j$ and $\sigma\,{=}\,\tau$, then the gliders may \emph{collide}, but agree on the resulting value. 
       That is, for any $w\,{\in}\,\LocAt{i}{\sigma}\,{\cup}\, \LocAt{j}{\sigma}$, the rule function satisfies both by assigning $f(w)\,{=}\,\sigma$. Hence, no conflict arises. (Fig.\ref{fig:colision-new-b})
\paragraph{Different velocities, different values.} 
    If ${i\,{\neq}\, j}$ and ${\sigma\, {\neq}\, {\tau}}$, then $\LocAt{i}{\sigma}  \cap   \LocAt{j}{\tau}  \,{\neq}\, \emptyset$. Any $w\,{\in}\,\LocAt{i}{\sigma}  \cap   \LocAt{j}{\tau}$ must satisfy both ${f(w)\,{=}\,\sigma}$ and ${f(w)} \,{=} \,{\tau}$. However, $\sigma\,{\neq}\,\tau$. Since a single cell cannot hold multiple values at once, this leads to contradiction. Thus such persistent gliders cannot coexists. (Fig.\ref{fig:colision-new-d})

    \begin{figure}[ht!]
    \centering
     \subfloat[Same Value, Same Velocity \label{fig:colision-new-a}]{
        \scalebox{0.39}{
     \begin{tikzpicture}
        \foreach \shift in {10} {
        \foreach \i in {1,2,3} {
           \fill[lightgray!50] (-1+\i, -\i+1) rectangle ++(1, 1);

            \fill[lightgray!50] (2+\i, -\i+1) rectangle ++(1, 1);
            \node at (- 2.75, 1.5-\i) {\huge ${\ldots}$};
            \node at (17.75-\shift,  1.5-\i) {\huge ${\ldots}$};
        }
        \draw[step=1cm, black, thick] (8-\shift, 1) grid (17-\shift, -2);
        }
        \foreach \j in {0,1}{
        \pgfmathparse{
        ifthenelse(\j==0,"black",
        ifthenelse(\j==1,"gray",
        "lightgray"))}
        \edef\mycolor{\pgfmathresult}
          \foreach \i in {0.5+\j, 3.5+\j} {
            \draw[ultra thick,\mycolor, dashed, ->, shorten >=3pt, shorten <=2pt] (\i,0.5-\j) -- node[below right] {~} (\i+1,-0.5-\j);
        }}
        \draw[step=1cm, black, very thick, ->] (-3.65,1) -> (-3.65,-2); 
        \node[rotate=90] at (-4.2, -0.7) {\Huge {Time}};
        \end{tikzpicture}
    }}
\hspace{5mm}
\subfloat[Different Values, Same Velocity \label{fig:colision-new-c}]{
 \scalebox{0.39}{
     \begin{tikzpicture}
     \foreach \shift in {10} {
        \foreach \i in {1,2,3} {
           \fill[lightgray!50] (-1+\i, -\i+1) rectangle ++(1, 1);

            \fill[gray] (2+\i, -\i+1) rectangle ++(1, 1);
            \node at (- 2.75, 1.5-\i) {\huge ${\ldots}$};
            \node at (18.75-\shift,  1.5-\i) {\huge ${\ldots}$};
        }
        \draw[step=1cm, black, thick] (8-\shift, 1) grid (18-\shift, -2);
        }
        \foreach \i in {1} {
            \draw[ultra thick,black, dashed, ->, shorten >=3pt, shorten <=2pt] (-0.5+\i,1.5-\i) -- node[below right] {~} (0.5+\i,0.5-\i);
            \draw[ultra thick,black, dashed,->, shorten >=3pt, shorten <=2pt]
        (2.5+\i,1.5-\i) -- node[below right=12pt, yshift=-1pt, xshift=1pt] {~} (3.5+\i,0.5-\i);
        }
        \foreach \i in {2} {
            \draw[ultra thick,gray, dashed,->, shorten >=3pt, shorten <=2pt] (-0.5+\i,1.5-\i) -- node[below right] {~} (0.5+\i,0.5-\i);
            \draw[ultra thick,gray!50, dashed,->, shorten >=3pt, shorten <=2pt]
        (2.5+\i,1.5-\i) -- node[below right=12pt, yshift=-1pt, xshift=1pt] {~} (3.5+\i,0.5-\i);
        }
        
        \draw[step=1cm, black, very thick, ->] (-3.65,1) -> (-3.65,-2); 
        \node[rotate=90] at (-4.2, -0.7) {\Huge {Time}};
        \end{tikzpicture}
    }}

   \subfloat[Same Value, Different Velocities \label{fig:colision-new-b}]{ \scalebox{0.39}{
    \begin{tikzpicture}
        \foreach \i in {1,2} {
            \fill[lightgray!50] (6+\i, -\i+1) rectangle ++(1, 1);

            \fill[lightgray!50] (9+\i, -\i+1) rectangle ++(1, 1);
             \fill[lightgray!50] (11-\i, -\i+1) rectangle ++(1, 1);
            \fill[lightgray!50] (10+\i, -\i+1) rectangle ++(1, 1);
             \fill[lightgray!50] (12-\i, -\i+1) rectangle ++(1, 1);
             \fill[lightgray!50] (8-\i, -\i+1) rectangle ++(1, 1);
        }
        \foreach \i in {1,2,3,4} {
            \node at (5.25, 1.5-\i) {\huge ${\ldots}$};
            \node at (15.75,  1.5-\i) {\huge ${\ldots}$};
        }
        \fill[lightgray!50] (7, -2) rectangle ++(7, 1);
        \fill[lightgray!50] (6, -3) rectangle ++(9, 1);
         \draw[step=1cm, black, thick] (6, 1) grid (15, -3);
         \foreach \i in {7.5}
         {
             \draw[ultra thick,black, dashed, ->, shorten >=3pt, shorten <=2pt]
            (\i,0.5) -- node[below left=14pt, yshift=0.5pt] {~} (\i-1,-0.5);
            \draw[ultra thick,black, dashed, ->, shorten >=3pt, shorten <=2pt]
            (\i,0.5) -- node[below right =12pt, yshift=0.5pt] {~} (\i+1,-0.5);
            \foreach \j in {1} {
            \draw[ultra thick,gray, dashed, ->, shorten >=3pt, shorten <=2pt]
            (\i+\j,0.5-\j) -- node[below left=12pt, yshift=0.5pt] {~} (\i+\j-1,-0.5-\j);
            \draw[ultra thick,gray, dashed, ->, shorten >=3pt, shorten <=2pt]
            (\i+\j,0.5-\j) -- node[below right=12pt, yshift=0.5pt] {~} (\i+\j+1,-0.5-\j);
            \draw[ultra thick,gray, dashed, ->, shorten >=3pt, shorten <=2pt]
            (\i-\j,0.5-\j) -- node[below right =12pt, yshift=0.5pt] {~} (\i-\j+1,-0.5-\j);
            \draw[ultra thick,gray, dashed, -, shorten >=3pt, shorten <=2pt]
            (\i-\j,0.5-\j) -- node[below left =12pt] {~} (\i-\j-1,-0.5-\j);
            }
            \foreach \j in {2} {
            \draw[ultra thick,gray, dashed, ->, shorten >=3pt, shorten <=2pt]
            (\i+\j,0.5-\j) -- node[below left=12pt, yshift=0.5pt] {~} (\i+\j-1,-0.5-\j);
            \draw[ultra thick,gray, dashed, ->, shorten >=3pt, shorten <=2pt]
            (\i+\j,0.5-\j) -- node[below right=12pt, yshift=0.5pt] {~} (\i+\j+1,-0.5-\j);
            }
         }
        \foreach \i in {10.5, 11.5}
         {
             \draw[ultra thick,black, dashed, ->, shorten >=3pt, shorten <=2pt]
            (\i,0.5) -- node[below left=14pt, yshift=0.5pt] {~} (\i-1,-0.5);
            \draw[ultra thick,black, dashed, ->, shorten >=3pt, shorten <=2pt]
            (\i,0.5) -- node[below right =12pt, yshift=0.5pt] {~} (\i+1,-0.5);
            \foreach \j in {1} {
            \draw[ultra thick,gray, dashed, ->, shorten >=3pt, shorten <=2pt]
            (\i+\j,0.5-\j) -- node[below left=12pt, yshift=0.5pt] {~} (\i+\j-1,-0.5-\j);
            \draw[ultra thick,gray, dashed, ->, shorten >=3pt, shorten <=2pt]
            (\i+\j,0.5-\j) -- node[below right=12pt, yshift=0.5pt] {~} (\i+\j+1,-0.5-\j);
            \draw[ultra thick,gray, dashed, ->, shorten >=3pt, shorten <=2pt]
            (\i-\j,0.5-\j) -- node[below right =12pt, yshift=0.5pt] {~} (\i-\j+1,-0.5-\j);
            \draw[ultra thick,gray, dashed, ->, shorten >=3pt, shorten <=2pt]
            (\i-\j,0.5-\j) -- node[below left =12pt, yshift=0.5pt] {~} (\i-\j-1,-0.5-\j);
            }
            \foreach \j in {2} {
                \foreach \r in {0,-1, -2, -3, -4 ,-5}{
            \draw[ultra thick,white!50, dashed, ->, shorten >=3pt, shorten <=2pt]
            (\i+\j+\r,0.5-\j) -- node[below left=12pt, yshift=0.5pt] {~} (\i+\j+\r-1,-0.5-\j);
            \draw[ultra thick,white!50, dashed, ->, shorten >=3pt, shorten <=2pt] (\i+\j+\r,0.5-\j) -- node[below right=12pt, yshift=0.5pt] {~} (\i+\j+\r+1,-0.5-\j);}
            }
         }
        \draw[step=1cm, black, very thick, ->] (4.35,1) -> (4.35,-3); 
        \node[rotate=90] at (3.8, -1) {\Huge {Time}};
        \end{tikzpicture}
    }}
   \hspace{5mm}
   \subfloat[Different Values, Different Velocities \label{fig:colision-new-d}]{
        \scalebox{0.39}{
     \begin{tikzpicture}
     \foreach \shift in {10} {
        \foreach \i in {1,2,3} {
           \fill[lightgray!50] (-1+\i, -\i+1) rectangle ++(1, 1);
            \fill[lightgray!50] (4+\i, -\i+1) rectangle ++(1, 1);
            \fill[gray] (5-\i, -\i+1) rectangle ++(1, 1);
            \node at (- 1.75, 1.5-\i) {\huge ${\ldots}$};
            \node at (19.75-\shift,  1.5-\i) {\huge ${\ldots}$};
        }
        \fill[lightgray!50] (2,-2) rectangle +(1,1);
        \fill[gray] (2,-2) -- (3,-2) -- (3,-1) -- cycle; 
        \draw[step=1cm, black, thick] (9-\shift, 1) grid (19-\shift, -2);
        }
        \foreach \j in {0,1}{
        \pgfmathparse{
        ifthenelse(\j==0,"black", "gray")}
        \edef\mycolor{\pgfmathresult}
          \foreach \i in {0.5+\j, 5.5+\j} {
            \draw[ultra thick,\mycolor, dashed, ->, shorten >=3pt, shorten <=2pt] (\i,0.5-\j) -- node[below right] {~} (\i+1,-0.5-\j);
        }}
        \foreach \j in {0,1}{
        \pgfmathparse{
        ifthenelse(\j==0,"black", "white")}
        \edef\mycolor{\pgfmathresult}
          \foreach \i in {4.5-\j} {
            \draw[ultra thick,\mycolor, dashed, ->, shorten >=3pt, shorten <=2pt] (\i,0.5-\j) -- node[below left] {~} (\i-1,-0.5-\j);
        }}
        \blackonwhite{2.5}{-1.5}{?}
        \node at (2.5,-2.85) {\huge\textbf{Contradiction!}};
         \draw[step=1cm, black, very thick, ->] (-2.65,1) -> (-2.65,-2); 
        \node[rotate=90] at (-3.2, -0.7) {\Huge {Time}};
        \end{tikzpicture}
    }}
    
    \vspace{-4mm}
    
        \caption{Coexistence of Persistent Gliders} \label{fig:colision-new}
         \vspace{-4mm}
    
    \end{figure}
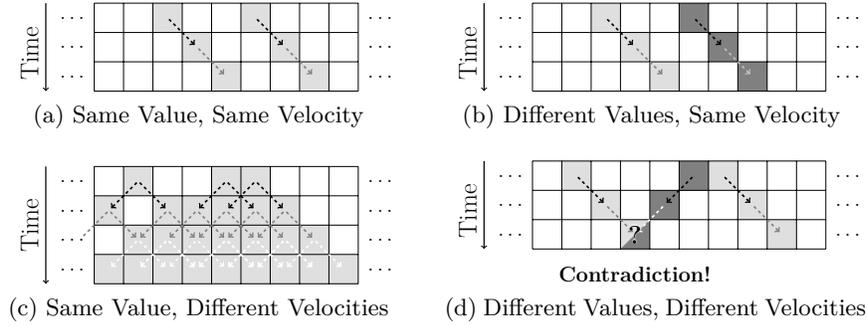
    \begin{restatable}[Coexistence of Persistent Gliders]
    {corollary}{persist}\label{cor:persist-glid}
         If $\g{\sigma}{i}$ and $\g{\tau}{j}$ are both persistent gliders in $\mathcal{A}$, then either $i \,{=}\, j$ or $\sigma \,{=}\, {\tau}$.
    \end{restatable}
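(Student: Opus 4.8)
The plan is to argue by contraposition: assuming $i \neq j$ \emph{and} $\sigma \neq \tau$, I will derive a contradiction, which shows that any two coexisting persistent gliders must agree on at least one of their two defining parameters. This isolates the single ``different velocities, different values'' situation sketched informally just above the statement, and depicted in Fig.\ref{fig:colision-new-d}.

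The crux is to produce one neighborhood on which the two persistence requirements collide. Since $i, j \in [-r, r]$ with $i \neq j$, the indices $-i$ and $-j$ are two \emph{distinct} positions of the window $\{-r, \ldots, r\}$. I would therefore build a witness $w \in \SigmaWindow$ that carries $\sigma$ at position $-i$ and $\tau$ at position $-j$, filling the remaining $2r - 1$ coordinates arbitrarily (say, with $\bot$). Because the two prescribed entries sit at different indices, they do not interfere, so such a $w$ exists; by construction $w \in \LocAt{i}{\sigma}$ and at the same time $w \in \LocAt{j}{\tau}$, witnessing $\LocAt{i}{\sigma} \cap \LocAt{j}{\tau} \neq \emptyset$.

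The contradiction is then immediate. Persistence of $\g{\sigma}{i}$ forces $f(w) = \sigma$ for every $w \in \LocAt{i}{\sigma}$, while persistence of $\g{\tau}{j}$ forces $f(w) = \tau$ for every $w \in \LocAt{j}{\tau}$; evaluating both on the common witness gives $\sigma = f(w) = \tau$. As $f$ is a function, and hence single-valued, this contradicts $\sigma \neq \tau$, completing the contrapositive and therefore the corollary.

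The only real content — and the step I would treat most carefully — is the nonemptiness of $\LocAt{i}{\sigma} \cap \LocAt{j}{\tau}$, which rests entirely on $i \neq j$. When $i = j$ the two constraints address the \emph{same} cell, which cannot simultaneously hold $\sigma$ and $\tau$; the intersection is then empty, no common witness exists, and the argument correctly fails to produce a clash (matching the ``same velocity'' case, Figs.\ref{fig:colision-new-a},\ref{fig:colision-new-c}). Thus the hypothesis $i \neq j$ is precisely what decouples the two positional constraints and forces the collision, and I would phrase the witness construction so that this dependence on $i \neq j$ is explicit.
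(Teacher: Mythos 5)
Your proof is correct and follows essentially the same route as the paper, which establishes the corollary via the ``different velocities, different values'' case of the coexistence analysis: when $i \neq j$ the positions $-i$ and $-j$ are distinct, so $\LocAt{i}{\sigma} \cap \LocAt{j}{\tau} \neq \emptyset$, and any common witness forces $f(w) = \sigma$ and $f(w) = \tau$ simultaneously. Your only addition is making the witness neighborhood explicit (padding with $\bot$), which the paper leaves implicit but which is a welcome clarification.
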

    Thus, the set $\G$ of persistent gliders must consist of gliders that either share the same velocities or carry the same value.
  \begin{note}\label{obs:persistent}
Two limiting cases illustrate the system-wide effect of persistent gliders. 
Let ${\mathcal{A}\,{\in}\, \caclassshort{1}{r}{\Sigma}}$, ${i\,{\in}\,[{-}r,r]}$, and consider the set ${\G_i {=} \{ \g{\sigma}{i} \,|\,{{\sigma {\in} \Sigma}}\}}$ of persistent gliders. 
    \begin{enumerate}[i.,topsep=1.5mm,itemsep=1.5mm]
        \item If $\mathcal{A}$ implements all gliders in $\G_i$, then any ${c\,{\in}\,\SigmaZWindow}$ shifts uniformly by $i$. That is, for any ${z\in\mathbb{Z}}$, it holds that ${G(c)[z]\,{=}\, c[z{-} i]}$.
    Hence, ${\orbitc{I} {=} I}$ and ${\mathcal{L}(\mathcal{A},F) {=}F}$ for any ${I\,{\subseteq}\,\SigmaZWindow}$ and $F\,{\subseteq}\,\Sigma^*$.

    \item If $\mathcal{A}$ implements all gliders in ${\G_i \,{\setminus}\, \{\g{\mybot}{i}\}}$, then for any $w\,{\in}\allowbreak (\Sigma\,{\setminus}\{\mybot\})^*$ and every infix ${w'}\sqsubset w$, it holds that $w'\sqsubset c'$ of every $c'\,{\in}\, \orbitc{\ppad(\{w\})}$.
    \end{enumerate}
\end{note}   
These cases show how implemented persistent gliders constrain global behavior-ranging from uniform shifts to infix preservation.

    Recall that a CA language is defined with respect to a set $I\,{\subseteq}\, C_{\mathcal{F}}$ of initial configurations.
    Furthermore, in this setting, the local rule function, $f$, may be partial, defined only for the feasible neighborhoods, namely, $\FN_{\mathcal{A},I}$. Accordingly, we define the notion of an \emph{$I$-persistent glider} as follows.
 \begin{definition}[$I$-Persistent Gliders]
       Let $\mathcal{A}\,{\in}\,\caclassshort{1}{r}{\Sigma}$ and $I\,{\subseteq}\, C_{\mathcal{F}}$. 
       A glider $\g{\sigma}{i}$ is \term{$I$-persistent} in $\mathcal{A}$ if for all $w\,{\in}\, \LocAt{i}{\sigma} \cap \FN_{\mathcal{A},I}$, we have $f(w) \,{=}\, \sigma$.
 \end{definition}
       Note that $I$-persistence generalizes persistence, as every {persistent-glider} is $I$-persistent for any $I$, but not vice versa. 
       The following example illustrates the concept of being persistent with respect to a particular initial configuration $I$.
\begin{example}[$I$-Persistent Gliders]\normalfont
        Let $\mathcal{A}\,{\in}\, \caclassshort{1}{1}{\{a,b,\bot \}}$ and $f$ defined as follows.
        \begin{center}
            \scalebox{0.885}{$ \begin{array}{|l|l|}
                \hline
                \mybot  & \mybot \mybot \mybot  \quad \text{\makebox[\widthof{$\mybot a \mybot$}][c]{$a\mybot b$}} \quad \text{\makebox[\widthof{$a \mybot \mybot$}][c]{$aa\mybot$}}  \quad \text{\makebox[\widthof{$a \mybot a$}][c]{$aba$}} \quad \text{\makebox[\widthof{$ \mybot b \mybot$}][c]{$b\mybot b$}} \\[0.15mm]\hline
                a & \text{\makebox[\widthof{$\mybot \mybot \mybot$}][c]{$\mybot \mybot a$}} \quad \mybot a\mybot  \quad \text{\makebox[\widthof{$a \mybot \mybot$}][c]{$\mybot aa$}} \quad \mybot ab \quad \mybot b\mybot  \\
                & \text{\makebox[\widthof{$\mybot \mybot \mybot$}][c]{$\mybot ba$}} \quad \text{\makebox[\widthof{$\mybot a \mybot$}][c]{$aaa$}} \quad \text{\makebox[\widthof{$a\mybot \mybot$}][c]{$aab$}} \quad \text{\makebox[\widthof{$\mybot a b$}][c]{$ba\mybot$}}  \quad \text{\makebox[\widthof{$\mybot b \mybot$}][c]{$bba$}} \\[0.15mm]\hline
                b & \text{\makebox[\widthof{$\mybot \mybot \mybot$}][c]{$\mybot \mybot b$}} \quad \text{\makebox[\widthof{$\mybot a \mybot$}][c]{$\mybot bb$}} \quad a\mybot \mybot  \quad a\mybot a \quad ab\mybot  \quad abb \\
                & \text{\makebox[\widthof{$\mybot \mybot \mybot$}][c]{$b\mybot \mybot$}}  \quad \text{\makebox[\widthof{$\mybot a \mybot$}][c]{$b\mybot a$}}\quad \text{\makebox[\widthof{$a \mybot \mybot$}][c]{$baa$}} \quad \text{\makebox[\widthof{$a \mybot a$}][c]{$bab$}} \quad\,  bb\mybot  \quad bbb \\
                \hline
            \end{array}$}
        \end{center}
       \noindent
       Let $I  \,{=}\, \ppad( \{ ab\})$, so $f$
       is defined as follows (corresponds to Clm.\ref{claim:CAvocl}).
          \begin{center}
               \scalebox{0.885}{$ \begin{array}{|l|l|}
           \hline
            \mybot  & \mybot \mybot \mybot \\[0.15mm] \hline
        a & \mybot \mybot a \quad \mybot aa \quad \text{\makebox[\widthof{$b\mybot \mybot$}][c]{$\mybot a b$}} \quad aaa \quad aab\\[0.15mm] \hline
        b & \text{\makebox[\widthof{$\mybot \mybot a$}][c]{$ab\mybot$}}  \quad  \text{\makebox[\widthof{$\mybot a a$}][c]{$abb$}} \quad b\mybot \mybot  \quad  bb\mybot   \quad bbb \\
            \hline
                \end{array}$}
          \end{center}
        Note that while $\g{a}{\mminus 1}$, $\g{a}{0}$, $\g{b}{0}$ and $\g{b}{+{ } 1}$ are not persistent in $\mathcal{A}$ owing to $f(a\mybot  a) \,{=}\,  b$, $f(aa \mybot) \,{=}\,  \mybot$, $f(\mybot ba) \,{=}\,  a$, $f(b\mybot b) \,{=}\,   \mybot$ resp.
     They are \emph{$I$-persistent}.
\end{example}

\begin{remark}
    Note that \Cref{cor:persist-glid} considers all possible neighborhoods. Under a particular initial configuration $I$, there can be two $I$-persistent gliders $\g{\sigma}{i}, \g{\tau}{j}$, such that both $i\neq j$ and $\sigma\neq \tau$. 
    If this is the case, this is because the contradicting environment is not feasible under $I$.
\end{remark}

\subsection{Gliders Derived by a CA under $I$}\label{sec:derived}
We can now reason not only about individual gliders, but also about their interactions under a fixed initial configuration set.
Given $\mathcal{A}\,{\in}\,\caclassshort{1}{r}{\Sigma}$ and $I\,{\subseteq}\,C_{\mathcal{F}}$, we denote the set of gliders derived from $\mathcal{A}\,|_I$ by $\G_{\mathcal{A}|_I}$.
By Def.\ref{def:glider}, this is the set of gliders $\g{\sigma}{i}$ for which there {exists a} $w\,{\in}\,\FN_{\mathcal{A},I}$ that is in $\LocAt{i}{\sigma}$, namely for which $\sig{-i}\,{=}\,\sigma$.
Some of them might be $I$-persistent, while others might not.

\paragraph{The Derived Partial Order.} Conflicting gliders may overlap in their neighborhood constraints w.r.t. $\mathcal{A}$ and $I \,{\subseteq}\, C_{\mathcal{F}}$. 
To compare such cases, we define a \emph{dominance relation}.
Let $\g{\sigma}{i}, \, \g{\tau}{j} \,{\in}\, \G_{\mathcal{A}|_I}$ with $\sigma \,{\neq}\, \tau$ and $i\,{\neq}\, j$. 
We write 
$$\g{\sigma}{i} \,{>}_{{\mathcal{A}|_{I}}}\ \g{\tau}{j} \text{ ~ iff ~ }  f(w)\, {=}\, \sigma \text{ for all } w\,{\in}\,\LocAt{i}{\sigma} \,{\cap}\, \LocAt{j}{\tau}\,{\cap}\,\FN_{\mathcal{A},I}$$
When the context is clear, we simply write $\g{\sigma}{i} \,{>}\, \g{\tau}{j}$.
%
Intuitively, $\g{\sigma}{i}$ dominates $\g{\tau}{j}$ when the local rule $f$ assigns $\sigma$ to every \FN\ that overlaps.
%
Thus, a dominating glider consistently overrides a weaker one in every overlapping context.
For example, if every $w\,{\in}\, \LocAt{i}{x} \,{\cap}\, \LocAt{j}{y}$ satisfies $f(w)\, {=}\,x$, then $\g{x}{i} \,{>}\,  \g{y}{j}$. 
\paragraph{Maximality.}
We say that a glider $\plaing$ is \emph{maximal} in a set of gliders $\G$ if no $\plaing'\,{\in}\,\G$ satisfies $\plaing'\,{>}\,\plaing$.
In this sense, \emph{maximality} aligns with $I$-\emph{persistence}: a glider that is not dominated by any other conflicting one behaves stably {under the neighborhoods that are derived by $\G$}, while non-maximal gliders may be suppressed in evolution or reconstruction.
\begin{example}[Maximality under a strict partial order]\label{exmple:layering} \normalfont
    Let $\G \,{=}\, \{ \glider{1}, \glider{2}, \glider{3}, \glider{4} \}$ be a glider set with the following ordering relations: 
    $\glider{1} \,{>}\, \glider{2}, \hspace{1mm}
    \glider{2} \,{>}\, \glider{3}, \hspace{1mm}
    \glider{1} \,{>}\, \glider{3}$ and $
    \glider{4} \,{>}\, \glider{3}.$
    Note that the strict partial order, $>$, allows glider sets to have more than one maximal element. For example, the maximal gliders of $\G$ are $\glider{1}$ and $\glider{4}$.
    \end{example}
\paragraph{Sound Derivation.}
We have defined the set of gliders $\G$ and the strict partial order $>$ derived from a CA $\mathcal{A}$ and an initial configuration $I$. 
If the derivation is sound, as formally defined next, it would be possible to reconstruct the CA from $(\G,>)$.
Formally, given $\mathcal{A}\,{\in}\,\caclassshort{1}{r}{\Sigma}$ and $I \,{\subseteq}\, C_{\mathcal{F}}$.
Let $\G$ and $>$ be the set of gliders and their order, derived from $\mathcal{A}|_I$ as defined above. 
We say that the \emph{derivation is sound} if the following conditions are satisfied.
\begin{enumerate}[i., topsep=1.5mm,itemsep=1.5mm]
    \item The graph induced by $>$ on $\G$ is acyclic,
    \item Every $w \,{\in}\, \FN_{\mathcal{A},I}$ has some $\g{\sigma}{i} \,{\in}\, \G$ with $w \,{\in}\, \LocAt{i}{\sigma}$ such that $f(w) \,{=}\,  \sigma$, and
    \item For every $w \,{\in}\, \FN_{\mathcal{A},I}$ if $\G_w \,{\subseteq}\, \G$ is the set of gliders $\{\g{\sigma}{i} \mid  w\in\LocAt{i}{\sigma}\}$, and $\g{\tau}{j}$ is maximal in $\G_w$ then $f(w)\,{=}\,\tau$.
\end{enumerate}
The first condition establishes that the dominance relation is well defined.
The second condition makes sure that every feasible neighborhood is defined according to a glider, and that this glider is in $\G$. 
Lastly, the third condition states that if the evolution of a certain feasible neighborhood $w$ intersects with more than one glider of $\G$ and can presumably be defined by any such glider, then the one that actually determines the evolution is the most dominant among these.

\smallskip

We now consider the inverse direction: Given a finite set of gliders $\G$ equipped with a partial order $>$, can we reconstruct a CA whose local rule is derived from $(\G,>)$? 
That is, does there exist a CA that this glider system \emph{induces}?

     \subsection{Glider-Based Rule Reconstruction}\label{sec:part-order} 
    Given a set of gliders $\G$ ordered by $>$, we construct a {(possibly partial) rule function consistent with their dominance structure.
 \begin{example}\label{exmp:motivation-glider}\normalfont
        Let $\G\,{=}\,\{\g{a}{0}, \g{b}{1}\}$ over $\Sigma \,{=}\allowbreak \{a,b\}$ with $r\,{=}\,1$, and assume $\g{b}{1} \,{>}\, \g{a}{0}$. 
        Then, for any $w\,{\in}\,\textstyle{\Sigma^{3}}$, the rule induced by $\G$ satisfies: if $\sig{{-1}}\,{=}\, b$, then $f(w)\,{=}\,b$; otherwise, if $\sig{0}\,{=}\, a$, then $f(w)\,{=}\,a$.
    \end{example}
  We briefly describe the procedure, $\mainproc$, which derives a partial rule function $f$ from an ordered glider system $(\G,>)$.
  Each glider $\g{\sigma}{i} \,{\in}\,\G$ constrains $f$ to map certain neighborhoods in $\LocAt{i}{\sigma}$ to $\sigma$, unless a more dominant glider overrides it.
  The procedure utilizes simple subroutines that process gliders in dominance order and incrementally assign values to $f$.
   
\paragraph{Maximal Glider ($\procmaxx$).} 
The procedure ${\procmaxx(\G,>)}$ returns a pair $(\sigma,i)$ such that $\g{\sigma}{i}\,{\in}\,\G$ is a \emph{maximal} glider in $\G$ with respect to $>$.

\paragraph{Rule Extension ($\extend$).} 
Given a value $\sigma$, a position $i$, and a partial rule function $f'\,{:}\,\SigmaWindow{\rightharpoonup} \Sigma$, the $\extend$ procedure extends $f'$ by assigning $\sigma$ to every neighborhood $w\in \LocAt{i}{\sigma}$ for which $f'(w)$ is undefined. 
Formally, we define $\boldsymbol{f'} \,{:=}\, \extend(f', \sigma,i)$, where:
\vspace{-2.5mm}

\begin{equation*}
    \boldsymbol{f'}(w)\,{:=}\,\begin{cases}
    f'(w) & \text{if } w\,{\in}\,\dom(f')\\
     \sigma & \text{if } w \,{\in}\,\LocAt{i}{\sigma} \,{\setminus}\, \dom(f')
\end{cases}
\end{equation*}

\vspace{-1mm}

\subsection*{Glider-to-Rule ($\mainproc$) Reconstruction Algorithm}
\begin{wrapfigure}{r}{0.445\textwidth}
\vspace{-15mm}
\centering
\begin{minipage}{0.435\columnwidth}
\begin{algorithm}[H]
\small
\caption{{\mainproc}{($\G,>,\Sigma,r$)}}\label{alg:f-by-G}
\begin{algorithmic}[1]
\State $f' \gets \emptyset$ 
\While{$\G  \,{\neq}\,  \emptyset$}
    \State $(\sigma,i) \gets$ \Call{$\procmaxx$}{$\G$} 
    \State $\G \gets \G\setminus \{\g{\sigma}{i}\}$  
    \State $f' \gets$\Call{Extends}{$f',\sigma,i$}
\EndWhile
\State \Return $f'$  
\end{algorithmic}
\end{algorithm}
\end{minipage}
\vspace{-2mm}
\end{wrapfigure}
We define $\mainproc(\G, >, \Sigma, r)$ to construct a partial rule $f' {:}\SigmaWindow { \rightharpoonup} \Sigma$ by iteratively applying $\extend$ to maximal gliders, following any linearization of $>$, until all of $\G$ is processed.\footnote{A linearization of a strict partial order is a total order that respects the partial order: if $\gglider {>} \gglider'$, then $\gglider$ precedes $\gglider'$ in the sequence.}

We now show that if a CA derives a sound glider system $(\G,>)$, then its language can be reconstructed from this set.

\begin{lemma}[Soundness of Glider Reconstruction]\label{lem:glider-sound}
Let $\mathcal{A} \,{\in}\, \caclassshort{1}{r}{\Sigma}$ and $I \,{\subseteq}\, C_{\mathcal{F}}$. Suppose that $(\G,>)$ is the glider system \emph{derived} from $\mathcal{A}|_{I}$.
If the derivation is sound, then the partial rule $f' \,{=}\, \mainproc(\G, >, \Sigma, r)$ satisfies $f'(w) \,{=}\, f(w)$ for all $w \,{\in}\, \FN_{\mathcal{A},I}$.
\end{lemma}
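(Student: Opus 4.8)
The plan is to track, for each feasible neighborhood $w$, exactly when and to what value the procedure $\mainproc$ assigns $f'(w)$, and to show that this value is forced to coincide with $f(w)$ by the three soundness conditions. First I would record two structural facts about the algorithm. \textbf{(a) Monotonicity:} by the definition of $\extend$, a value is written to $f'(w)$ only when $w$ is currently undefined, so once $f'(w)$ becomes defined it is never overwritten; hence the value set at the first assignment equals the returned value $f'(w)$. \textbf{(b) The processing order is a $>$-linearization:} the loop removes each glider exactly once, and each iteration removes a $\procmaxx$-maximal element of the \emph{remaining} set (which exists because $>$ is acyclic, soundness condition~(i)). If $\g{\sigma}{i} \,{>}\, \g{\tau}{j}$, then $\g{\tau}{j}$ cannot be maximal in any remaining set still containing $\g{\sigma}{i}$, so $\g{\sigma}{i}$ is removed first; thus the removal sequence respects $>$ in the sense of the footnote.

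Next I fix $w \,{\in}\, \FN_{\mathcal{A},I}$ and set $\G_w \,{=}\, \{\g{\sigma}{i}\,{\in}\,\G \mid w\,{\in}\,\LocAt{i}{\sigma}\}$. Since $\extend(f',\sigma,i)$ only touches neighborhoods in $\LocAt{i}{\sigma}$, the cell $f'(w)$ can be set solely while processing a glider of $\G_w$. Combined with fact~(a), this means $f'(w)$ is first (hence finally) assigned exactly when the earliest-processed glider of $\G_w$ is handled, and the value written is that glider's symbol. Soundness condition~(ii) guarantees $\G_w \,{\neq}\, \emptyset$ (it contains a glider $\g{\sigma}{i}$ with $f(w)\,{=}\,\sigma$), so this moment genuinely occurs and $f'(w)$ does get defined.

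The crux is to identify that earliest-processed glider of $\G_w$ as a \emph{maximal} element of $\G_w$. I would argue contrapositively: if $\g{\tau}{j}\,{\in}\,\G_w$ is not maximal in $\G_w$, then some $\g{\sigma}{i}\,{\in}\,\G_w$ satisfies $\g{\sigma}{i}\,{>}\,\g{\tau}{j}$, and by fact~(b) this $\g{\sigma}{i}$ is processed earlier, so $\g{\tau}{j}$ is not the first element of $\G_w$ to be processed. Hence the first-processed glider of $\G_w$ is $\G_w$-maximal. Soundness condition~(iii) then states that every $\G_w$-maximal glider carries the value $f(w)$; in particular the first-processed one does, so the value assigned to $w$ is $f(w)$. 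With monotonicity this yields $f'(w)\,{=}\,f(w)$, and since $w$ was arbitrary, the lemma follows; termination and well-definedness of $\procmaxx$ are routine since $\G$ is finite and acyclic.

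The main obstacle is precisely this bridging step: $\procmaxx$ selects maxima of the \emph{whole} remaining glider set, whereas condition~(iii) speaks of maximality within the \emph{local} set $\G_w$. The argument must show these notions align at the relevant instant---that the first member of $\G_w$ met along any $>$-linearization is necessarily $\G_w$-maximal---and must also rely on the fact that distinct $\G_w$-maximal gliders cannot disagree on $w$, which is exactly what condition~(iii) supplies by pinning all of them to the single value $f(w)$.
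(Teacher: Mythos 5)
Your proposal is correct and follows essentially the same route as the paper's proof: both reduce the question to the set $\G_w$ of gliders whose cylinder contains $w$, argue that the glider determining $f'(w)$ is a maximal element of $\G_w$, and then invoke condition~(iii) of soundness to pin its value to $f(w)$. The only difference is presentational---the paper argues by contradiction and simply asserts that $\mainproc$ "chooses a maximal glider among them," whereas you give a direct argument and explicitly justify that bridging step (first-write-wins monotonicity of $\extend$ plus the fact that the first member of $\G_w$ met along any $>$-linearization is $\G_w$-maximal), which is a welcome added level of detail.
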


\begin{proof}
    Let $(\G,>)$ be the glider system \emph{derived} from $\mathcal{A}|_I$.
    Assume towards contradiction that although the derivation is sound, there exists a $w \,{\in}\, \FN_{\mathcal{A},I}$ for which the partial rule $f' \,{=}\, \mainproc(\G, >, \Sigma, r)$ returns $f'(w) \,{\neq}\, f(w)$.
    By $\mainproc$ construction, the only gliders that can define the assignment of $w$'s rule are from $\{\g{\sigma}{i} ~|~ w\in\LocAt{i}{\sigma}\}$, denoted $\G_w$ in the sound-derivation definition. Procedure $\mainproc$ chooses a maximal glider among them. By item (iii) of the sound derivation, this maximal glider, $\g{\tau}{j}$, imposes $f(w)\,{=}\,\tau$.
    Thus, $f'(w)$ would also assign it $\tau$, which contradicts $f'(w)\,{\neq}\,f(w)$.
    Therefore, for every sound derivation, the partial rule function satisfies $f'(w)\,{=}\,f(w)$ for every $w\,{\in}\, \FN_{\mathcal{A},I}$.
    \qed
\end{proof}
\begin{remark}
Note that in general, there are multiple linearizations that $\mainproc$ could follow and that different linearizations, $l'$ and $l''$, yield different rule functions, $f'$ and $f''$ resp. 
However, given that $(\G,>)$ was derived from $\mathcal{A}\in\caclassshort{1}{r}{\Sigma}$ and $I$ as per Sec.\ref{sec:derived}, both will agree with $f$ on any feasible neighborhood with respect to $\mathcal{A}\,|_I$. That is, $f(w)=f'(w)=f''(w)$ for every $w\,{\in}\, \FN_{\mathcal{A},I}$.
\end{remark}
\begin{definition}[Glider Expressible]\label{def:glider-exp}
    Let $\mathcal{A} \,{\in}\, \caclassshort{1}{r}{\Sigma}$ and $I \,{\subseteq}\, C_{\mathcal{F}}$. We term language $\mathcal{L}(\mathcal{A},I)$ \emph{glider-expressible} if the glider system derived from $\mathcal{A}|_I$ is sound.
\end{definition}
We denote the class of such languages by $\caclassglider{\Sigma}{r}$. 
The following inclusion is a direct consequence of Lem.\ref{lem:glider-sound}.
\begin{corollary}
\label{cor:glider-implies-ca} 
    $\caclassglider{\Sigma}{r}\subseteq \caclassshort{1}{r}{\Sigma}$
\end{corollary}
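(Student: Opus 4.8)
The plan is to convert a glider-expressible language into an explicit CA witness via the reconstruction procedure $\mainproc$ and Lemma~\ref{lem:glider-sound}. Suppose $L \in \caclassglider{\Sigma}{r}$. By Definition~\ref{def:glider-exp} there exist $\mathcal{A} = (\Sigma, r, f) \in \caclassshort{1}{r}{\Sigma}$ and $I \subseteq C_{\mathcal{F}}$ with $L = \mathcal{L}(\mathcal{A}, I)$ whose derived glider system $(\G, >)$ is sound. First I would set $f' := \mainproc(\G, >, \Sigma, r)$ and invoke Lemma~\ref{lem:glider-sound} to obtain $f'(w) = f(w)$ for every feasible neighborhood $w \in \FN_{\mathcal{A},I}$. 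Completing $f'$ to a total rule by assigning $\bot$ to every neighborhood outside its domain yields a genuine CA $\mathcal{A}' = (\Sigma, r, f') \in \caclassshort{1}{r}{\Sigma}$; the unique-quiescent-state requirement is preserved because $\bot^{2r+1}$ is itself feasible and already mapped to $\bot$.

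The core of the argument is to show $\mathcal{L}(\mathcal{A}', I) = \mathcal{L}(\mathcal{A}, I)$, for which I would prove that the two automata produce identical orbits from $I$. This is an induction on the time step $n$: the base case $c_0 \in I$ is shared. For the step, assuming the $n$-th configuration $c_n$ coincides under both rules, observe that $c_n \in \orbitc{I}$ under $\mathcal{A}$, so every neighborhood occurring in $c_n$ lies in $\FN_{\mathcal{A},I}$ by the very definition of feasibility. On exactly these neighborhoods $f'$ and $f$ agree, so applying the global rule cell-by-cell produces the same $c_{n+1}$. Hence $\orbitc{I}$ is identical under $\mathcal{A}$ and $\mathcal{A}'$, and since the CA language is the set of finite words $w$ with ${}^{\omega}\bot w \bot^{\omega} \in \orbitc{I}$, the two languages coincide, giving $L \in \caclassshort{1}{r}{\Sigma}$.

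The only real subtlety I expect is the confinement claim inside the induction: a priori the completed rule $f'$ is defined on neighborhoods that never occur in the original orbit, so one must rule out the possibility that $\mathcal{A}'$ escapes into such infeasible neighborhoods and diverges from $\mathcal{A}$. The induction closes this gap precisely because, as long as the two configurations are identical up to step $n$, the neighborhoods exposed at step $n$ are exactly those of the original feasible orbit, so no infeasible neighborhood is ever reached and the arbitrary completion of $f'$ is never consulted in a way that affects the outcome. Everything else---that the completion is a legitimate total local rule and that the $\ppad$-projection is applied identically---is routine bookkeeping.
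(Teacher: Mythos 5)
Your argument is correct and follows the paper's stated justification: the paper derives \Cref{cor:glider-implies-ca} as a direct consequence of Lemma~\ref{lem:glider-sound}, which is exactly the reconstruction-plus-orbit-agreement route you spell out (the paper leaves the feasibility induction and the totalization of $f'$ implicit). Note only that the inclusion is in fact already immediate from Definition~\ref{def:glider-exp}: a glider-expressible language is \emph{by definition} of the form $\mathcal{L}(\mathcal{A},I)$ for some $\mathcal{A}\in\caclassshort{1}{r}{\Sigma}$, so that $\mathcal{A}$ itself witnesses membership in $\caclassshort{1}{r}{\Sigma}$ without invoking $\mainproc$ at all.
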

That is, every glider-expressible language is also CA-expressible. However, the converse does not hold in general; there exist CA-expressible languages that are not glider-expressible. An explicit counterexample is provided in Ex.\ref{non-glider-expressible}. 
It follows from the following useful observation.

\begin{note}\label{obs:non-glider}
    Let $\mathcal{A}\,{\in}\,\caclassshort{1}{r}{\Sigma}$, $I\,{\subseteq}\, C_{\mathcal{F}}$,
    and $L\,{=}\,\mathcal{L}(\mathcal{A},I)$. If there exists a $w\,{\in}\,\FN_{\mathcal{A},I}$ such that $f(w){=}\,\sigma$ while $\sigma\,{\notin} \{\sig{i} \,|\,i\,{\in}\, [{-}r,r] \}$, then $L\notin \caclassglider{\Sigma}{r}$.
\end{note}
This phenomenon occurs in the following example.
\begin{example}[Rule 90 \cite{Wolfram2002}]\label{non-glider-expressible} \normalfont
        Let $\mathcal{A}\,{\in}\,\caclassshort{1}{1}{\{0,1\}}$ where $f(xyz) \,{=}\,  x \,{\oplus}\, z$, for every $x,y,z\,{\in}\, \{0,1\}$ and $\oplus$ is the XOR operation. %
        Note that the assignment $f(111) \,{{=}}\,  0$, is not derivable from any glider $\g{\sigma}{i}$ with $i\,{\in}\,\{-1,0,1\}$.
        Hence, this CA (aka Wolfram's Elementary-CA \emph{Rule 90}) is not glider-expressible.
 \end{example}
\begin{corollary}
\label{glidre-exp-implies-ca-exp}
    $\caclassglider{\Sigma}{r} \subsetneq \caclassshort{1}{r}{\Sigma}$.
    \end{corollary}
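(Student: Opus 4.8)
The plan is to split the statement of Cor.\ref{glidre-exp-implies-ca-exp} into the two halves of a strict inclusion. The inclusion $\caclassglider{\Sigma}{r}\subseteq\caclassshort{1}{r}{\Sigma}$ is already furnished by Cor.\ref{cor:glider-implies-ca}, so the only remaining work is to exhibit a single language that is CA-expressible but \emph{not} glider-expressible, thereby witnessing strictness. I would make this witness concrete rather than abstract, since the machinery needed is entirely available in the preceding observations.

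For the witness I would reuse the Rule~90 automaton of Ex.\ref{non-glider-expressible}: take $\mathcal{A}\,{\in}\,\caclassshort{1}{1}{\{0,1\}}$ with $f(xyz)\,{=}\,x\oplus z$. Here $0$ is the unique quiescent state, since $f(000)\,{=}\,0$, whereas $f(111)\,{=}\,0$; thus the neighborhood $111$ produces a symbol that occurs nowhere inside it. The crucial design choice is an initial set that renders this neighborhood feasible. I would take $I\,{=}\,\ppad(\{111\})$, whose single initial configuration $c_0\,{=}\,{}^{\omega}0\,111\,0^{\omega}$ already contains $111$ as an infix; since $c_0\,{\in}\,\orbitc{I}$, this gives $111\,{\in}\,\FN_{\mathcal{A},I}$ without any reachability computation.

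With feasibility secured, non-glider-expressibility follows immediately from Obs.\ref{obs:non-glider}: the feasible neighborhood $w\,{=}\,111$ satisfies $f(w)\,{=}\,0$ while $0\,{\notin}\,\{\sig{i}\mid i\,{\in}\,[{-}1,1]\}\,{=}\,\{1\}$, hence $L\,{:=}\,\mathcal{L}(\mathcal{A},I)\,{\notin}\,\caclassglider{\Sigma}{r}$. On the other hand, $L\,{\in}\,\caclassshort{1}{1}{\{0,1\}}$ by definition, as it is the language generated by a CA from an initial set $I\,{\subseteq}\,C_{\mathcal{F}}$. Combining this with the inclusion of Cor.\ref{cor:glider-implies-ca} yields $\caclassglider{\Sigma}{r}\,{\subsetneq}\,\caclassshort{1}{r}{\Sigma}$, the desired strict containment (witnessed already at $\Sigma\,{=}\,\{0,1\}$, $r\,{=}\,1$).

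The step I expect to carry the real content is establishing feasibility of the offending neighborhood: Obs.\ref{obs:non-glider} applies only to neighborhoods in $\FN_{\mathcal{A},I}$, so one must guarantee that $111$ genuinely arises along some orbit, rather than being an infeasible pattern that the partiality of $f$ would simply discard. Placing $111$ directly in the initial configuration sidesteps any propagation argument, which is why $I\,{=}\,\ppad(\{111\})$ is the cleanest choice; any regular $F$ whose padded configurations reach a window of three consecutive $1$'s would serve equally well.
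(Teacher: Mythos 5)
Your proposal is correct and follows essentially the same route as the paper: the inclusion comes from Cor.\ref{cor:glider-implies-ca}, and strictness is witnessed by Rule~90 (Ex.\ref{non-glider-expressible}) via Obs.\ref{obs:non-glider}. Your only addition is making the initial set explicit as $I\,{=}\,\ppad(\{111\})$ so that the offending neighborhood $111$ is demonstrably feasible --- a detail the paper leaves implicit but which is indeed needed for Obs.\ref{obs:non-glider} to apply.
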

   \section{$\caclassglider{\Sigma}{r}$ Can Count}\label{subsec:expressivness}
   We turn to show that $\caclassglider{\Sigma}{r}$ (therefore $ \caclassshort{1}{r}{\Sigma}$) can generate languages beyond the OCA hierarchy. 
    A classic example is the language $\{a^nb^nc^n\,|\,n\,{\in}\,\mathbb{N}\}$, which is known to be context-sensitive but not context-free, whereas all OCA-recognizable languages are context-free. 
    Fig.\ref{fig:OCA-lang} presents a local rule function that generates this language for $I\,{=}\,\ppad(\{abc\})$, along with an illustration of its evolution.
    In this section, we explore a powerful generalization of this result, given in Thm.\ref{Thm:glider-expr-pwr}. We progress towards it gradually, generalizing one aspect at a time.

Our first result shows that CA can count any given number of letters with different multiplications in their powers.

   \begin{proposition}\label{prop:421}
        Let $L \,{=}\, \{ {\apos{1}}^{k_1{\cdot}n} {\apos{2}}^{k_2{\cdot}n}{\ldots} {\apos{m}}^{k_m{\cdot}n} \mid n\,{\in}\,\mathbb{N}\}$ where ${\apos{i}} \,{\in}\, \Sigma$ and \allowbreak{$k_i\,{\in}\, \mathbb{N}$} for every $i\,{\in}\, [m]$.
        If $\sum_{i {=}1}^m {k_i} \leq 2r$, then $L$ is {$\caclassshortREG{1}{r}{\Sigma}{REG}$}-expressible.
    \end{proposition}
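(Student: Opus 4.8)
The plan is to realize $L$ by a glider-based construction in which each block $a_i^{k_i n}$ is maintained by a \emph{spreading state} of value $a_i$, and to let the whole word grow by one ``layer'' per application of the global rule. Write $S_i = \sum_{j=1}^{i} k_j$, so that $S_0 = 0$ and $S_m = \sum_i k_i =: K \le 2r$. Since $K \le 2r$ I may fix an integer offset $\ell$ with $K - r \le \ell \le r$ (such $\ell$ exists because $K - r \le r$). To block $i$ I assign the velocity interval $V_i := [\,S_{i-1} - \ell,\ S_i - \ell\,]$, and I let $\mathcal{A}$ implement the persistent gliders $\{\g{a_i}{v} \mid v \in V_i\}$ for every $i \in [m]$. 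Each $V_i \subseteq [-\ell,\, K-\ell] \subseteq [-r,r]$, so all velocities are admissible, and $V_i$ has width exactly $k_i$ (i.e. $k_i{+}1$ distinct velocities). I take the single initial configuration $I = \ppad(\{a_1^{k_1}\cdots a_m^{k_m}\})$, the $n{=}1$ word, which is a finite, hence regular, initial set.

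The intended behaviour is that the left edge of the word drifts with velocity $-\ell$ while the boundary between block $i$ and block $i{+}1$ drifts with velocity $S_i - \ell$; consequently block $i$ is squeezed between boundaries of velocities $S_{i-1}-\ell$ and $S_i-\ell$, and so gains exactly $k_i$ cells per step. Formally I would (i) define the partial rule $f$ on $\FN_{\mathcal{A},I}$ from the ordered glider system $(\G,>)$ via \mainproc, fixing the dominance at each shared boundary velocity $S_i-\ell$ according to $\sign(S_i-\ell)$ — the advancing block dominates (the $a_i$-side if $S_i - \ell > 0$, the $a_{i+1}$-side if $S_i - \ell < 0$), while for $S_i-\ell=0$ the two velocity-$0$ gliders act on disjoint cells so no conflict arises; (ii) check that the derived system is \emph{sound} (acyclicity of $>$, coverage of every feasible neighborhood, and maximal-glider-determines-value), so that by Lem.\ref{lem:glider-sound} and Cor.\ref{cor:glider-implies-ca} the language is glider-expressible and in particular $\caclassshort{1}{r}{\Sigma}$-expressible; and (iii) prove by induction on $n$ that $G$ carries the configuration whose word is $a_1^{k_1 n}\cdots a_m^{k_m n}$ to the one whose word is $a_1^{k_1(n+1)}\cdots a_m^{k_m(n+1)}$, and that no other finite configuration is reachable, whence $\mathcal{L}(\mathcal{A},I) = L$. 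The width bookkeeping here is exactly the content of the spreading-state picture of Fig.\ref{fig:spreading-state}, and \Cref{cor:bound_inc} confirms that the growth stays within the $\pm r$ envelope.

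The main obstacle is establishing that $f$ is \emph{well defined and produces no spurious words} at the block boundaries. Two things must be verified: that every feasible boundary neighborhood receives a single value (no neighborhood is simultaneously forced to $a_i$ and $a_{i+1}$), and that the evolution is ``clean'', i.e. each intermediate configuration is exactly a word of $L$ rather than some garbage string. Both reduce to analysing the finitely many neighborhoods straddling a boundary, using that the ordered pair of adjacent block-letters together with the boundary offset pins down the local picture. This is immediate, and the construction goes through verbatim, whenever consecutive boundary letter-pairs are distinct — as in all the witnesses of \Cref{OCA-proposition} and in $a^n b^n c^n$. The delicate point, which I expect to be the crux of the full argument, is when a letter-pattern repeats across non-adjacent blocks (e.g. the two $(a,b)$-boundaries of a word such as $a^n b^n a^n b^n a^n$): there the interiors near the two boundaries are locally identical, yet the required boundary velocities differ, so the disambiguation must come from the restriction to $\FN_{\mathcal{A},I}$ (infeasibility of the conflicting context) rather than from the letters alone. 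Handling this case carefully is where the real work lies; once it is settled, the growth arithmetic and the reduction to glider-expressibility are routine.
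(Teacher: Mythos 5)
Your construction is essentially the paper's, just organized without its normalization step. The paper first reduces \Cref{prop:421} to the special case where every exponent multiplier is $1$ and there are exactly $2r$ blocks (by rewriting $\apos{i}^{k_i\cdot n}$ as $k_i$ consecutive blocks $\apos{i}^{n}$ and relabeling the blocks as $\aneg{r},\ldots,\aneg{1},\apos{1},\ldots,\apos{r}$), and then gives each unit block exactly two gliders, $\g{\apos{i}}{i}$ and $\g{\apos{i}}{i-\sign(i)}$, with inner blocks dominating outer ones. Your velocity intervals $V_i=[S_{i-1}-\ell,\,S_i-\ell]$ with $k_i{+}1$ gliders per block specialize to exactly those pairs when all $k_i=1$ and $\ell=r$, and your boundary-dominance rule (advancing side wins; velocity-$0$ boundaries are conflict-free) coincides with the paper's partial order; so the growth arithmetic, the induction on $n$, and the appeal to soundness/\mainproc{} are the same. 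The one substantive divergence is the issue you flag at the end: repeated letters across non-adjacent blocks (e.g.\ $a^nb^na^n$, which is an instance of \Cref{prop:421} but not of the distinct-letter reformulation). You are right that this is where the real work lies, and you should be aware that the paper does not actually do that work either --- its ``relabeling'' reduction proves the statement cleanly only when the $\apos{i}$ are pairwise distinct, since mapping the distinct block-letters back onto a repeating letter sequence is a projection under which CA-expressibility is not obviously preserved (the local rule must then distinguish cells carrying the same symbol but belonging to blocks with different required velocities, which is precisely the feasibility/disambiguation problem you identify). So your proposal matches the paper's proof in substance, and your honest caveat points at a gap that the paper's own argument papers over rather than closes.
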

    We can assume without loss of generalization that $m \,{=}\,  2r$, by padding $\mybot$ if needed and rewriting $a_i^{k_i\cdot n}$ as $a_i^n\cdots a_i^n$ (with $k_i$ occurrences). By relabeling the first $r$ symbols as $\aneg{r},{\ldots},\aneg{1}$ and the last $r$ as $\apos{1},{\ldots} \apos{r}$, the result becomes:
     \begin{restatable}[Prop.\ref{prop:421} reformulated]{proposition}{propnonk}\label{prop:421- nonk}
        Let $r\,{\in}\,\mathbb{N}$ and $L_r$ be the language $\{ \aneg{r}^{n} {\ldots} \aneg{1}^{n} \apos{1}^{n} {\ldots} \apos{r}^{n}\allowbreak \,|\, n\,{\in}\,\mathbb{N}\}$ over $\Sigma \,{=}\, \{\mybot\}\cup \{{\aneg{i}}, {\apos{i}}\,|\, i\,{\in}\,[r]\}$.
        Then $L_r$ is \emph{glider-expressible}, that is, $L_r\,{\in}\,\caclassglider{\Sigma}{r}$, and thus is also {$\caclassshortREG{1}{r}{\Sigma}{REG}$}-expressible.
    \end{restatable}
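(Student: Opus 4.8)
The plan is to argue constructively, exhibiting a radius-$r$ CA together with a regular (indeed singleton) set of initial configurations, and then verifying that the glider system it induces is sound. Fix $I = \ppad(\{\aneg{r}\cdots\aneg{1}\apos{1}\cdots\apos{r}\})$, the $\mybot$-padding of the single word for $n=1$; this is regular, so establishing $L_r \in \caclassglider{\Sigma}{r}$ will yield $L_r \in \caclassshortREG{1}{r}{\Sigma}{REG}$ through \Cref{cor:glider-implies-ca}. I assign to each symbol a pair of velocities that make its block grow by exactly one cell per step while keeping the center of the word fixed: to $\apos{j}$ (for $1\le j\le r$) the velocities $j{-}1$ and $j$, and to $\aneg{k}$ (for $1\le k\le r$) the velocities $-(k{-}1)$ and $-k$, together with the quiescent glider $\g{\mybot}{0}$. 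All these velocities lie in $[-r,r]$, hence are admissible at radius $r$, and they realize the spreading states of \Cref{fig:spreading-state} (right/left-spread for $j,k=1$, shift-spread otherwise). The dominance order simply lets every spreading symbol override $\mybot$, i.e. $\g{\apos{j}}{\cdot} > \g{\mybot}{0}$ and $\g{\aneg{k}}{\cdot} > \g{\mybot}{0}$; I then let $\mainproc$ turn this ordered glider set into the partial rule $f$.

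The core of the argument is a single invariant, proved by induction on $t$: the configuration $G^{t}(c_0)$, where $c_0$ is the unique element of $I$, equals $\ppad(\aneg{r}^{\,t+1}\cdots\aneg{1}^{\,t+1}\apos{1}^{\,t+1}\cdots\apos{r}^{\,t+1})$. Placing the word so that $\apos{j}$ occupies $[(j{-}1)(t{+}1),\,j(t{+}1){-}1]$ and $\aneg{k}$ occupies $[-k(t{+}1),\,-(k{-}1)(t{+}1){-}1]$, a short computation shows that the images of the $\apos{j}$-block under the two velocities $\{j{-}1,j\}$ tile precisely the $\apos{j}$-block at time $t{+}1$, and symmetrically for the negative blocks, so consecutive blocks stay perfectly adjacent and each gains exactly one cell. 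For the inductive step I evaluate $f$ cell by cell from the explicit form of $G^{t}(c_0)$: a cell that should remain $\mybot$ lies outside the grown support, and by \Cref{cor:bound_inc} no spreading source can reach it within radius $r$, so only $\g{\mybot}{0}$ fires; a cell that should become $\apos{j}$ is reached by a velocity-$(j{-}1)$ or velocity-$j$ source carrying $\apos{j}$, which overrides the co-firing $\g{\mybot}{0}$. The decisive point is that two spreading sources of \emph{different} value never land on the same cell: this is exactly \Cref{cor:persist-glid} read through feasibility, since the word is always sorted ($\aneg{r}\cdots\apos{r}$ from left to right) and the inverted adjacency required for such a collision never occurs in $\orbitc{I}$. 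Hence each cell receives a unique value and the invariant propagates, giving $\mathcal{L}(\mathcal{A},I)=L_r$.

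It remains to check that the glider system $(\G,>)$ \emph{derived} from $\mathcal{A}|_I$ is sound in the sense of \Cref{sec:derived}. Condition (ii) is immediate: every value $f(w)$ produced is a symbol actually occurring in $w$ (a spreading source for a non-$\mybot$ output, the center for a $\mybot$ output), so each feasible neighborhood is explained by a firing glider; this is exactly the complement of the failure pattern of \Cref{obs:non-glider}. For condition (iii) I reuse the sorted-profile invariant: if $f(w)=\mybot$ then the radius bound of \Cref{cor:bound_inc} guarantees that no non-$\mybot$ glider fires on $w$, so every maximal element of $\G_w$ carries $\mybot$; and if $f(w)=\sigma^{\ast}\neq\mybot$ then the true-velocity glider $\g{\sigma^{\ast}}{i^{\ast}}$ (with $i^{\ast}$ one of the two velocities assigned to $\sigma^{\ast}$) fires, and because it is a spreading source it forces output $\sigma^{\ast}$ on \emph{every} feasible neighborhood on which it fires. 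Thus it dominates every co-firing glider of a different value, and all maximal elements of $\G_w$ carry $\sigma^{\ast}=f(w)$.

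The step I expect to be the main obstacle is condition (i), acyclicity of $>$, and it is delicate precisely because the \emph{derived} glider set is strictly larger than the $4r{+}1$ designed gliders: once a block is wider than $2r{+}1$, the monochromatic window $\sigma^{2r+1}$ becomes feasible and makes $\g{\sigma}{i}$ a derived glider for \emph{every} $i\in[-r,r]$, so spurious high-velocity copies of each symbol enter $\G$. The plan is to show that these spurious gliders, as well as $\g{\mybot}{0}$, can only ever be \emph{dominated}, never dominators: whenever $\g{\sigma}{i}>\g{\tau}{j}$ holds genuinely (their feasible-neighborhood constraint sets actually overlap), the sorted-profile invariant forces $\sigma\neq\mybot$ and $i$ to be one of the true velocities of $\sigma$, for otherwise one can vary the block length $n$ to obtain a feasible neighborhood with the same $(\sigma\text{ at }{-}i,\ \tau\text{ at }{-}j)$ pattern but a different output, contradicting domination. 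Since no two distinct-valued true-velocity gliders share a feasible neighborhood (the infeasibility of inverted adjacencies used above), there are no edges among the dominators, and $>$ collapses to a two-level relation whose graph is acyclic. With (i)--(iii) established, \Cref{lem:glider-sound} (applied through $\mainproc$) gives $L_r\in\caclassglider{\Sigma}{r}$, and \Cref{cor:glider-implies-ca} upgrades this to $\caclassshortREG{1}{r}{\Sigma}{REG}$-expressibility.
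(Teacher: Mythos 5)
Your proposal is correct and its backbone --- the singleton initial set $\ppad(\{\aneg{r}\cdots\aneg{1}\apos{1}\cdots\apos{r}\})$, the pair of velocities $\{j{-}1,j\}$ for $\apos{j}$ and $\{-(k{-}1),-k\}$ for $\aneg{k}$, and the induction showing that the two velocity-images of each block tile exactly the grown block at the next step --- coincides with the paper's construction in App.~\ref{app:full-proof-4.2.1}. Where you genuinely diverge is in the dominance structure and in what you choose to verify. The paper \emph{designs} an order in which inner symbols dominate outer ones ($\g{\apos{i}}{x}>\g{\apos{j}}{y}$ for $|i|<|j|$, same sign), reflecting the intuition that $\aneg{1},\apos{1}$ overwrite cells previously held by outer symbols, and then simply runs $\mainproc$; you instead observe that on \emph{feasible} neighborhoods two designed gliders of different values never co-fire (the required inverted adjacency, e.g.\ $\aneg{k}$ to the right of $\apos{j}$, never occurs along the orbit), so the only dominance that does any work is non-quiescent over quiescent. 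Both yield the same rule on $\FN_{\mathcal{A},I}$; your reading makes explicit that the paper's inner-over-outer order is vacuous when restricted to feasible windows. The second divergence is that you explicitly check soundness of the \emph{derived} system per Def.~\ref{def:glider-exp} --- including the spurious derived gliders such as $\g{\sigma}{i}$ for all $i\in[-r,r]$ arising from monochromatic windows $\sigma^{2r+1}$, and the acyclicity of the derived order --- which the paper leaves implicit (it verifies the language equality and asserts glider-expressibility). This extra step is the right one given that Def.~\ref{def:glider-exp} is stated in terms of the derived, not the designed, system; your resolution (spurious gliders and $\g{\mybot}{0}$ are only ever dominated, and $I$-persistence of the $4r$ designed gliders forces condition~(iii)) is sound, modulo the same charitable reading of vacuous domination on empty overlaps that the paper itself relies on.
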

    Before providing the proof, we illustrate the construction in the concrete case of $r\,{=}\,2$ in Ex.\ref{exmp:anbncndn}, where the symbols $\aneg{2},\aneg{1},\apos{1},\apos{2}$ are renamed $a,b,c,d$ respectively for readability. Note that the base case of $r \,{=}\,1$ is given in Clm.\ref{claim:CAvocl}.
    
       \begin{example}\label{exmp:anbncndn}\normalfont 
Let $r \,{=}\, 2$, $L \,{=}\, \{ a^n b^n c^n d^n ~|~ n \,{\in}\, \mathbb{N} \}$, and $F \,{=}\, \{ abcd \}$.
This corresponds to $L_2$ with $\aneg{2} \,{=}\, a$, $\aneg{1} \,{=}\, b$, $\apos{1} \,{=}\, c$, $\apos{2} \,{=}\, d$. The CA uses eight gliders, two for each symbol; they propagate the respective symbol at the appropriate velocities.
The inner letters ($b$ and $c$) need to shift less than the outer letters ($a$ and $d$). 
As a result, $b$ and $c$ override locations previously occupied by $a$ and $d$ resp. 
Formally, lower absolute index gliders ($|i|$, for $\apos{i}$), dominate the higher ones when overlapping, ensuring nested growth. App.\ref{subsec:rule-usage} provides a detailed illustration of $f$'s behavior.
\end{example}

\vspace{-2.25mm}

\begin{proof}[sketch]
We construct a CA $\mathcal{A} \,{\in}\, \caclassshortREG{1}{r}{\Sigma}{REG}$ that generates $L_r$ from an initial configuration $I\,{=}\,\ppad (\{\aneg{r}{\ldots}\aneg{1}\apos{1}{\ldots}\apos{r}\})$. 
Each symbol $i\,{\in}\,[-r,r]{\setminus}\{0\}$ is assigned to two gliders $\G_i\,{=}\,\{\g{\apos{i}}{i},\g{\apos{i}}{i-\sign(i)}\}$ with a velocity relative to $i$.
The set of induced gliders $\G \,{=}\, \bigcup_{i\in [r]} \G_i \cup \G_{-i}$, is associated with a strict partial order $>$, where; $\g{\apos{i}}{x} > \g{\apos{j}}{y}$ iff $\sign(i) \,{=}\, \sign(j)$, $|i|\,{<}\,|j|$, $\apos{i}\,{\neq}\,\apos{j}$, and $x\,{\neq}\, y$. As gliders with the same symbol or the same velocity are incomparable under $>$. 
Accordingly, the maximal gliders are those of the symbols $\aneg{1}$ and $\apos{1}$.
That is, the gliders of $\G_1$ and $\G_{-1}$ are maximal in $\G$; other gliders are suppressed when dominated.
The local rule function of the CA induced by applying $\mainproc$ to ${(\G,>)}$, determines the assignment of a neighborhood respecting the dominance relation induced by the gliders.   
This ensures deterministic growth of the nested pattern $c_n \,{=}\, {}^\omega \mybot\, \aneg{r}^{n+1} {\ldots} \aneg{1}^{n+1} \apos{1}^{n+1} {\ldots} \apos{r}^{n+1} \,\mybot^\omega$,
such that $\mathcal{L}(\mathcal{A},I) \,{=}\, L_r$. 
The full proof appears in App.\ref{app:full-proof-4.2.1}. \qed
\end{proof}
     \begin{figure}[ht]
     \centering{
    \refstepcounter{figure}
\addtocounter{figure}{-1}
        \scalebox{0.935}{{$\begin{array}{|l|l|}
        \hline
        \mybot  & \mybot \mybot \mybot \mybot \mybot   \hspace{0.525em}  \mybot \mybot \mybot \mybot a \\\hline
        a & \mybot \mybot \mybot aa  \hspace{0.525em}  \mybot \mybot \mybot ab  \hspace{0.525em} \mybot \mybot aaa  \hspace{0.525em} \mybot \mybot aab  \hspace{0.525em} \mybot \mybot abc  \hspace{0.525em} \mybot aaaa  
        \hspace{0.525em}
         \mybot aaab   \\
        & \text{\makebox[\widthof{$\mybot \mybot \mybot aa$}][c]{$\mybot aabb$}} \hspace{0.525em}  \text{\makebox[\widthof{$\mybot \mybot \mybot ab$}][c]{$aaaaa$}}  \hspace{0.525em} \text{\makebox[\widthof{$\mybot \mybot aaa$}][c]{$aaaab$}}  \hspace{0.525em}  \text{\makebox[\widthof{$\mybot \mybot aab$}][c]{$aaabb$}} \\\hline
        b & \text{\makebox[\widthof{$\mybot \mybot \mybot aa$}][c]{$\mybot abc\mybot$}}   \hspace{0.525em}  \text{\makebox[\widthof{$\mybot \mybot \mybot ab$}][c]{$aabbb$}}  \hspace{0.525em} \text{\makebox[\widthof{$\mybot \mybot aaa$}][c]{$aabbc$}}  \hspace{0.525em} \text{\makebox[\widthof{$\mybot \mybot aab$}][c]{$abbbb$}}  \hspace{0.525em} \text{\makebox[\widthof{$\mybot \mybot abc$}][c]{$abbbc$}}  \hspace{0.525em} \text{\makebox[\widthof{$\mybot aaaa$}][c]{$abbcc$}}  \hspace{0.525em} abc\mybot \mybot  \\
        & \text{\makebox[\widthof{$\mybot \mybot \mybot aa$}][c]{$bbbbb$}} \hspace{0.525em} \text{\makebox[\widthof{$\mybot \mybot \mybot ab$}][c]{$bbbbc$}}  \hspace{0.525em} \text{\makebox[\widthof{$\mybot \mybot aaa$}][c]{$bbbcc$}}  \hspace{0.525em} \text{\makebox[\widthof{$\mybot \mybot aab$}][c]{$bbcc\mybot$}}   \hspace{0.525em} \text{\makebox[\widthof{$\mybot \mybot abc$}][c]{$ bbccc$}} \\\hline
        c & \text{\makebox[\widthof{$\mybot \mybot \mybot aa$}][c]{$bc\mybot \mybot \mybot$}}   \hspace{0.525em} \text{\makebox[\widthof{$\mybot \mybot \mybot ab$}][c]{$bcc\mybot \mybot$}}    \hspace{0.525em} \text{\makebox[\widthof{$\mybot \mybot aaa$}][c]{$bccc\mybot$}}   \hspace{0.525em} \text{\makebox[\widthof{$\mybot \mybot aab$}][c]{$bcccc$}}  \hspace{0.525em} c{{\bot} \bot} {\bot \bot}   \hspace{0.525em} cc\mybot \mybot \mybot  
        \hspace{0.525em}
        ccc\mybot \mybot  \\
        & \text{\makebox[\widthof{$\mybot \mybot \mybot aa$}][c]{$cccc\mybot $}}   \hspace{0.525em} \text{\makebox[\widthof{$\mybot \mybot \mybot ab$}][c]{$ccccc$}} \\
        \hline
        \end{array}$}}

  \vspace{3mm}

         \scalebox{0.365}{
     \begin{tikzpicture}[baseline={(current bounding box.center)}]
     \draw[step=1cm, gray, thin] (7, 1) grid (20, -3);
        \foreach \i in {1,2,3,4} {
            \node at (6.25, 1.5-\i) {\huge ${\ldots}$};
            \node at (20.75,  1.5-\i) {\huge ${\ldots}$};
        }
        \fill[lightgray!50] (12, 0) rectangle ++(-1, 1);
        \fill[darkgray!50] (12, 0) rectangle ++(1, 1);
        \fillc{13}{0}{1}

        \fill[lightgray!50] (12, -1) rectangle ++(-2, 1);
        \fill[darkgray!50] (12, -1) rectangle ++(2, 1);
        \fillc{14}{-1}{2}
        
        \fill[lightgray!50] (12, -2) rectangle ++(-3, 1);
        \fill[darkgray!50] (12, -2) rectangle ++(3, 1);
        \fillc{15}{-2}{3}

        \fill[lightgray!50] (12, -3) rectangle ++(-4, 1);
        \fill[darkgray!50] (12, -3) rectangle ++(4, 1);
        \fillc{16}{-3}{4}

       \draw[step=1cm, black, thin] (7, 1) grid (20, -3);
       \draw[step=1cm, black, ultra thick, ->] (5.25,1) -> (5.25,-3); 
        \node[rotate=90] at (4.75, -1) {\Huge {Time}};
        \node[] at (13.5,-3.7) {\Huge \textbf{Illustration of}~$I\,{=}\, {}^{\omega}\mybot abc \mybot^{\omega}$ };
        \end{tikzpicture}
    }
}

 \caption[Example for $\{a^n b^n c^n \,|\, n\,{\in}\,\mathbb{N}\}$]{%
  $\{a^n b^n c^n \,|\, n\,{\in}\,\mathbb{N}\}$ {Legend:} $\bot$ {=} $\whitebox$, $a$ {=} $\abox$, $b$ {=} $\bbox$, $c$ {=} $\cbox$)%
}\label{fig:OCA-lang}
\end{figure}
The construction above demonstrates that CA which are \emph{glider-expressible}, can produce arbitrarily segment-aligned languages with precise symbolic spacing and coordination.
Crucially, the mechanism used to generate $L_r$ (i.e., the induced $(\G,>)$), does not depend on the specific value of each symbol.
Rather, it depends on the \emph{structure} and positioning of the segments, as well as on their bounded total width.
We now generalize this idea to arbitrary repeated blocks. 
   
\begin{restatable}{theorem}{gliderExp}\label{thm:w1ntowmn-glider}
        Let $m\,{\in}\,\mathbb{N}$, $w_1,{\ldots},w_m\,{\in}\,{\Sigma^{+}}$ and $r\,{=}\,\lceil\frac{1}{2}\sum_{i{=}1}^m |w_i|\rceil$.
        If the language $L\,{=}\,\{w_1^n w_2^n{\ldots} w_m^n~|~n\,{\in}\,\mathbb{N}\}$ is in $ \caclassshort{1}{r}{\Sigma}$, then, in particular, $L\,{\in}\, \caclassglider{\Sigma}{r}$.
    \end{restatable}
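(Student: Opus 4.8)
The plan is to generalize the construction behind \Cref{prop:421} from single symbols to arbitrary blocks $w_1,\dots,w_m$, and then to certify that the glider system \emph{derived} from a witnessing CA is sound. Write $\ell=\sum_{i=1}^m|w_i|$, $p_i=\sum_{j\le i}|w_j|$, and set $c_n\coloneqq\ppad(w_1^{\,n}\cdots w_m^{\,n})$. By hypothesis fix a witness $\mathcal{A}=(\Sigma,r,f)$ with regular $I=\ppad(F)$ and $\mathcal{L}(\mathcal{A},I)=L$; note $F\subseteq L$, so every configuration of $I$ is some $c_n$. The first step is to pin down the dynamics: since the words of $L$ have pairwise distinct lengths $n\ell$ and $\width{c_{n+1}}-\width{c_n}=\ell\le 2r$, \Cref{CApump:lemma} bounds each step's width increase by $2r<2\ell$, so the deterministic orbit must pass through the $c_n$ in order, giving $G(c_n)=c_{n+1}$ for all $n$. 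Hence every finite configuration reachable from $I$ is (a shift of) some $c_n$, and $\FN_{\mathcal{A},I}$ consists exactly of the width-$(2r{+}1)$ windows occurring in the $c_n$.

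Next I would read off the boundary velocities. Anchoring the supports so that block $i$ occupies the interval $[\,\delta+n\,p_{i-1},\,\delta+n\,p_i\,)$ for a fixed shift $\delta\in[-r,\,r-\ell]$ (such $\delta$ exists precisely because $\ell\le 2r$), the $i$-th block boundary moves at velocity $v_i\coloneqq\delta+p_i\in[-r,r]$, with $v_i-v_{i-1}=|w_i|$. The governing observation is that within block $i$ each offset symbol $\sigma=w_i[j{+}1]$ is carried by exactly the two gliders $\g{\sigma}{v_{i-1}}$ and $\g{\sigma}{v_i}$: the first reproduces copies $1,\dots,n$ of $w_i$ while the second, shifting the copy index by one since $v_i=v_{i-1}+|w_i|$, reproduces the complementary copies, so together they populate all $n{+}1$ copies of $w_i$ in $c_{n+1}$. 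This is the word-level analogue of the consecutive-velocity pairs used for $L_r$. A consequence I would record is that for every $w\in\FN_{\mathcal{A},I}$ the value $f(w)$ is a symbol occurring in $w$ whose source sits within distance $\max_i|v_i|\le r$ of the centre — which is exactly why $r=\lceil\ell/2\rceil$ is the correct radius (cf.\ \Cref{obs:non-glider}, whose obstruction would otherwise trigger if $r$ were smaller).

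With this in hand I would take $(\G,>)$ to be the glider system derived from $\mathcal{A}|_I$ (Sec.\ \Cref{sec:derived}) and verify the three soundness conditions of \Cref{def:glider-exp}, after which \Cref{lem:glider-sound} and \Cref{def:glider-exp} give $L\in\caclassglider{\Sigma}{r}$. Coverage (condition~ii) is the source-within-radius statement above. For condition~(iii) I would use the definition of the derived order: if $\g{\sigma}{i}$ dominated the glider realising $f(w)=\tau$, then by definition $f=\sigma$ on the feasible overlap containing $w$, contradicting $f(w)=\tau$; hence the $\tau$-glider is undominated in $\G_w$. For acyclicity (condition~i) I would argue that $>$ respects the nesting of blocks, inner blocks overriding outer ones, so that $|v_i|$ together with $\sign(v_i)$ serves as a height function ruling out cycles, generalising the order used for $L_r$.

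The main obstacle is the case where a symbol recurs — at several offsets inside one $w_i$, or across different blocks — so that it must be carried by more than one velocity pair. Such a symbol could a priori (a) create a dominance cycle, or (b) be spread by one of its gliders into a cell where a different symbol belongs, leaving a maximal-but-wrong glider in some $\G_w$ and breaking condition~(iii). This is precisely where the hypothesis $L\in\caclassshort{1}{r}{\Sigma}$ is essential: the existence of a consistent radius-$r$ rule $f$ generating $L$ certifies that on the feasible windows all competing velocity demands are resolved identically by $f$, so the derived relation $>$ is acyclic and every offending spread is overridden by a dominating glider. I expect the bulk of the work to be a case analysis of the windows straddling one or two block boundaries, showing that in each the symbol chosen by $f$ is the unique $>$-maximal glider of $\G_w$; the windows lying in the interior of a block are immediate from the periodicity of $w_i^{\,n}$.
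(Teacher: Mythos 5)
Your proposal diverges from the paper's proof in a fundamental way: the paper does not reason about the assumed witness at all. It directly \emph{constructs} a fresh glider system $(\G,>)$ --- by induction on $m$, with \Cref{lemma:shift} and \Cref{lemma-lem2} as base cases, assigning each block $w_i$ its own shift velocity $s_i$ and each symbol $\sigma$ of $w_i$ the glider pair $\g{\sigma}{s_i}$, $\g{\sigma}{s_i-|w_i|}$ --- and shows that the induced CA generates $L$ from $\ppad(\{w_1\cdots w_m\})$. Since glider-expressibility is an existential property of the language, exhibiting one sound witness suffices. You instead try to prove that the \emph{given} witness $\mathcal{A}|_I$ has a sound derived glider system, which is both stronger than what is needed and, as argued next, not something your first step actually establishes.

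The concrete gap is the claim that ``the deterministic orbit must pass through the $c_n$ in order, giving $G(c_n)=c_{n+1}$ for all $n$.'' The hypothesis $L\in\caclassshort{1}{r}{\Sigma}$ only requires \emph{some} $I\subseteq C_{\mathcal{F}}$ with $\mathcal{L}(\mathcal{A},I)=L$; it does not force $I$ to be a single configuration, nor even $\ppad$ of a regular set. Taking $I=\ppad(L)$ and $f$ the identity rule yields a legitimate witness with $G(c_n)=c_n$ for every $n$, so the step dynamics you rely on are not forced. Even under your (unwarranted) extra assumption that $I=\ppad(F)$ for regular $F$, the case $m=1$ gives a regular $L=\{w^n\mid n\in\mathbb{N}\}$ and admits the same identity witness. \Cref{CApump:lemma} only bounds how fast the width can grow; it says nothing about whether the orbit of any one seed visits more than one $c_n$, since distinct $c_n$ may be covered by distinct seeds of $I$. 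Everything downstream --- the boundary velocities $v_i$, the two-gliders-per-offset claim, and the soundness verification --- is predicated on this unproved description of the dynamics, so the argument does not go through as written. (Your closing paragraph also concedes that acyclicity and condition (iii) for recurring symbols remain open; note that the mere existence of a consistent rule $f$ does not by itself make the order derived from that $f$ acyclic.) The natural repair is to abandon the arbitrary witness and build the glider CA explicitly, which is exactly the route the paper takes.
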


\noindent
To prove this, we build on the idea that gliders can encode both \emph{repetition} and \emph{shifting}.
Intuitively, the construction uses gliders to repeatedly emit new copies of each $w_i$, in the correct order and alignment, from a compact initial configuration.
The following lemma provides the base case and will be used as a building block.

\begin{restatable}[Shift {\&} Concat]{lemma}{shiftlemm}\label{lemma:shift}
For any $w \,{\in}\, {\Sigma^{+}}$ and direction $d \,{\in}\, \{\emph{left}, \emph{right}\}$, there exists $\mathcal{A} \,{\in}\,\caclassglider{\Sigma}{r}$ that repeatedly shifts the existing copies of $w$ by $s\,{\in}\,\mathbb{Z}$ and appends a new copy to the $d$ side, resulting in the language $\{w^n \,|\, n\,{\in}\,\mathbb{N}\}$.
\end{restatable}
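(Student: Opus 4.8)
The plan is to realize the dynamics described in the statement directly as a glider system and then feed it to the reconstruction machinery of Section~\ref{sec:part-order}. Write $w = w_1 w_2 \cdots w_\ell$ with $\ell = |w|$, fix the initial set $I = \ppad(\{w\})$, and treat the shift $s \in \mathbb{Z}$ as a free parameter of the construction (its value will not affect the generated language, only the placement of the configurations). I present the case $d = \emph{right}$; the case $d = \emph{left}$ is symmetric. The intended one-step behaviour is: the current block $w^{n}$ is translated rigidly by $s$, and one fresh copy of $w$ is glued immediately to its right, producing a contiguous $w^{n+1}$. I would encode this with three families of gliders: a \emph{persist-and-shift} glider $\g{w_p}{s}$ for every $p \in [\ell]$, carrying the existing symbols by the shift; an \emph{append} glider $\g{w_p}{s+\ell}$ for every $p \in [\ell]$, whose larger velocity moves the rightmost old block exactly into the position of the new copy; and a single \emph{background} glider $\g{\bot}{s}$, which transports the quiescent symbol and resets the cells vacated by the shift back to $\bot$. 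I set $r = \max(|s|,\,|s+\ell|)$ so that all velocities lie in $[-r,r]$, order the gliders by $\g{w_p}{s+\ell} > \g{\bot}{s}$ for each $p$ (all other pairs incomparable), and define the rule by $f := \mainproc(\G, >, \Sigma, r)$.

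The first substantive step is to verify, by induction on $n$, that $G$ maps the configuration consisting of $w^{n}$ placed on some interval $[a_n, b_n]$ to $w^{n+1}$ placed on $[a_n + s,\, b_n + s + \ell]$. The key is a phase/residue argument: within a block the symbol at a position is determined by its index modulo $\ell$, so whenever both the persist glider $\g{w_p}{s}$ and the append glider $\g{w_{p'}}{s+\ell}$ fire at a common target cell, their two source cells differ by exactly $\ell$, hence share the same phase and carry the same symbol --- the two gliders therefore agree, and no genuine conflict of \emph{values} ever arises in the block interior. At the right frontier only the append glider has a non-quiescent source, so it correctly writes the new copy; at the trailing edge (for $s \neq 0$) and in the far quiescent region every symbol glider has a $\bot$ source and hence does not fire, so those cells are reset by $\g{\bot}{s}$. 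Because the CA language records only the word carried by the support and discards its absolute position, the cumulative shift $ns$ is immaterial, and I conclude $\mathcal{L}(\mathcal{A}, I) = \{w^n \mid n \in \mathbb{N}\}$.

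It remains to show the CA is glider-expressible, i.e.\ that the system derived from $\mathcal{A}|_I$ is sound in the sense of Section~\ref{sec:derived}; here I would check the three conditions in turn. Acyclicity is immediate, since $>$ only relates each append glider to the single background glider. For coverage, I would run through the four window types above (interior, frontier, trailing edge, far background) and exhibit, in each case, a glider of $\G$ that fires and whose value matches $f$. For the maximality condition, the only feasible neighbourhood claimed by two gliders of \emph{different} value is the frontier window, where $\g{w_p}{s+\ell}$ and $\g{\bot}{s}$ compete; there the order makes the append glider maximal, and $f$ indeed returns $w_p$, as required. All other overlaps are either same-value (interior) or infeasible (distinct phases forced to disagree), so maximality holds vacuously. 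By Definition~\ref{def:glider-exp} this yields $\mathcal{A} \in \caclassglider{\Sigma}{r}$.

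The delicate point --- and where I expect to spend the most care --- is the treatment of the quiescent background. It is tempting to use only the symbol-carrying gliders, but then the cells vacated by a nonzero shift (and the far quiescent windows needed for soundness condition~(ii)) would be claimed by no glider, leaving $f$ undefined on feasible neighbourhoods and breaking soundness. Introducing $\g{\bot}{s}$ fixes coverage but creates exactly one conflict, at the frontier, where the background would otherwise keep the new cell quiescent; this is precisely what the dominance $\g{w_p}{s+\ell} > \g{\bot}{s}$ is for. Confirming, via the phase argument, that this is the \emph{only} different-value conflict among feasible neighbourhoods --- so that the induced order stays acyclic and the derivation sound --- is the crux of the argument. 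For $d = \emph{left}$ the identical scheme works after replacing each append glider $\g{w_p}{s+\ell}$ by $\g{w_p}{s-\ell}$.
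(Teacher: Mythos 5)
Your proposal is correct and follows essentially the same construction as the paper's proof in App.\ref{app:shift-gliders}: one glider family at velocity $s$ to carry the existing block and a second family at velocity $s+\sign(d)\cdot|w|$ to append the fresh copy, with the radius chosen to accommodate both. Your write-up is in fact more careful than the paper's four-line sketch --- the explicit background glider $\g{\bot}{s}$ with its dominance over the append gliders, the phase-modulo-$|w|$ argument showing the two symbol families never disagree on a feasible window, and the verification of the soundness conditions are all details the paper leaves implicit (and genuinely needed: declaring both families \emph{persistent}, as the paper's proof literally does, would contradict Cor.\ref{cor:persist-glid} whenever $w$ contains two distinct letters, so the restriction to feasible neighborhoods that you make explicit is what actually saves the argument).
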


\noindent
The idea is to use persistent gliders with aligned shift velocities to {replicate} $w$ incrementally. Fig.\ref{fig:small-shift-concat} illustrates a word $w\,{=}\,\bbox\abox\abox\cbox$ that is shifted one step to the right, (i.e., $s\,{=}\,1$) and is concatenated to the \emph{right} of the shifted block.
This results in the glider set $\G\,{=}\,\left\{\g{\sigma}{+1}, \, \g{\sigma}{+5} ~|~ \sigma\,{\in}\,\{\abox,\bbox,\cbox\}\right\}$. To avoid clutter, we only annotated the gliders of $\bbox$ and $\cbox$ in the first step.
The proof of the lemma is given in App.\ref{app:shift-gliders}.
 \begin{figure}[ht]
    \centering
    \vspace{1mm}
    {\small 
     \begin{tikzpicture}[scale=0.445, every node/.style={scale=0.445}]
        \foreach \i in {1,2,3} {
            \node at (5.5, 1.5-\i) {\huge ${\ldots}$};
            \node at (22.5,  1.5-\i) {\huge ${\ldots}$};
        }
        \fill[darkgray!50] (7, 0) rectangle ++(1, 1);
        \fill[lightgray!50] (8, 0) rectangle ++(2, 1);
        \fillc{10}{0}{1}
        \fill[darkgray!50] (8, -1) rectangle ++(1, 1);
        \fill[lightgray!50] (9, -1) rectangle ++(2, 1);
        \fillc{11}{-1}{1}
        \fill[darkgray!50] (12, -1) rectangle ++(1, 1);
        \fill[lightgray!50] (13, -1) rectangle ++(2, 1);
        \fillc{15}{-1}{1}
        \fill[darkgray!50] (9, -2) rectangle ++(1, 1);
        \fill[lightgray!50] (10, -2) rectangle ++(2, 1);
        \fillc{12}{-2}{1}
        \fill[darkgray!50] (13, -2) rectangle ++(1, 1);
        \fill[lightgray!50] (14, -2) rectangle ++(2, 1);
        \fillc{16}{-2}{1}
        \fill[darkgray!50] (17, -2) rectangle ++(1, 1);
        \fill[lightgray!50] (18, -2) rectangle ++(2, 1);
        \fillc{20}{-2}{1}
        \draw[step=1cm, darkgray, thin] (6, 1) grid (22, -2);
        \blackonwhiteArrow{7.5}{0.5}{12.5}{-0.5}{thick}{black}
         \blackonwhiteArrow{7.5}{0.5}{8.5}{-0.5}{thick}{darkgray}
        \blackonwhiteArrow{10.5}{0.5}{15.5}{-0.5}{thick}{black}
        \blackonwhiteArrow{10.5}{0.5}{11.5}{-0.5}{thick}{darkgray}
        \blackonwhiteScaledOne{8.25}{-0.95}{\gfig{\smallbbox}{\boldsymbol{+1}}}{\huge}{\gfig{\smallwwhitebox}{\boldsymbol{+1}}}
        \blackonwhiteScaledOne{11.25}{-0.95}{\gfig{\smallcbox}{\boldsymbol{+1}}}{\huge}{\gfig{\smallwwhitebox}{\boldsymbol{+1}}}
        \blackonwhiteScaledOne{16.55}{-0.95}{\gfig{\smallcbox}{\boldsymbol{+5}}}{\huge}{\gfig{\smallwwhitebox}{\boldsymbol{+5}}}
        \blackonwhiteScaledOne{13.55}{-0.95}{\gfig{\smallbbox}{\boldsymbol{+5}}}{\huge}{\gfig{\smallwwhitebox}{\boldsymbol{+5}}}
          \draw[step=1cm, black, semithick, ->] (4.85,1) -> (4.85,-2); 
   
        \node[rotate=90] at (4.25, -0.65) {\huge {Time}};
        \end{tikzpicture}
    }
    \caption[]{%
  $w\,{=}\,\bbox\abox\abox\cbox$, $s\,{=}\,\text{+}1$, $d\,{=}\,$right
}\label{fig:small-shift-concat}
 \end{figure}
We now extend to two block patterns $L\,{=}\,\{w_1^n w_2^n \,|\, n\,{\in}\,\mathbb{N}\}$, which introduce asymmetry and spatial coordination.
When, without loss of generalization $|w_1|\,{>}\,|w_2|$, naive repetition breaks alignment; we overcome this by letting the boundary between the blocks ``move dynamically'' during growth.

\vspace{-1.25mm}
\begin{restatable}[Two-block case]{lemma}{wnwn}\label{lemma-lem2}
        Let $w_1, w_2\,{\in}\, \Sigma^+$, and define $r\,{=}\, \lceil{\frac{|w_1|+|w_2|}{2}}\rceil$.
        Then $L\,{=}\,\{w_1^n w_2^n \,|\allowbreak\, n\,{\in}\, \mathbb{N}\}\,{\in}\,\caclassglider{\Sigma}{r}$.
\end{restatable}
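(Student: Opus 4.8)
The plan is to realize $L$ as the orbit of $I=\ppad(\{w_1w_2\})$ under a CA governed by exactly three kinds of persistent gliders, and then to verify that the glider system read off from this CA is sound in the sense of \Cref{sec:derived}. Throughout I assume WLOG that $\ell_1:=|w_1|\ge \ell_2:=|w_2|$ (otherwise mirror the construction), so that $r=\lceil(\ell_1+\ell_2)/2\rceil$. The quantitative observation driving everything is that the width of $w_1^nw_2^n$ grows by exactly $\ell_1+\ell_2\le 2r$ per step, which by \Cref{CApump:lemma} is the largest increment a radius-$r$ CA can sustain. The whole difficulty is therefore to \emph{split} this growth between the two ends while keeping every glider velocity inside $[-r,r]$, since a single glider appending a full copy of $w_1$ would require velocity $\ell_1$, which may exceed $r$.

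Concretely, I first fix a split $a+b=\ell_1+\ell_2$ with $a,b\le r$ (e.g.\ $a=\lfloor(\ell_1+\ell_2)/2\rfloor$, $b=r$), and place the configuration $c_n$ with support $w_1^{n+1}w_2^{n+1}$ so that its left edge moves by $-a$ and its right edge by $+b$ each step. A short computation shows the internal boundary between the two blocks then drifts rightward by $\Delta:=\ell_1-a$, and that $0\le\Delta\le r$. The intended transition $c_n\mapsto c_{n+1}$ is generated by the glider set $\G=\{\g{\sigma}{\Delta},\g{\sigma}{-a}: \sigma \text{ occurs in } w_1\}\cup\{\g{\sigma}{\Delta},\g{\sigma}{b}: \sigma \text{ occurs in } w_2\}\cup\{\g{\mybot}{0}\}$: the shared drift glider $\g{\sigma}{\Delta}$ rigidly shifts the whole existing support right by $\Delta$ (this is the ``Shift'' of \Cref{lemma:shift} applied simultaneously to both blocks), while $\g{\sigma}{-a}$ appends a fresh copy of $w_1$ on the left and $\g{\sigma}{b}$ a fresh copy of $w_2$ on the right (the two ``Concat'' moves). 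Because $w_1$ and $w_2$ are periodic, each of these three velocities is phase-preserving, so every cell of $c_{n+1}$ receives exactly the value its glider carries; all three velocities lie in $[-r,r]$ by the bounds on $a,b,\Delta$.

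The heart of the argument is to equip $\G$ with the order that makes $\mainproc$ reconstruct this $f$, and to check the three soundness conditions of \Cref{sec:derived}. I would take the drift gliders to dominate the append gliders and every letter glider to dominate $\g{\mybot}{0}$, leaving the two append families $\{\g{\cdot}{-a}\}$ and $\{\g{\cdot}{b}\}$ incomparable; this is a three-level DAG, so acyclicity (condition (i)) is immediate. The pivotal fact, which I would establish by the induction proving $\orbitc{I}=\{c_n\}_n$, is that \emph{drift is correct wherever it applies}: $c_{n+1}[q]=c_n[q-\Delta]$ for every $q$ whose drift-source $q-\Delta$ lies in the support of $c_n$, and the only cells not covered this way are precisely the new leftmost copy of $w_1$ and the new rightmost copy of $w_2$, whose drift-source is quiescent and which are therefore filled by $\g{\cdot}{-a}$ and $\g{\cdot}{b}$. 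Given this, condition (ii) holds because every feasible neighbourhood is explained by drift, by an append, or by $\g{\mybot}{0}$, and condition (iii) reduces to checking that on each feasible overlap of two conflicting gliders the value is the one forced by the dominant glider.

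I expect the main obstacle to be this last consistency check in the presence of \emph{shared letters} between $w_1$ and $w_2$: a letter glider of one block could a priori also fire from a source lying in the other block and deposit a wrong symbol, and one must rule out a cycle forcing both $\g{\sigma}{-a}>\g{\rho}{b}$ and its reverse. The way I would close it is to show such spurious firings never decide a feasible cell: whenever $\g{\sigma}{-a}$ and $\g{\rho}{b}$ overlap on a feasible window, a direct position count (using $a+b=\ell_1+\ell_2$ and $b-\Delta=\ell_2$) forces the window's centre to be a drift cell, so the dominant $\g{\cdot}{\Delta}$ is present in $\G_w$ and fixes $f(w)$ to the drift value — no append glider is ever maximal in a wrong region, and the append families stay harmlessly incomparable. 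The padding and edge cases (first $w_1$ copy, last $w_2$ copy, and the quiescent exterior) fall out of the same case split on whether $q-\Delta$, $q+a$, $q-b$ lie in the support of $c_n$. Once these checks pass, the system derived from $\mathcal{A}|_I$ is sound and $\mathcal{L}(\mathcal{A},I)=L$, witnessing $L\in\caclassglider{\Sigma}{r}$ by \Cref{def:glider-exp}.
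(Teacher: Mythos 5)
Your construction is essentially the paper's: a single rigid drift of the existing $w_1^n w_2^n$ block (the paper's shift $s=k_1-r$ is your $\Delta$ with $a=r$) plus one left-append glider family for $w_1$ and one right-append family for $w_2$, resolved by a dominance order. If anything you are slightly more careful than the paper — you choose the split $a+b=\ell_1+\ell_2$ explicitly so that the appended copies abut the drifted block for either parity of $\ell_1+\ell_2$, and you actually verify the soundness conditions (in particular that the two append families never both decide a feasible window, via the $b-\Delta=\ell_2$ position count), where the paper's proof only asserts that no conflicts occur.
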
  
\noindent
The gliders that are associated with $w_1$ and $w_2$ interleave and overwrite part of each other's space, maintaining consistent symbolic alignment through carefully assigned shifts and dominance. See App.\ref{app:two-blocks} for the full proof.

In the general case, for arbitrary many words of arbitrary lengths, no global shift suffices. Each $w_i$ is defined with its own \emph{shift} and glider family. 
A key insight is that if $L$ is $\caclassshort{1}{r}{\Sigma}$-expressible, then its derived gliders and partial order are sound.
Thus, the CA can be reconstructed from the gliders.
Moreover, this construction naturally extends to languages where the powers differ per block. Formally,

\begin{restatable}{theorem}{corGliderExp}\label{Thm:glider-expr-pwr}
        Let $m\,{\in}\,\mathbb{N}$, $w_1,{\ldots}, w_m\, {\in}\, {\Sigma^+}\!$, $r \,{=}\, \lceil\frac{1}{2}\sum_{i{=}1}^m \, a_i{\cdot}|w_i|\rceil$ and $e_i(n)\!=\allowbreak a_i\,{\cdot}\,n\,{+}\,b_i$ be positive linear expressions, i.e., $a_i,\,b_i\,{\in}\,\mathbb{N}_0$ and $a_i\,{>}\,0$ for any $i\,{\in}\,[m]$.
        Consider $L = \{w_1^{e_1(n)}{\ldots} w_m^{e_m(n)}|\, n\,{\in}\,\mathbb{N}\}$.
        If $L\,{\in}\,  \caclassshort{1}{r}{\Sigma}$, then $L \,{\in}\,\caclassglider{\Sigma}{r}$.
\end{restatable}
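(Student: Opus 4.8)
The plan is to mirror the proof of Thm.\ref{thm:w1ntowmn-glider}, reducing the per-block linear exponents to its uniform-power machinery. First I would rewrite each block as $w_i^{e_i(n)} = (w_i^{a_i})^{n}\,w_i^{b_i}$ and set $u_i := w_i^{a_i}$, so the word at level $n$ is $W_n = u_1^{n}w_1^{b_1}\cdots u_m^{n}w_m^{b_m}$: the growing content is the uniform pattern $u_1^n\cdots u_m^n$, while each $w_i^{b_i}$ is a static backbone lying inside the $w_i$-periodic interior of block $i$. The key bookkeeping is that $\sum_i |u_i| = \sum_i a_i|w_i| =: k$, so the per-step width increment is exactly $k$ and the radius $r = \lceil \tfrac12 k\rceil$ is precisely the radius Thm.\ref{thm:w1ntowmn-glider} assigns to $\{u_1^n\cdots u_m^n\}$. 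Thus the constants $b_i$ enlarge only the seed $W_1 = u_1 w_1^{b_1}\cdots u_m w_m^{b_m}$ and change neither the growth rate, the glider velocities, nor the set of junction types.

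Next I would use the hypothesis $L\in\caclassshort{1}{r}{\Sigma}$ to fix a witnessing CA $\mathcal{A}$ generating $L$ at radius $r$; since $L$ is the single length-increasing chain $W_1,W_2,\dots$ with increment $k$ and $r$ is minimal, the orbit is forced to realize $\pad(W_1)\to\pad(W_2)\to\cdots$, so I may take $I=\ppad(\{W_1\})$ and derive the glider system $(\G,>)$ as in Sec.\ref{sec:derived}. Because the global rule preserves positions, $\mathcal{L}(\mathcal{A},I)=L$ pins down $f$ on every feasible neighborhood: whenever $w=\pad(W_n)[z{-}r,z{+}r]$ one must have $f(w)=\pad(W_{n{+}1})[z]$. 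Hence $f|_{\FN_{\mathcal{A},I}}$ is determined by the words $W_n$ alone, and the only service the hypothesis performs is to guarantee that this position-wise recipe is \emph{single-valued} — i.e.\ that a window recurring at different places or times always demands the same output — which is exactly what can fail when the $w_i$ share symbols. Granted well-definedness, proving $L\in\caclassglider{\Sigma}{r}$ reduces, through Def.\ref{def:glider-exp} and Lem.\ref{lem:glider-sound}, to checking that this forced restriction yields a \emph{sound} derivation.

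I would then verify the three soundness conditions. Acyclicity of $>$ holds because dominance is inherited from the nesting order of the blocks — inner blocks, transported by the smaller-velocity gliders, overwrite outer ones — which strictly decreases a positional depth and so admits no cycle; condition (iii), that the maximal compatible glider predicts $f(w)$, then follows since at each overlap the innermost applicable block is exactly the one the forced $f$ selects. Condition (ii) is the heart of the matter: every feasible $w$ must output a symbol present in $w$, i.e.\ $\pad(W_{n{+}1})[z]=\pad(W_n)[z{-}i]$ for some $i\in[-r,r]$, witnessing a glider $\g{\sigma}{i}\in\G$ with $w\in\LocAt{i}{\sigma}$. In the interior of each block this is immediate: the content is $|w_i|$-periodic and is transported by the block's glider family, all of velocity at most $r$ in absolute value, and the static backbone $w_i^{b_i}$ — lying in that interior — is merely carried by the non-growing (shift) gliders of the same symbols.

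The hard part, and the main obstacle, is the boundedly many boundary neighborhoods: the junctions $w_i w_{i+1}$ between consecutive blocks and the two extreme ends. Here I would exploit the minimality of $r$: the width grows by exactly $k$ while Obs.\ref{cor:bound_inc} caps each side's advance by $r$, and $k\in\{2r{-}1,2r\}$, so the growth is (near-)maximal and forces a symbol-preserving correspondence that displaces each cell of $W_{n+1}$ from a cell of $W_n$ by at most $r$ — leaving no spare radius in which the forced $f$ could manufacture a symbol absent from its window. Were a boundary cell to violate this, Obs.\ref{obs:non-glider} would place $L$ outside $\caclassglider{\Sigma}{r}$, contradicting the tight width accounting; this flow/counting argument for the junctions is where the real work lies. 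Since the backbones $w_i^{b_i}$ introduce no junction type beyond those already present in $\{u_1^n\cdots u_m^n\}$ (each sits within a run of the $w_i$-pattern), no new feasible-neighborhood type appears and the verification coincides with that of Thm.\ref{thm:w1ntowmn-glider}. Applying $\mainproc$ to $(\G,>)$ then reconstructs $f|_{\FN_{\mathcal{A},I}}$ by Lem.\ref{lem:glider-sound}, yielding $L\in\caclassglider{\Sigma}{r}$.
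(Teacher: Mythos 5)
Your opening move --- rewriting $w_i^{e_i(n)}$ as $(w_i^{a_i})^{n}\,w_i^{b_i}$, treating $u_i=w_i^{a_i}$ as the growing composed block and $w_i^{b_i}$ as a static backbone carried only by shift gliders, with the radius governed by $\sum_i a_i|w_i|$ --- is exactly the reduction the paper performs; the paper likewise discharges the growing part by invoking Thm.~\ref{thm:w1ntowmn-glider} and pre-loads the $b_i$ copies into the seed configuration. Where you diverge is in the second half: the paper is purely constructive (it builds the glider set and dominance relation explicitly, applies $\mainproc$, and argues by induction that the induced CA steps ${}^{\omega}\bot\,W_n\,\bot^{\omega}$ to ${}^{\omega}\bot\,W_{n+1}\,\bot^{\omega}$, using the hypothesis $L\,{\in}\,\caclassshort{1}{r}{\Sigma}$ only to rule out conflicts among the constructed gliders), whereas you start from an assumed witnessing CA, derive $(\G,>)$ from it, and verify the three soundness conditions of Sec.\ref{sec:derived} directly.

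That analytic route has a genuine gap. The hypothesis gives you \emph{some} $\mathcal{A}$ and $I\,{\subseteq}\,C_{\mathcal{F}}$ with $\mathcal{L}(\mathcal{A},I)=L$; it does not force the orbit to realize the chain $\pad(W_1)\to\pad(W_2)\to\cdots$, and you cannot simply swap in $I=\ppad(\{W_1\})$ for that witness. For instance, a witness may take $I=\ppad(L)$ with (near-)identity dynamics, so that no configuration ever advances from $W_n$ to $W_{n+1}$; your central claim that ``$f|_{\FN_{\mathcal{A},I}}$ is determined by the words $W_n$ alone'' then fails, and the entire soundness verification is anchored to a stepping behaviour the witness need not exhibit. (Cor.~\ref{CApump:lemma} only bounds how far a single step \emph{can} advance the width; it does not show that a step \emph{must} advance it, so neither the length gaps nor the minimality of $r$ rescues the forcing claim.) The repair is precisely what the paper does: construct the witness yourself from the glider system of Thm.~\ref{thm:w1ntowmn-glider} together with shift gliders for the backbones, so that the chain dynamics holds by construction. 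Separately, the junction analysis you flag as ``where the real work lies'' is left as a counting sketch rather than carried out; in fairness, the paper's own correctness argument at that point is comparably terse, so this second issue is a shared weakness rather than one specific to your write-up.
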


\noindent
The proof of Thm.\ref{Thm:glider-expr-pwr}, including glider resolution, appears in App.\ref{app:block-repetition}.
A specific example, $(aba)^{2n}c^{n+1}$, is given in App.\ref{full-complex}.
    
\section{Discussion}\label{sec:discussion}
We set out to explore the symbolic expressiveness of deterministic, one dimensional cellular automata (CA) through a new lens--the \emph{generative perspective}.
To this end, we introduced a glider-based framework, where global behavior is induced by finitely many local symbolic patterns (``gliders'') equipped with a strict partial order that specifies their dominance relation upon collision. 
This dominance order formalizes which glider prevails in each interaction, ensuring that the resulting dynamics are compositional and well-defined. 
The framework naturally led us to define the class of \emph{glider-expressible languages}--those CA languages generated by sound glider systems associated with a CA in $\caclassshort{1}{r}{\Sigma}$, i.e., finitely many gliders interacting under such a dominance relation.

We showed that, despite their syntactic constraints, the class of \emph{glider expressible CA}, and therefore CA expressible in general, captures complex and interesting behaviors examples include: synchronized multi-counter patterns such as $a^n b^n c^n d^n$, which illustrate fair division of unbounded resources, and asymmetric growth patterns such as $(aba)^{2n}c^{n+1}$, which reflect skewed resource consumption.
These examples highlights that the glider generative model retains sufficient structural richness to encode nontrivial, long-range dependencies.

From a verification standpoint, this generative perspective provides a direct language-theoretic handle. 
Let $(\G,>)$ be a sound glider system derived from $\mathcal{A}\,{\in}\,\caclassshort{1}{r}{\Sigma}$ and an initial regular set $I \,{=}\, \ppad(L) \,{\subseteq}\,C_{\mathcal{F}}$, for $L\,{\in}\, \class{REG}$, and let $B\,{=}\,\ppad(\mathcal{B})$ denote the regular specification of bad configurations, where $\mathcal{B}\,{\in}\,\class{REG}$.
The safety question ``$\exists t \,{:}\, G^t(c_0)\,{\in}\, B$ for some $c_0\,{\in}\, I$?''
Can be reduced to checking whether $\mathcal{L}(\mathcal{A},I)\cap \mathcal{B} \neq \emptyset$. If the generated language is regular, classical forward or backward regular model checking applies directly~\cite{Bouajjani2000}. 
When $\mathcal{L}(\mathcal{A},I)$ is a \emph{context-free language} (CFL), the problem is still decidable~\cite{FismanP01}: CFLs are closed under intersection with regular languages, and the emptiness problem is decidable, enabling workflows in the spirit of forward regular model checking.
However, for systems whose generated language is not context-free (as in our multi-counter examples), this reduction no longer applies.
This motivates the need for new verification techniques capable of exploiting the structured generative semantics that our framework exposes.

Overall, the glider-based viewpoint provides an interpretable and compositional bridge between symbolic dynamics and language-theoretic verification. By exposing the generative mechanisms behind CA evolution, it supports reasoning about system that are simultaneously unbounded and highly structured, and highlights the potential of CA under this generative perspective; as analyzable models for emergent computations.

\begin{credits}
\subsubsection{\ackname}
Noa Izsak carried out this work in part as a member of the Saarbr\"ucken Graduate School of Computer Science.
\end{credits}

\bibliographystyle{splncs04}
\bibliography{references}

\newpage
\appendix
\begin{center}
    \Large \textbf{Appendix}
    \vspace{-1.5mm}
\end{center}
\section{Complete proofs - Properties of CA 
\Cref{subsec:CAprop}
}\label{app:CAproperties}
\vspace{-1mm}
\bounddiff*
\begin{proof}
         Let $\mathcal{A}\,{\in}\, \caclassshort{1}{r}{\Sigma}$, and let $c\,{\in}\, C_{\mathcal{F}}$ be a finite configuration. 
         By Cor.\ref{cor:bound_inc}, the gap between $\width{c}$ to $\width{G(c)}$ is $k$, which is at most $2{\cdot}r$.
         Let $k$ be the maximal value, that is, $2r$. 
         By the definition of \term{finite configuration}, only cells outside the quiescent neighborhoods can become active within $\interval{G(c)}$. 
         As $G(C)$ may affect only the boundary of the $\intervalInt$, precisely $2r$ such cells can appear. 
         \qed
         \end{proof}
Based on the above, another corollary emerges.
\begin{corollary}
        Let $L \,{=}\, \mathcal{L}(\mathcal{A},c)$ a $k$-difference bounded language. Then $L$ is not CA-expressible with $r\,{\leq}\,\lfloor\frac{k \,{-} \,1}{2}\rfloor$.
    \end{corollary}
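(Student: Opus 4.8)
The plan is to prove the statement in the equivalent quantitative form: if $L=\mathcal{L}(\mathcal{A},c)$ for a single finite configuration $c$ and $\mathcal{A}\in\caclassshort{1}{r}{\Sigma}$, then the difference-bound $k$ of $L$ satisfies $k\le 2r$. This suffices, since $k\le 2r$ is equivalent to $r\ge\lceil k/2\rceil=\lfloor(k-1)/2\rfloor+1$, so no radius $r\le\lfloor(k-1)/2\rfloor$ can generate $L$. I would work along the single orbit $c_0,c_1,\dots$ with $c_n=G^n(c)$, and set $w_n\coloneqq\width{c_n}$; the tight word contributed by $c_n$ has length exactly $w_n$, so the set of word-lengths occurring in $L$ is $S=\{w_n\mid n\ge 0\}$. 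For the difference sequence to be infinite $L$ must be infinite, and since $\Sigma$ is finite this forces $\sup_n w_n=\infty$, hence $S$ is an infinite subset of $\mathbb{N}_0$; the difference-bound $k$ is exactly the largest gap $\ell'-\ell$ between consecutive elements $\ell<\ell'$ of $S$.

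First I would invoke \Cref{CApump:lemma} to get the one-sided increment bound $w_{n+1}-w_n\le 2r$ for all $n$ (decrements are not controlled). The core step is an up-crossing argument that upgrades this to a two-sided bound on gaps. Fix consecutive elements $\ell<\ell'$ of $S$. Since $\ell\in S$ some index has $w_n\le\ell$, and since $\sup_n w_n=\infty$ infinitely many indices have $w_n>\ell$; taking the last index with $w_n\le\ell$ before a subsequent index with $w_{n+1}>\ell$ yields an \emph{up-crossing}, i.e. an $n$ with $w_n\le\ell<w_{n+1}$. The increment bound then gives $w_{n+1}\le w_n+2r\le\ell+2r$, and since $w_{n+1}\in S$ with $w_{n+1}>\ell$, minimality of $\ell'$ among elements of $S$ above $\ell$ forces $\ell'\le w_{n+1}\le\ell+2r$. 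Hence every gap is at most $2r$, so $k\le 2r$, and the reduction above closes the proof.

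The main obstacle is exactly the asymmetry of \Cref{CApump:lemma}: it bounds growth but not shrinkage, so a single large drop in width could naively open an unbounded gap in $S$. The up-crossing argument is what neutralises this, and it is here that infiniteness of $L$ is indispensable — it guarantees that after any drop the orbit must climb back above every previously visited level, and a climb with increments $\le 2r$ cannot leap over any level by more than $2r$. I would also stress that the single-configuration hypothesis $I=\{c\}$ is essential: for a set-valued $I=\ppad(F)$ the orbit is a union of independent trajectories whose lengths need not form one bounded-increment sequence (indeed every regular $F$ is already radius-$0$ expressible irrespective of its gaps). Finally, in the writeup $k$ should be read as the (minimal) difference-bound of $L$; inserting a non-tight upper bound for $k$ would over-claim the excluded radii.
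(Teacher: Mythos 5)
Your proof is correct and rests on the same two pillars as the paper's own argument: the width-increment bound of Cor.\ref{CApump:lemma} and the existence of a maximal gap of size $k$ in the length spectrum of $L$. Where you genuinely depart from the paper is in how that gap is forced onto a single application of $G$. The paper simply assumes ``without loss of generality'' that the configuration of width $|w_{i+1}|$ is obtained in one step from the configuration of width $|w_i|$, with only a sketchy justification; your up-crossing argument (after visiting level $\ell$ the orbit must eventually climb back above $\ell$, and the first step that does so lands at most $2r$ above it) addresses the real difficulty, namely that widths along an orbit need not be monotone and Cor.\ref{CApump:lemma} bounds growth but not shrinkage. This extra care is not cosmetic: it is exactly what rescues the argument when the orbit reaches the larger width before the smaller one, a case the paper's WLOG silently skips, and it makes explicit that infiniteness of $L$ (equivalently, unboundedness of the widths) and the single-configuration hypothesis $I=\{c\}$ are both genuinely used. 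Your quantitative restatement $k\le 2r$ and the arithmetic converting it back to the excluded range $r\le\lfloor (k-1)/2\rfloor$ are also correct.
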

    \begin{proof}%
        By the definition of \emph{$k$-difference-bounded language}, any two consecutive length words, i.e., any $w_j,w_{j+1}\,{\in}\, L$, satisfies $\width{w_{j+1}} {-}\width{w_j}\,{\leq}\, k$. 
        In particular, there exists an $i\,{\in}\,\mathbb{N}$ such that $\width{w_{i+1}} {-}\width{{w_{i}}}\,{=}\,k$ and there is no word of length between $\width{{w_{i}}}$ and $\width{w_{i+1}}$.
        Assume toward contradiction that there is a CA $\mathcal{A}\,{\in}\,\caclassshort{1}{r}{\Sigma}$ for $r\,{\leq}\,{\lfloor{\frac{k\mminus 1}{2}}\rfloor}$ such that $L$ is CA-expressible via $\mathcal{A}$.
        Without loss of generalization, assume that $^{\omega} \mybot w_{i+1} \mybot ^{\omega}  \,{\in}\, \orbitc{{}^{\omega} \mybot {w_{i}} \mybot ^{\omega} }$, in particular, that ${}^{\omega} \mybot w_{i+1} \mybot ^{\omega}\,{=}\,G({}^{\omega} \mybot {w_{i}} \mybot ^{\omega} )$. 
 
        Since, if ${}^{\omega} \mybot w_{i+1} \mybot ^{\omega} \,{=}\,G^n({}^{\omega} \mybot {w_{i}} \mybot ^{\omega} )$ for some $n\,{>}\,1$, then necessarily, for any ${w''} \,{=}\,  G^{n'}({}^{\omega} \mybot {w_{i}} \mybot ^{\omega} )$ with $n'\,{<}\,n$, it holds that $|\width{w''} {-}\width{w_{i}}| \,{\leq}\,k$ and similarly, $|\width{w''} {-}\width{w_{i+1}}| \,{\leq}\,k$.
        Thus, we assume that $w_{i+1}$ immediately follows ${w_{i}}$.
        By applying the global function:
        \begin{center}$\width{w_{i+1}}{-\,}\width{{w_{i}}}\,{=}\,\width{G({w_{i}})}{-\,}\width{{w_{i}}}$
        \end{center}
        By Cor.\ref{CApump:lemma} we have: $\width{G({w_{i}})}{-\,}\width{{w_{i}}}\,{\leq}\,2{\cdot}r$. 
        Assuming $r\,{\leq}\,\lfloor\frac{k\mminus 1}{2}\rfloor$, then $2{\cdot} r\,{\leq}\,k\,{-}\,1$.
        So we got that; $\width{w_{i+1}}{- \, } \width{w_{i}} \,{\leq}\,k{-}1 $, while we started by saying that $\width{w_{i+1}}{-}\, \width{w_{i}} \,{=}\, k$, \emph{contradiction}. Therefore, $L$ is not $\caclassshort{1}{r}{\Sigma}$-expressible for $r \,{\leq}\,\lfloor\frac{k\mminus 1}{2}\rfloor$.
        \qed
    \end{proof}
        \section{OCA Definitions -- \Cref{subsec:onecounter}
        }\label{app:oca-defs}
        \vspace{-1mm}
        \begin{definition}[One-counter automata (OCA)]
            An one-counter automaton $\mathcal{A}$ is a tuple $(\Sigma,Q,q_0,F,\delta)$ where:
         \begin{itemize}[topsep=0.25mm]
             \item $\Sigma$ is a finite input alphabet, 
             \item $Q$ is a finite set of states, with initial state $q_0\,{\in}\, Q$, 
             \item $F\,{\subseteq}\, Q$ is the set of final (accepting) states, and
            \item $\delta\,{:}\, Q\times (\Sigma \cup \{\varepsilon\} ) \times \{\,{=} 0,\ {>}0\} \rightharpoonup 2^{Q\times \{\miniplus 1, \miniminus 1, 0 \}}$ is the transition relation.
         \end{itemize}
        \end{definition}
    The OCA $\mathcal{A}$ is \term{sound} if for every $(q',op)\,{\in}\allowbreak\delta(q,\sigma,{{=} 0})$, where $q'$ is the new state and $op$ is the counter operation, i.e., $op\,{\in}\,\{{+}1, {-}1, 0 \}$; it holds that $op\,{\neq}\, {-} 1$, namely it never decrements if the counter is zero. We consider only \term{sound OCA}.
\vspace{-2mm}
\begin{definition}[Configuration]
A \term{configuration} of $\mathcal{A}$ is a tuple of $(q,n)\,{\in}\allowbreak Q\,{\times}\,\mathbb{N}$, where $q$ is the current state and $n$ is the current counter value. The initial configuration is $(q_0,0)$.
\end{definition}
\begin{definition}[Counted Run]
    A \term{counted run} of an OCA $\mathcal{A}$ is a sequence of configurations with a legal transition between any consecutive pair, annotated by the read symbols.
    That is, a counted run is either a single state $(p_0, n_0)$ or a non-empty sequence of transitions $(p_0,n_0){\xrightarrow{\sig{1}}} (p_1,n_1) {\ldots}{\xrightarrow{\sig{\ell}}} (p_\ell, n_\ell)$, abbreviated as $(p_0,n_0)\,{\xrightarrow{\sig{1} \ldots \sig{\ell}}}\, (p_\ell, n_\ell)$.
\end{definition}
\paragraph{Cruns.} The set of counted-runs of $\mathcal{A}$ is denoted $\crunso{\mathcal{A}}$.  

\vspace{-3mm}

\subsection*{One-counter language (OCL)}
    A language of an OCA $\mathcal{A}$, is the set of all words that label a counted run from $(q_0,0)$ to some configuration $(q,0)$ where $q\,{\in}\, F$.
\begin{definition}[OCA's Language]
        The language of an OCA $\mathcal{A}$, denoted $\mathcal{L(A)}$, is $\{w\,{\in}\, \Sigma^* \,|\, \exists\, q\,{\in}\, F \text{. } (q_0,0) {\xrightarrow{w}} (q,0)\,{\in}\, \crunso{\mathcal{A}}\}$.
        \end{definition}
\paragraph{Sub-classes of OCA.} 
While OCA are inherently non-deterministic, various works have considered their deterministic variant \cite{valiant1975deterministic,bohm2013equivalence,mathew2025learning}.
    An OCA is \term{deterministic} (\term{DOCA}) if (1) the right-hand side (RHS) of $\delta$ is either: a singleton or the empty set, and (2) 
    for all $q\,{\in}\, Q,\ \sigma\,{\in}\, \Sigma$ and $g\,{\in}\, \{\,{=}\,0,\ {>}0\}$ we have that $\delta(q,\sigma,g)$ is defined only when $\delta(q,\varepsilon,g)$ is not.
    That is, despite having $\varepsilon$-transitions, the automaton is considered deterministic, since an $\varepsilon$-transition from state $q$ with guard $g$, are allowed only if there is no $\sigma$-transition from $q$ with that guard $g$.
    Note, however, that a DOCA may have more than one run on a particular word since the number of $\varepsilon$-transitions may be unbounded.

    A DOCA is \term{real-time} (\term{ROCA}) if it does not have $\varepsilon$-transitions.
    Lastly, a ROCA is \term{visible} (\term{VOCA}) if it satisfies that for every $\sigma\,{\in}\,\Sigma$, 
    $q,p\,{\in}\, Q$,  $g\,{\in} \{{=} 0,\ {>}0\}$ and $o\,{\in} \{{{+}1},\allowbreak {{-}1},0\}$; if
    $\delta(q,\sigma,g)\,{\in}\, Q{\times} \{o\}$ then 
    $\delta(p,\sigma,g)\,{\in}\, Q{\times} \{ o\}$.
    That is, every time we read a symbol $\sigma \,{\in}\, \Sigma$, we must perform the same counter operation. 
    Therefore, the modification of the counter depends \emph{only} on the input symbol.\footnote{This is a restriction of \emph{visibly pushdown automata}~\cite{VisiblyPushdownLanguages04} to counter automata.}
    For example, the language $\{a^{n}b^{n}\,|\,n\,{\in}\,\mathbb{N}\}$ is recognized by a VOCA, whereas $\{a^{n}ba^{n}\,|\,n\,{\in}\,\mathbb{N}\}$ is not.
\vspace{-3mm}
     \subsection*{One counter languages and CA languages (Prop.\ref{OCA-proposition})}\label{app:OCLandCA}
The following languages can be used to differentiate the classes \cite{staquet2024active}. 
    \begin{enumerate}[itemindent=0.1cm,topsep=0.5mm,itemsep=1mm,parsep=1mm]
        \item\label{claim:vocl} $L_{VOCL} \,{=}\,  \{a^nb^n  \,|\,  n\,{>}\,0\} \,{\in}\, \class{VOCL} \setminus \class{REG}$ %
    
        \item\label{claim:rocl} $L_{ROCL} \,{=}\, \{a^n b a ^n  \,|\,  n\,{>}\,0\} \,{\in}\, \class{ROCL} \setminus \class{VOCL}$%
        
        \item\label{claim:docl} $L_{DOCL} \,{=}\,  \{ a^n b^m c  \,|\, n\,{\geq}\,m\,{>}\,0\} \,{\in}\, \class{DOCL} \setminus \class{ROCL}$ %
        
        \item\label{claim:ocl} $L_{OCL} \,{=}\,  \{ a^n b^m  \,|\,  n\,{\geq}\,m\,{>}\,0\}  \,{\in}\, \class{OCL} \setminus \class{DOCL}$%
    \end{enumerate}

 \smallskip
\noindent
    It is well established that visibly OCA (VOCA) are strictly less expressive than real-time OCA (ROCA), which are strictly less powerful than deterministic OCA (DOCA), all of which lie strictly within the full class of OCA.
    See \cite{Herbst1991,hopcroft_ullman_automata_2001,staquet2024active,VisiblyPushdownLanguages04} 
    for formal definitions and proofs. 
    Using the class of $\caclassshortREG{1}{r}{\Sigma}{REG}$, we show language interactions for each of these complexity levels at \Cref{subsec:onecounter}.
\section{Evolution of $L \,{=}\, \{ \aneg{2}^n\aneg{1}^n \apos{1}^n  {\apos{2}}^n \,|\, n\,{\in}\,\mathbb{N}\}$ (\Cref{exmp:anbncndn})}\label{subsec:rule-usage}     
\emph{Intuition.} Before we dive into the example, we wish to provide some intuition about the construction. Let us first recall what this CA must do:
\begin{itemize}[itemindent=0cm,topsep=0.5mm,itemsep=0.5mm,parsep=0mm]
    \item Given initial configuration of one instance of each $\aneg{2},\aneg{1},\apos{1}$, and $\apos{2}$, it must \textbf{grow} the string on both sides such that at time step $n$ (i.e., $G^{n-1}(c_0)$), the configuration contains: $\aneg{2}^n \aneg{1}^n \apos{1}^n \apos{2}^n$.
    \item The left growing components $\aneg{2},\aneg{1}$ must \emph{push} left, and right-growing components $\apos{1},\apos{2}$ must push right.
\end{itemize}
\paragraph{The challenge.} is to ensure that the four symbols grow in the correct counts and positions, maintaining strict block separation: $\aneg{2}^n\aneg{1}^n \apos{1}^n  {\apos{2}}^n$
\paragraph{Strategy Overview.} To grow this structure, the CA uses gliders that propagate and interact to generate new copies of the required symbols.
\begin{itemize}[itemindent=0cm,topsep=0.5mm,itemsep=0.5mm,parsep=0mm]
    \item The evolution works in \textbf{waves}: at each step, the CA expands the number of copies of each symbol by one while preserving their relative order and overall block structure.
    \item Essentially, each symbol \emph{copies itself} into new positions -- this copying may \emph{overwrite} symbols from other blocks, which then must also regenerate ``more aggressively'' to recover both their lost symbols and continue growing.
\end{itemize}
 \begin{figure}[hb]
    {\small
\begin{multicols}{4}
    \begin{enumerate}
    \setlength{\itemindent}{-1em}
  \setlength{\leftmargin}{1.5em}
  \setlength{\itemsep}{0.3em}
  \setlength{\labelwidth}{1.25em}
  \setlength{\labelsep}{0.25em}
  \renewcommand{\labelenumi}{{\scriptsize\theenumi.}} 
        \item $\text{\makebox[\widthof{$f(d \mybot \mybot \mybot \mybot )$}][c]{$f(ddddd)$}}\,{=}\,d$
        \item $\text{\makebox[\widthof{$f(d \mybot \mybot \mybot \mybot )$}][c]{$f(dddd \mybot)$}}\,{=}\,d$
        \item ${\text{\makebox[\widthof{$f(d \mybot \mybot \mybot \mybot )$}][c]{$f(ddd \mybot \mybot )$}}\,{=}\,d }$
        \item\label{item:d:dd---} ${\text{\makebox[\widthof{$f(d \mybot \mybot \mybot \mybot )$}][c]{$f(dd \mybot \mybot \mybot )$}}\,{=}\,d }$
        \item\label{item:d:d----}  ${f(d\mybot \mybot \mybot \mybot )\,{=}\,d }$
        \item ${\text{\makebox[\widthof{$f(d \mybot \mybot \mybot \mybot )$}][c]{$f(cdddd)$}}\,{=}\,d }$
        \item ${\text{\makebox[\widthof{$f(d \mybot \mybot \mybot \mybot )$}][c]{$f(cddd\mybot )$}}\,{=}\,d }$
        \item\label{item:d:cdd--} ${\text{\makebox[\widthof{$f(d \mybot \mybot \mybot \mybot )$}][c]{$f(cdd\mybot \mybot )$}}\,{=}\,d }$
        \item\label{item:d:cd---} ${\text{\makebox[\widthof{$f(d \mybot \mybot \mybot \mybot )$}][c]{$f(cd\mybot \mybot \mybot )$}}\,{=}\,d }$
        \item ${\text{\makebox[\widthof{$f(d \mybot \mybot \mybot \mybot )$}][c]{$f(ccddd)$}}\,{=}\,c }$
        \item\label{item:c:ccdd-}  ${\text{\makebox[\widthof{$f(d \mybot \mybot \mybot \mybot )$}][c]{$f(ccdd\mybot )$}}\,{=}\,c }$
        \item $\text{\makebox[\widthof{$f(d \mybot \mybot \mybot \mybot )$}][c]{$f(cccdd)$}}{\,{=}\,c }$
        \item $\text{\makebox[\widthof{$f(d \mybot \mybot \mybot \mybot )$}][c]{$f(ccccd)$}}{\,{=}\,c }$
        \item $\text{\makebox[\widthof{$f(d \mybot \mybot \mybot \mybot )$}][c]{$f(ccccc)$}}{\,{=}\,c }$
        \item\label{item:c:bcd--} $\text{\makebox[\widthof{$f(d \mybot \mybot \mybot \mybot )$}][c]{$f(bcd\mybot \mybot )$}}{\,{=}\,c }$
        \item\label{item:c:bccdd} ${\text{\makebox[\widthof{$f(d \mybot \mybot \mybot \mybot )$}][c]{$f(bccdd)$}}\,{=}\,c }$
        \item ${\text{\makebox[\widthof{$f(d \mybot \mybot \mybot \mybot )$}][c]{$f(bcccd)$}}\,{=}\,c }$
        \item ${\text{\makebox[\widthof{$f(d \mybot \mybot \mybot \mybot )$}][c]{$f(bcccc)$}}\,{=}\,c }$
        \item\label{item:c:bbccd} ${\text{\makebox[\widthof{$f(d \mybot \mybot \mybot \mybot )$}][c]{$f(bbccd)$}}\,{=}\,c }$
        \item ${\text{\makebox[\widthof{$f(d \mybot \mybot \mybot \mybot )$}][c]{$f(bbccc)$}}\,{=}\,c }$
        \item ${\text{\makebox[\widthof{$f(d \mybot \mybot \mybot \mybot )$}][c]{$f(bbbcc)$}}\,{=}\,b }$
        \item ${\text{\makebox[\widthof{$f(d \mybot \mybot \mybot \mybot )$}][c]{$f(bbbbc)$}}\,{=}\,b }$
        \item ${\text{\makebox[\widthof{$f(d \mybot \mybot \mybot \mybot )$}][c]{$f(bbbbb)$}}\,{=}\,b }$
        \item\label{item:c:abcd-} ${\text{\makebox[\widthof{$f(d \mybot \mybot \mybot \mybot )$}][c]{$f(abcd\mybot )$}}\,{=}\,c }$
        \item\label{item:b:abbcc} 
       ${ \text{\makebox[\widthof{$f(d \mybot \mybot \mybot \mybot )$}][c]{$f(abbcc)$}}\,{=}\,b }$
        \item ${\text{\makebox[\widthof{$f(d \mybot \mybot \mybot \mybot )$}][c]{$f(abbbc)$}}\,{=}\,b }$
        \item ${\text{\makebox[\widthof{$f(d \mybot \mybot \mybot \mybot )$}][c]{$f(abbbb)$}}\,{=}\,b }$
        \item\label{item:b:aabbc} ${\text{\makebox[\widthof{$f(d \mybot \mybot \mybot \mybot )$}][c]{$f(aabbc)$}}\,{=}\,b }$
        \item ${\text{\makebox[\widthof{$f(d \mybot \mybot \mybot \mybot )$}][c]{$f(aabbb)$}}\,{=}\,b }$
        \item ${\text{\makebox[\widthof{$f(d \mybot \mybot \mybot \mybot )$}][c]{$f(aaabb)$}}\,{=}\,b}$ 
        \item ${\text{\makebox[\widthof{$f(\mybot \mybot \mybot \mybot \mybot )$}][c]{$f(aaaab)$}}\,{=}\,a }$
        \item  ${\text{\makebox[\widthof{$f(\mybot \mybot \mybot \mybot \mybot )$}][c]{$f(aaaaa)$}}\,{=}\,a }$ 
        \item\label{item:b:-abcd} ${\text{\makebox[\widthof{$f(\mybot \mybot \mybot \mybot \mybot )$}][c]{$f(\mybot  abcd)$}}\,{=}\,b }$
        \item\label{item:b-aabb} ${\text{\makebox[\widthof{$f(\mybot \mybot \mybot \mybot \mybot )$}][c]{$f(\mybot  aabb)$}}\,{=}\,b }$
        \item ${\text{\makebox[\widthof{$f(\mybot \mybot \mybot \mybot \mybot )$}][c]{$f(\mybot  aaab)$}}\,{=}\,a }$
        \item ${\text{\makebox[\widthof{$f(\mybot \mybot \mybot \mybot \mybot )$}][c]{$f(\mybot  aaaa)$}}\,{=}\,a }$
        \item\label{item:b-a:1} ${\text{\makebox[\widthof{$f(\mybot \mybot \mybot \mybot \mybot )$}][c]{$f(\mybot \mybot  abc)$}}\,{=}\,b }$
        \item\label{item:a--aab} ${\text{\makebox[\widthof{$f(\mybot \mybot \mybot \mybot \mybot )$}][c]{$f(\mybot \mybot  aab)$}}\,{=}\,a }$
        \item ${\text{\makebox[\widthof{$f(\mybot \mybot \mybot \mybot \mybot )$}][c]{$f(\mybot \mybot  aaa)$}}\,{=}\,a }$
        \item\label{item:a-2:1b} ${\text{\makebox[\widthof{$f(\mybot \mybot \mybot \mybot \mybot )$}][c]{$f(\mybot \mybot  \mybot ab)$}}\,{=}\,a }$
        \item\label{item:a-2:1a} ${\text{\makebox[\widthof{$f(\mybot \mybot \mybot \mybot \mybot )$}][c]{$f(\mybot \mybot  \mybot aa)$}}\,{=}\,a }$
        \item\label{item:a-2:2} ${\text{\makebox[\widthof{$f(\mybot \mybot \mybot \mybot \mybot )$}][c]{$f(\mybot \mybot  \mybot \mybot a)$}}\,{=}\,a }$
        \item 
         ${f(\mybot \mybot \mybot \mybot \mybot )\,{=}\,\mybot}$ 
    \end{enumerate}
    \end{multicols}}
    \vspace{-3mm}
    \caption{}\label{abcd-rule}
\end{figure}

\begin{figure}[ht]
\centering
       \scalebox{0.95}{
        $\begin{array}{c|cccccccccccccc}
    \orbitc{c} &{\cdots} & \phantom{\, }-{ }6\phantom{\ ~ \, }& \phantom{\, }-{ }5\phantom{\ ~ \, }& \phantom{\, }-{ }4\phantom{\ ~ \, }& \phantom{\, }-{ }3\phantom{\ ~ \, }& \phantom{\, }-{ }2\phantom{\ ~ \, }& \phantom{\, }-{ }1\phantom{\ ~ \, }& \phantom{\ ~ \, }0\phantom{\ ~ \, }& \phantom{\ ~ \, }1\phantom{\ ~ \, }& \phantom{\ ~ \, }2\phantom{\ ~ \, }& \phantom{\ ~ \, }3\phantom{\ ~ \, }& \phantom{\ ~ \, }4\phantom{\ ~ \, } &\phantom{\ ~ \, } 5\phantom{\ ~ \, } &{\cdots} \\
    \hline
        {c} &{}^{\omega} \mybot  &\mybot  &\mybot  &\mybot  & \mybot  & {\textcolor{blue}{a}} & {\textcolor{purple}{b}} & {\textcolor{teal}{c}}&  {\textcolor{violet}{d}}& \mybot  &\mybot  &\mybot  &\mybot &\mybot^{\omega} \\
       {G(c)}  &{}^{\omega} \mybot&\mybot & \mybot  & {\textcolor{blue}{a}}^{\ref{item:a-2:2}} & {\textcolor{blue}{a}}^{\ref{item:a-2:1b}} & {\textcolor{purple}{b}}^{\ref{item:b-a:1}}& {\textcolor{purple}{b}}^{\ref{item:b:-abcd}} & {\textcolor{teal}{c}}^{\ref{item:c:abcd-}}&  {\textcolor{teal}{c}}^{\ref{item:c:bcd--}} & {\textcolor{violet}{d}}^{\ref{item:d:cd---}}&  {\textcolor{violet}{d}}^{\ref{item:d:d----}} &\mybot  &\mybot  &\mybot^{\omega} \\
       {G^2(c)}  &{}^{\omega} \mybot  & {\textcolor{blue}{a}}^{\ref{item:a-2:2}} &{\textcolor{blue}{a}}^{\ref{item:a-2:1a}} & {\textcolor{blue}{a}}^{\ref{item:a--aab}} & {\textcolor{purple}{b}}^{\ref{item:b-aabb}}& {\textcolor{purple}{b}}^{\ref{item:b:aabbc}}& {\textcolor{purple}{b}}^{\ref{item:b:abbcc}} &  {\textcolor{teal}{c}}^{\ref{item:c:bbccd}} &  {\textcolor{teal}{c}}^{\ref{item:c:bccdd}} &  {\textcolor{teal}{c}}^{\ref{item:c:ccdd-}} &  {\textcolor{violet}{d}}^{\ref{item:d:cdd--}} & {\textcolor{violet}{d}}^{\ref{item:d:dd---}} & {\textcolor{violet}{d}}^{\ref{item:d:d----}}& \mybot^{\omega} 
    \end{array}
    $}
    
    \caption{Evolution of $a^nb^n c^n  d^n$ where $\aneg{2}  \,{\rightarrow}\,  a ,~ \aneg{1}  \,{\rightarrow}\,  b ,~ \apos{1}  \,{\rightarrow}\,  c ,~ \apos{2}  \,{\rightarrow}\,  d $}\label{orbitevolution}
\end{figure}

\paragraph{Step-by-Step.} Let $\mathcal{A}\,{\in}\, \caclassshort{1}{2}{\Sigma}$ such that $\mathcal{L}(\mathcal{A}, I)$ for $I \,{=}\, \ppad (\{\aneg{2}  \aneg{1} \apos{1}  {\apos{2}}\})$.  
For clarity, we introduce the following renaming of the actions.
\begin{center}
    $\aneg{2}  \,{\rightarrow}\,  a ,~ \aneg{1}  \,{\rightarrow}\,  b ,~ \apos{1}  \,{\rightarrow}\,  c ,~ \apos{2}  \,{\rightarrow}\,  d $. 
\end{center}
Let's interpret the core idea through the evolution in Fig.\ref{orbitevolution}, and $f$ of $\mathcal{A}\,|_I$ is defined in Fig.\ref{abcd-rule}.
\begin{enumerate}
    \item[$\scriptstyle c_0$] \textbf{Initialization.} Let $c_0 \,{=}\, \pad(abcd) $ where $\langle0\rangle_{c_0} {=} c$.
    \item[{$\scriptstyle c_1$}] \textbf{Time Step 1 ($G(c_0)$).} We want $^{\omega}\bot a^2 b^2 c^2 d^2 \bot^{\omega}$

    \textbf{How this happens:}
    \begin{itemize}[itemindent=0cm,topsep=0.5mm,itemsep=1mm,parsep=0mm]
            \item $b$ ($\aneg{1}$), copies itself \textbf{leftward} over $a$ ($\aneg{2}$), so $a$ is \textbf{overwritten}.
            \item Similarly, $c$ ($\apos{1}$) grows rightwards over $d$ ($\apos{2}$), so $d$ is also overwritten.
            \item Hence, $a$ and $d$ must grow \textbf{two steps}:
            \begin{itemize}[itemindent=0cm,topsep=0.5mm,itemsep=1mm,parsep=0mm]
                \item One to \textbf{replace} the one that got {overwritten}.
                \item One to \textbf{expand} the block by $1$.
            \end{itemize}
    \end{itemize}
    This idea is captured by rules such as:
    \begin{itemize}[itemindent=0cm,topsep=0.5mm,itemsep=1mm,parsep=0mm]
        \item $f(\mybot \mybot \mybot \mybot \mybot) = \mybot$  : quiescent state
        \item \text{\makebox[\widthof{$f(\mybot \mybot \mybot \mybot \mybot)$}][l]{$f(\mybot a\, a \,b \,b)$}}$ = b$ : generation of $b$, override an $a$
        \item \text{\makebox[\widthof{$f(\mybot \mybot \mybot \mybot \mybot)$}][l]{$f(\mybot \mybot \mybot \mybot a )$}}$ = a$ : generation of $a$, farther left
    \end{itemize}
    The rules use radius $2$, which allows a symbol to look up to 2 cells to its left and right and decide how to evolve. This is critical to orchestrate the interactions between overlapping gliders.
\end{enumerate}
\paragraph{Inductive Growth.}
Each symbol is part of a glider pair, as it needs to propagate itself as a glider that pushes outwards:
    \begin{itemize}[itemindent=0cm,topsep=0.5mm,itemsep=1mm,parsep=0mm]
        \item \textbf{$a$'s gliders are:} $\g{a}{\mminus 2}$ and $\g{a}{\mminus 1}$ : 
        must generate a glider that reaches $2$ steps left and another of $1$ step left
        
        \item \textbf{$b$'s:} $\g{b}{\mminus 1}$ and $\g{b}{0}$ : needs to go up to $1$ step left.
        \item \textbf{$c$'s:} $\g{c}{0}$ and $\g{c}{\miniplus 1}$ : goes up to $1$ step right.
        \item \textbf{$d$'s:} $\g{d}{\miniplus 1}$ and $\g{d}{\miniplus 2}$ : Stretch to the right, goes $1$ and $2$ step right.
    \end{itemize}
    
Each glider moves and triggers local updates to emit new copies of itself and adjacent gliders in the correct directions.

\paragraph{Correctness.}
An inductive argument shows that at each time step, the CA does three things:
\begin{enumerate}[itemindent=0cm,topsep=1mm,itemsep=1mm,parsep=0mm]
    \item \emph{Preserving} existing structure
    \item \emph{Generates one more of each symbol}, maintaining the block order
    \item \emph{Recover} overwritten symbols by producing two copies from the \emph{outermost} gliders.
\end{enumerate}
The radius, $r\,{=}\,2$, allow sufficient lookahead to:
\begin{itemize}[itemindent=0cm,topsep=0.5mm,itemsep=1mm,parsep=0mm]
    \item Detect when to emit a glider or copy,
    \item Distinguish between different growing phases,
    \item Ensure that new symbols do not corrupt neighboring blocks.
\end{itemize}
This is confirmed by the detailed rule mapping (Fig.\ref{abcd-rule})

\paragraph{Overview.} This works because of careful design of gliders and their interactions:
\begin{itemize}[itemindent=0cm,topsep=0.5mm,itemsep=1mm,parsep=0mm]
    \item Each glider correspond to one \emph{unit of growth}
    \item Some overwrite others during expansion (e.g., $b$ overwrite $a$), but the overwritten symbol regenerates itself twofold - repairing the loss and expanding.
    \item This recursive glider interaction forms the basis of the desired pattern.
\end{itemize}
\section{Full Proof of \Cref{prop:421- nonk}}\label{app:full-proof-4.2.1} 
\propnonk*
Note that the language $L_r$ requires
the maintenance of multiple matching counters, which is known to be beyond the capabilities of context-free for $r\,{>}\,2$.
The construction we provide here generalizes that of Ex.\ref{exmp:anbncndn}.

\paragraph{Goal.}
Show that this language is \emph{glider-expressible} with respect to $I\,{=}\,C_\mathcal{F}(F)$ for a regular language $F$.


\paragraph{Intuition.} The key idea is to simulate the nested, non-overlapping expansion of symbol segments $\aneg{i}^{n+1}$ and $\apos{i}^{n+1}$, for all $i\in[r]$ which increase at each time step.

Our constructed CA uses gliders that propagate copies of each $\aneg{i}$ and $\apos{i}$ outward the respective direction over time for every $i\,{\in}\,[r]$.
Each $a_i$ is associated with a pair of gliders, that shifts the existing instances and
generates one new copy of its associated symbol per time step.
The velocity of each glider is definable according to its relative value's position.
Where symbols $\apos{i},\aneg{i}$ with higher $|i|$ (which are farther from the center) have higher speeds, ensuring that their gliders do not conflict with those of symbols closer to the center.
This approach scales linearly in time and maintains non-overlapping growth due to symbol priority.

\begin{proof}
Let $\mathcal{A} \,{\in}\, \caclassshort{1}{r}{\Sigma}$, and $I\,{=}\,\ppad(\{ \aneg{r} {\ldots} \aneg{1} \apos{1} {\ldots} \apos{r} \})$. 
For each $i \,{\in}\, [r]$ we define the following glider set: 
$\G_i\,{=}\,\{\g{\apos{i}}{i},\g{\apos{i}}{i-\sign(i)}\}$\footnote{That is, $\G_i \,{=}\, \{ \g{\apos{i}}{i},\g{\apos{i}}{i-1} \}$ and $ \G_{-i} \,{=}\, \{ \g{\apos{-i}}{-i},\allowbreak \g{\apos{-i}}{-i+1} \}$ for $i\,{\in}\,[r]$.}. 
The set of induced gliders is $\G {=}\bigcup_{i\in [r]} \G_i \,\cup\, \G_{-i}$.
These gliders shift the symbols $\apos{i}$ at a rate proportional to their relative location, $i$, where greater $|i|$ have greater speed.
Note that for every $i\in[-r,r]\setminus\{0\}$, the gliders of $\G_i$ satisfy the spreading state behavior (which creates one new copy at every iteration). 
That is, $\apos{1}$ is a right-spreading state (at speed $1$), $\aneg{1}$ is a left-spreading state (at speed $1$), and every other $i\in [2,r]$ is a shift-spreading state of $[i-1,i]$ and every $i\in[-r,-2]$ is a shift-spreading state of $[-i,-i+1]$.

\paragraph{Dominance Relation.}

We define a strict partial order $>$ on $\G$ as follows.
Let $\g{\apos{i}}{x}, \g{\apos{j}}{y}\,{\in}\,\G $, we write $\g{\apos{i}}{x} > \g{\apos{j}}{y}$ iff they have the same velocity direction (i.e., $\sign(i) \,{=}\, \sign(j)$), $|i|\,{<}\,|j|$ and they differ in both position and symbol, i.e., $\apos{i}\,{\neq}\,\apos{j}$ and $x\,{\neq}\, y$. 
Gliders with the same symbol or the same velocity are incomparable under $>$. 
\begin{remark}\normalfont
    The only \emph{maximal} gliders in $\G$ are of value $\aneg{1}, \apos{1}$, i.e., $\G_1$ and $\G_{-1}$, and are therefore $I$-persistent.
    All other gliders are suppressed in overlapping neighborhoods by stronger ones, and are not $I$-persistent. 
    They are, of course, derived from $\mathcal{A}|_I$ and they contribute to the local rule in the neighborhoods where they are not dominated.
\end{remark}
\paragraph{Configuration growth.}
Let $c_n$ denote the configuration at time $n$. 

We claim:
$c_n \,{=}\,^{\omega} \mybot\, \aneg{r}^{n+1} {\ldots} \aneg{1}^{n+1} \apos{1}^{n+1} {\ldots} \apos{r}^{n+1} \,\mybot^{\omega}.$
\paragraph{Base case ($n = 0$).}  
By construction $c_0\in \ppad(\{ \aneg{r} {\ldots} \aneg{1} \apos{1} {\ldots} \apos{r} \})$, thus $c_0 \,{=}\allowbreak^{\omega}\mybot \aneg{r} {\ldots} \aneg{1} \apos{1} {\ldots} \apos{r}\mybot^{\omega}$. 
Note that in the initial symbol layout, for every $i\in [r]$ we place $\aneg{i}$ at position ${-}i$, and $\apos{i}$ at position $i{-}1$.
This ensures that the gliders in $\G_{i}$ and $\G_{-i}$ start at disjoint positions and propagate outward.
\paragraph{Inductive claim.} At step $n\,{\in}\, \mathbb{N}$ we have $c_n \,{=}\,^{\omega} \mybot\, \aneg{r}^{n+1} {\ldots} \aneg{1}^{n+1} \apos{1}^{n+1} {\ldots} \apos{r}^{n+1} \,\mybot^{\omega}$, where $\langle-1,0\rangle_{c_n} = \aneg{1}\apos{1}$.
\paragraph{Inductive step:}
Assume that the claim holds for $ c_n $. Then, for any $i\in [r]$:
\begin{itemize}[itemindent=-0.1cm,topsep=0.5mm,itemsep=1mm,parsep=0mm]
    \item The gliders in $ \G_{-i} $ grow $ \aneg{i}^{n+1} $ into $ \aneg{i}^{n+2} $, leftward.
    \item Similarly, $ \G_i $ grow $ \apos{i}^{n+1} $ into $ \apos{i}^{n+2} $, rightward.
    \item This occurs via shifting and concatenation: for example, for some $i\in[r]$,
    \begin{center}
        $ \langle x, x{+}n \rangle_{c_n} \,{=}\, \aneg{i}^{n+1} \hspace{1mm} \leadsto \hspace{1mm}  \langle x{-}i, x{+}n{-}i{+}1 \rangle_{c_{n+1}} \,{=}\, \aneg{i}^{n+2},$
    \end{center}
    where $x$ is the respective index, with analogous behavior for $\apos{i}$.
\end{itemize}
Hence, at least one glider per symbol is induced at each system step (i.e., when the global function $G$ is applied).
The dominance order $>$ ensures that lower-index gliders are not suppressed.
Incomparable gliders (i.e., same symbol) do not conflict.
Thus, $ c_{n+1} \,{=}\ {}^\omega \mybot\, \aneg{r}^{n+2} {\ldots} \aneg{1}^{n+2} \apos{1}^{n+2} {\ldots} \apos{r}^{n+2} \,\mybot^\omega.$

\vspace{1mm}

\noindent
{It follows that if $f$ is the local rule obtained from applying  $ \mainproc $ on $(\G, >)$ and $\mathcal{A}=(\Sigma,r,f)$ then  $\mathcal{L}(\mathcal{A}, I)=L_r$.}
Therefore, $L_r\in \caclassglider{\Sigma}{r} $. \qed
\end{proof}

\section{Shift \& Concatenate (\Cref{lemma:shift}) --- Glider Mechanism}\label{app:shift-gliders}

\shiftlemm*

\noindent
\emph{Recall.} Defining persistent $\g{\sigma}{s}$ for every $\sigma\,{\in}\,\Sigma{\setminus}\{\mybot\}$ we get the effect that all (non-quiescent) cells are shifted by $s$ in the next iteration (Obs.\ref{obs:persistent} case II). 

\vspace{-1mm}

\begin{proof}
Define persistent gliders $\g{\sigma}{s}$ for each symbol in $w$ to shift the content of $w$ by $s$ in each iteration.
In order to create a new copy of $w$, define a second persistent glider $\g{\sigma}{s'}$ for each symbol in $w$ for $s'\,{=}\,s \,{+}\, \sign(d){\cdot} |w|$ where $\sign(d) \,{=}\, 1$ if $d$ is right and $\sign(d) \,{=}\, {-}1$ otherwise. 
These ensure that the block is replicated on the left (if $d {=} \text{left}$) or right (if $d {=} \text{right}$), without conflicting with the shifted copy.
By definition, we must have a radius $r$ that allows this modification to occur, that is; $r\,{\geq}\, \max(|s|, s+\sign(d){\cdot} k)$.
Thus, the new copy of $w$ would be generated to the left/right of the shifted part (according to $d$).
\qed
 \end{proof}   

\section{Proof of Lemma \ref{lemma-lem2}}\label{app:two-blocks}

\wnwn*

\begin{proof}
            Let $L\,{=}\,\{ w_1^n w_2^n \,|\, n \,{\in}\, \mathbb{N} \}$, $k_1\,{=}\,|w_1|$ and $k_2\,{=}\,|w_2|$.
            We write $w_1\,{=}\,\apos{0} {\ldots} \apos{k_1{-} 1}$ and $w_2\,{=}\,\bpos{0} {\ldots} \bpos{k_2{-} 1}$.
            We construct a CA of radius $r\,{=}\, \lceil \tfrac{k_1{+}k_2}{2}\rceil$, and define a set of gliders $\G$ and a strict partial order $>$ such that $(\G,>)$ induces a rule function $f$ that satisfies $\mathcal{L}(\mathcal{A}, \ppad(w_1w_2)) \,{=}\, L$.
            
\paragraph{Shift Gliders.}
            To preserve the current instance of $w_1^n w_2^n$ under iteration, we define a shift $s\,{=}\,k_1{-}r\,{=}\,\tfrac{k_1 - k_2}{2}$.
            Then for each non-quiescent symbol $\sigma\,{\in}\,\Sigma\,{\setminus}\{\mybot\}$, we define a  persistent glider $\g{\sigma}{s}$.
            These gliders shift each instance of $w_1$ and $w_2$ by $s$ positions, maintaining the existing structure.

\paragraph{New $w_1$ Gliders.}
            For new instances of $w_1$ to the left of the current block, define these gliders: $\g{\apos{i-1}}{{-}r}$ for $i\,{\in}\, [k_1]$. Each of these gliders generate a symbol $\apos{i-1}$ from a currently occupied location at distance $r$, thus generating a new block of $w_1$ to the left of the shifted configuration.
            We define their dominance as follows. Let $k_{1,i}\,{:=}\,k_1\,{-}\,i$.
            Then for every $\g{\apos{i}}{-r}$, define:
                $\g{\apos{i}}{{-}r} \,{>}\, \g{\mybot}{x}$ for all $x\,{\in}\, X_{1,i}$, where:

                \[X_{1,i} = \begin{cases}
                    {[k_{1,i}-r, r]} & \text{If } k_{1,i}\leq r,\\
                    [2r{+}1 - k_{1,i}] & \text{Otherwise}
                \end{cases} \]
 \paragraph{New $w_2$ Gliders.}
            Similarly, to produce a new instance of $w_2$ to the right of our shifted block, we define the gliders $\g{\bpos{j-1}}{r}$ for every $j\,{\in}\,[k_2]$, to generate a new instance of $w_2$ to the right at offset $r$. And we define the dominance order to be:
            $\g{\bpos{j}}{r} \,{>}\, \g{\mybot}{-x}$ for every $x \,{\in} X_{2,j}$, where

             \[X_{2,j} = \begin{cases}
                    [-r,r - k_{2,j}] & \text{If } k_{2,j}\leq r,\\
                    [2r{+}1 - k_{2,j}] & \text{Otherwise}
                \end{cases}\]
            \emph{Correctness argument.}
            We claim that the above glider system induces a CA $\mathcal{A}\,{\in}\, \caclassglider{\Sigma}{r}$ such that:
            \begin{center}
                $G^n({}^{\omega}\mybot w_1 w_2 \mybot^{\omega}) \,{=}\, {}^{\omega}\mybot w_1^{n+1} w_2^{n+1} \mybot^{\omega}.$
            \end{center}
            In each step:
            \begin{itemize}[itemindent=0cm,topsep=0.45mm,itemsep=0.735mm,parsep=0mm]
                \item Shift gliders, $\g{\sigma}{s}$, preserves our blocks by shifting $w_1^n w_2^n$ into new position,
                \item Gliders of $\g{\apos{i}}{-r}$ inject a new copy of $w_1$ to the left of the shifted block,
                \item Gliders of $\g{\bpos{j}}{r}$ inject a new copy of $w_2$ to the right of the shifted block,
                \item The dominance relation ensures that symbol-injecting gliders override any conflicting quiescent gliders in their neighborhood, and
                \item No conflicts occur between the new copies $w_1$'s gliders and $w_2$'s gliders, since they operate in disjoint directions with disjoint offsets.
            \end{itemize}
            Thus, the constructed CA evolves from the initial configuration $\ppad(w_1 w_2)$ to 
            configurations of the form$\,^{\omega} \mybot w_1^{n} w_2^{n} \mybot^{\omega}$, yielding the desired language.  
            \qed
        \end{proof}

\section{Proof of Theorem~\ref{thm:w1ntowmn-glider}}
\label{app:thm:w1ntowmn-glider}

\gliderExp*

\begin{proof}
Let $ w_1, \ldots, w_m \,{\in}\, \Sigma^+ $ be non-empty words, and define $ k_i \,{:=}\, |w_i| $. Let
\begin{center}
    $L \,{=}\, \left\{ w_1^n w_2^n \,{\ldots}\, w_m^n \;\middle|\; n \,{\in}\, \mathbb{N} \right\}, \quad r \,{=}\, \lceil\frac{1}{2} \sum_{i=1}^m k_i\rceil.$
\end{center}
Assume there exists $ \mathcal{A} \,{\in}\, \caclassshort{1}{r}{\Sigma} $ and a regular language $ F \,{\subseteq}\,\Sigma^* $ s.t. $ \mathcal{L}(\mathcal{A}, F) \,{=}\, L $.
We derive a glider system $ (\G, >) $ that induces a CA $\mathcal{A}' \,{\in}\, \caclassglider{\Sigma}{r} $ that generates the same language from the initial configuration $c_0\,{=}\,\ppad (\{w_1 w_2 {\ldots} w_m\})$.

\paragraph{Initial configuration.}
Let $ c_0 \,{=}\, {}^\omega \mybot \, w_1 w_2 \,{\ldots}\, w_m \, \mybot^\omega $. 
This initial configuration will grow into $ w_1^n w_2^n \,{\ldots}\, w_m^n $ by emitting one new copy of each $ w_i $ per time step.
\paragraph{Inductive justification.} 
We define shifts inductively to ensure minimal radius. \\[1mm]
\emph{Base case.} For $ m \,{=}\, 1 $, Lem.\ref{lemma:shift} shows that to generate $ w_1^n $, the minimal radius is of the shift size, which is $ s_1 \,{=}\, \frac{k_1}{2} $ suffices, i.e., for shifting by $s_1$ and concatenate to the left.


Note that $m =2$ is settled by Lem.\ref{lemma-lem2}; we now establish the inductive step.

\paragraph{Inductive step.} Assume the statement holds for $ m{-}1 $ words, shifts $ s_1, \ldots, s_{m-1} $ are set to fit within radius $ r_{m-1} \,{=}\, \lceil\frac{1}{2} \sum_{i=1}^{m-1} k_i \rceil$. That is, every $\sigma$ of every $w_i$ has $\g{\sigma}{s_i}$ and $\g{\sigma}{s_i - k_i}$ to have both the shift and the concatenation of the new copy (to the left in this example).
Adding $ w_m $ increases the radius to $ r_m \,{=}\, \lceil\frac{1}{2} \sum_{i=1}^m k_i\rceil$, thus, $r_m \,{=}\, r_{m-1} + \lceil\frac{1}{2} k_m \rceil$ or $r_m \,{=}\, r_{m-1} + \lceil\frac{1}{2} k_m \rceil -1$; depending on the case, without loss of generalization assume the first case.

Therefore, we adjust previous shifts to $ s_i' \,{=}\, s_i - \lceil\frac{1}{2} k_m \rceil$ so that they remain centered in the updated range. We set $s_m \,{=}\, r_m$ for the final block, pushing it as far to the right as possible. That is, $w_m$ will have for every $\sigma$ in $w_k$ the gliders $\g{\sigma}{s_m}$ (which is $\g{\sigma}{r_m}$) for maintaining existing amount and 
$\g{\sigma}{r_m-k_m}$, i.e., $\g{\sigma}{s_m - k_m}$, for the new copy concentrated to the left. 
These adjusted shifts minimize the required radius.
This way, the initial configuration structure expands by one full copy of each $ w_i $ per round.

\paragraph{Correctness.}
An inductive argument show that at each time step the CA does three things:
\begin{itemize}[itemindent=0cm,topsep=0.45mm,itemsep=0.735mm,parsep=0mm]
\item \emph{Shift} all previously emitted symbols $ w_1^n \,{\ldots}\, w_m^n $ by their respective $s_i$;
\item \emph{Emit} a new copy of each $w_i$ using the $s_i {-} k_i$ gliders;
\item Thus yielding:
$G^n(c_0) \,{=}\, {}^\omega \mybot \, w_1^{n+1} w_2^{n+1} \,{\ldots}\, w_m^{n+1} \, \mybot^\omega.$
This matches the target language $L$, and the CA operates within radius $r$ by construction.
\end{itemize}
%
Thus, the glider system defines a CA $ \mathcal{A}' \,{\in}\, \caclassglider{\Sigma}{r} $ that accepts initial configuration $ c_0$ and produces the language $L$. Hence, $L \,{\in}\, \caclassglider{\Sigma}{r},$ as required.
\qed
\end{proof}

\section{Proof of \Cref{Thm:glider-expr-pwr}}\label{app:block-repetition}

\corGliderExp*
\begin{proof}
Let $ k_i \,{=}\, |w_i| $. 
Then $r \,{:=}\, \lceil\frac{1}{2} \sum_{i=1}^m a_i \cdot k_i.\rceil$.
Assume that $ L \,{\in}\, \caclassshort{1}{r}{\Sigma} $.

\paragraph{Reduction to composed blocks.}
Let $ w_i' \,{=}\, (w_i)^{a_i}$. 
Then $L$ can be rewritten as:
\[
L \,{=}\, \left\{ (w_1')^n w_1^{b_1} (w_2')^n w_2^{b_2} \,{\ldots}\, (w_m')^n w_m^{b_m} \;\middle|\; n \,{\in}\, \mathbb{N} \right\}.
\]

\paragraph{Construction.}
We construct a glider-based CA of radius $ r $ that generates this language as follows:

\begin{itemize}[itemindent=-0.1cm,topsep=0.5mm,itemsep=1mm,parsep=0mm]
    \item Apply Thm.\ref{thm:w1ntowmn-glider} to the composed blocks $ w_1', \ldots, w_m' $. That theorem guarantees a construction where, from an initial  containing $ w_1' w_2' \,{\ldots}\, w_m' $, the configuration at step $ n $ is:
  \[
  (w_1')^{n+1} (w_2')^{n+1} \,{\ldots}\, (w_m')^{n+1}.
  \]
  \item To account for the $ b_i $ static copies of each original $ w_i $, we pre-load $ b_i $ copies of $ w_i $ into the initial configuration. These are preserved across iterations via the shift gliders but not extended, i.e., they are shifting by the respective $s_i$ of the $w_i$, but without the new copy glider.
  \item Let the initial configuration be: $c_0 \,{=}\, {}^\omega \mybot \, w_1^{a_1 + b_1} w_2^{a_2 + b_2} \,{\ldots}\, w_m^{a_m + b_m} \, \mybot^\omega$.
  \item  The construction uses the same dominance relation as in Thm.\ref{thm:w1ntowmn-glider}, where gliders responsible for symbol injection dominate quiescent alternatives.
\end{itemize}

\paragraph{Correctness.} 
Since $ L \,{\in}\, \caclassshort{1}{r}{\Sigma} $, there can be no conflict among gliders for non-quiescent symbols.
An inductive argument shows that at each time step:
\begin{itemize}[itemindent=-0.1cm,topsep=0.5mm,itemsep=1mm,parsep=0mm]
\item A single copy of each $ w_i' $ is appended via injection gliders.
\item  The $b_i$ static parts are shifted forward using standard shift gliders.
\end{itemize}
Thus, at time $n$, the configuration $c_n$ is:
\[
G^n(c_0) \,{=}\, {}^\omega \mybot \, w_1^{e_1(n+1)} w_2^{e_2(n+1)} \,{\ldots}\, w_m^{e_m(n+1)} \, \mybot^\omega.
\]
Hence, $ L \,{\in}\, \caclassglider{\Sigma}{r} $, completing the proof.
\qed
\end{proof}

\section{Example: $L\,{=}\,\{(aba)^{2n}c^{n+1} \mid n \,{\in}\, \mathbb{N}\}$}
\label{full-complex}
We consider a concrete example illustrating Thm.\ref{Thm:glider-expr-pwr} for a language with non-uniform exponents and variable word lengths.

\smallskip
\noindent
Consider the language:
$L\,{=}\,\{(aba)^{2n}c^{n+1} \mid n \,{\in}\, \mathbb{N}\}$, where $w_1\,{=}\, aba$, $w_2\,{=}\,c$. That is, $e_1(n)\,{=}\,2n$ and $e_2(n)\,{=}\,n+1$.
For each $n$, a word in $L$ has length:
$$|w_1|\cdot e_1(n) + |w_2|\cdot e_2(n)\,{=}\, 3\cdot (2n) + 1\cdot (n+1)\,{=}\, 7n+1$$

\noindent
Hence, the difference bound of $L$ is $7$, and any CA that implements $L$ must satisfy $r\,{\geq}\,\lceil{\frac{7}{2}}\rceil\,{=}\,4$.

\smallskip
\noindent
We provide a CA $\mathcal{A}\,{\in}\,\caclassshort{1}{4}{\{a,b,c,\bot\}}$ for which:
$\mathcal{L}(\mathcal{A},I)\,{=}\,L$ where $I$ is obtained by initiating the powers
with $1$. That is, $I\,{=}\,\ppad(\{(aba)^2 c^2\}){=}\,\ppad(\{abaabacc\})$.

\smallskip
We refers to $w_1' \,{=}\, (aba)^2 \,{=}\, abaaba$
and $w_2' \,{=}\, c$ as the \emph{composed blocks}.
By \Cref{Thm:glider-expr-pwr}, the language $L'\,{=}\, \{{w_1'}^n\cdot {w_2'}^{n+1} \}\,{=}\,\{{abaaba}^n\cdot {c}^{n+1} \}$ is glider-expressible at radius $r\,{=}\, \lceil\tfrac{1}{2}\cdot(|w_1'| + |w_2'|)\rceil\,{=}\, \lceil3.5\rceil\,{=}\, 4$.
This is indeed the minimal radius for this language to be CA-expressible.

\smallskip
\noindent
The CA construction is as follows:
\begin{itemize}[itemindent=0cm,topsep=0.5mm,itemsep=1mm,parsep=0mm]
    \item The initial configuration is $c_0\,{=}\, {}^{\omega}\mybot abaaba cc \mybot^{\omega}$, encoding $\ppad(w_1'^{1}, w_2'^{1+1})$.
    \item At each step, one copy of $w_1'$ and one new $c$ are added:
    \begin{itemize}[itemindent=0cm,topsep=0.5mm,itemsep=1mm,parsep=0mm]
        \item The $w_1'$ block is generated via a set of gliders that shift it $4$ steps to the left (i.e., $s\,{=}\,-4$) and concatenate it to the right.
        \item The $w_2$ block (the $c$) grows to the right, via gliders that append one new $c$ per step, that is, $\g{c}{+2}$ and $\g{c}{+3}$.
    \end{itemize}
\end{itemize}
This yields configurations of the form:
\begin{center}
    $G^n(c_0) \,{=}\, {}^{\omega} \mybot (abaaba)^{n+1} c^{n+2} \mybot^{\omega}$
\end{center}
which corresponds to the language $L$ with the expected exponents.
This construction satisfies the conditions of Thm.\ref{Thm:glider-expr-pwr}, using radius $r\,{=}\,4$.

\end{document}